\tikzset{
    read/.style={
      draw,circle,fill=white,inner sep=0pt,minimum size=7pt,outer sep=0pt,
      path picture={
         \draw (path picture bounding box.north west) -- (path picture bounding box.south east);
         \draw (path picture bounding box.north east) -- (path picture bounding box.south west);
            }
        },
}
\newcommand{\pre}[1]{{^\bullet} {#1}}
\newcommand{\post}[1]{{#1} {^\bullet}}
\def\Ag{A}
\def\PR{\mathit{PR}}  % local protocol function % was P_i
\def\TR{\mathit{TR}}  %transition function % T_i
\def\TS{\mathit{TS}}
\def\Trext{\mathit{Tr}} %transitions in modules with ext eval
\def\Evt{\mathit{Evt}} %labels in transition systems
\def\Lbl{\mathit{Evt}} %labels in transition systems for data synchronization
\def\Ls{\mathit{L}} %states of LTS
\def\MG{\mathit{MG}} %marking graph
\def\read{\mathsf{read}}
\def\inh{\mathsf{inh}}
\def\flow{\mathsf{flow}}
\def\AMAS{\mathrm{AMAS}}
\newsavebox{\ORCIDlogo}
\savebox{\ORCIDlogo}{%
\setlength{\unitlength}{\dimexpr 1em/256\relax}%
\begin{picture}(256,256)%
  \color[HTML]{A6CE39}\put(128,128){\circle*{256}}%
  \color{white}%
  \put(78.6,199.2){\circle*{20}}%
  \moveto(70.9,176,9)\lineto(86.3,176,9)\lineto(86.3,69.8)\lineto(70.9,69.8)%
  \closepath\fillpath%
  \moveto(108.9,176.9)\lineto(150.5,176.9)%
  \curveto(190.1,176.9)(207.5,148.6)(207.5 ,123.3)%
  \curveto(207.5,95,8)(186,69.7)(150.7,69.7)%
  \lineto(108.9,69.7)%
  \closepath\fillpath%
  \color[HTML]{A6CE39}%
  \moveto(124.3,83.6)\lineto(148.8,83.6)%
  \curveto(183.7,83.6)(191.7,110.1)(191.7,123.3)%
  \curveto(191.7,144.8)(178,163)(148,163)%
  \lineto(124.3,163)%
  \closepath\fillpath%
\end{picture}%
}
\newcommand\orcidicon[1]{\href{https://orcid.org/#1}{\usebox{\ORCIDlogo}}}
\begin{document}
\title{Asynchronous Multi-Agent Systems with Petri nets}
\author{Federica Adobbati\inst{2}\orcidicon{0000-0002-6356-7026} \and
{\L}ukasz Mikulski\inst{1}\orcidicon{0000-0002-6711-557X} 
}
\authorrunning{F. Adobbati and {\L}. Mikulski} 
\institute{
Faculty of Mathematics and Computer Science, 
Nicolaus Copernicus University, Chopina 12/18, Toru{\'n}, Poland\\
\and
National Institute of Oceanography and Applied Geophysics - OGS, Trieste, Italy \\
%\and
%Dipartimento di Informatica, Sistemistica e Comunicazione,
%Università degli Studi di Milano - Bicocca, viale Sarca 336 U14,
%Milano, Italia\\
\email{fadobbati@ogs.it}\\
\email{lukasz.mikulski@mat.umk.pl}\\
}
\maketitle 
\begin{abstract}
Modeling the interaction between components is crucial for many applications and serves as a fundamental step in analyzing and verifying properties in multi-agent systems.

In this paper, we propose a method based on 1-safe Petri nets to model Asynchronous Multi-Agent Systems (AMAS), starting from two semantics defined on AMAS represented as transition systems. Specifically, we focus on two types of synchronization: synchronization on transitions and synchronization on data. For both, we define an operator that composes 1-safe Petri nets and demonstrate the relationships between the composed Petri net and the global transition systems as defined in the literature.

Additionally, we analyze the relationships between the two semantics on Petri nets, proposing two constructions that enable switching between them. These transformations are particularly useful for system analysis, as they allow the selection of the most suitable model based on the property that needs to be verified.

\keywords{multi-agent systems, Petri nets, 1-safe, synchronization, asynchronous composition} 
\end{abstract}
\section{Introduction}
Multi-agent systems (MAS) allow one to model and analyze the interactions between several components.
Their formal study is relevant for several problems, such as, for example, the work on games, where the goal is to understand whether a set of agents is able to enforce certain properties on a global system on which they interact (e.g. \cite{AHK02,lomuscio2017mcmas,belardinelli2023abstraction}).
When modeling MAS, several studies (e.g. \cite{KJMMPPS23,JP10,W09,WH03} 
) consider synchronous systems, that is, they assume the existence of a global clock. At every instant, each agent selects its own action, and when the clock ticks, all the actions are executed together, determining the next global state of the system.
In contrast, in an asynchronous system, agents can perform their actions whenever they become available, without following any predefined order. 
Asynchronous systems are convenient in many applications; however, their analysis presents some challenges.
Although each agent may have a limited number of states, the global system obtained composing the agents together can be exponentially larger, especially in systems with a high level of concurrency. This is known as \emph{state explosion problem}, and several works have been developed to address it.
In \cite{alberdingk2023modular}, the authors develop a modular control plane verification, where they decompose the global system into smaller components and verify properties on them; in \cite{mikulski2022assume} the authors propose an assume-guarantee scheme to verify strategic properties. 

Petri nets are formal models that can explicitly represent concurrency. They were introduced in the 60 by Carl Adam Petri \cite{Petri62}, and their use allows to efficiently represent distributed systems, and to study them with concurrent semantics rather than interleaving.
Several classes of Petri nets have been developed in recent decades, allowing us to model different features of the systems.
In this work, we focus on 1-safe nets with the addition of read arcs \cite{montanari1995}.

There is a strong relationship between Petri nets and transition systems. Given a Petri net, obtaining a transition system describing its behavior is always easy: each state of the transition system is a global state of the Petri net, and states are connected by arcs labeled as the transitions in the Petri net producing the change in the global state. 
The reverse process namely, given a transition system how to construct a Petri net with the same behavior, is known as the synthesis problem, and it is in general not always possible.
In \cite{BBD15} the authors describe how to obtain a Petri net from a transition system for different classes of Petri nets.
Multi-Agent systems can be described as Petri nets, and their composition is generally smaller than in the case of transition systems, due to the explicit representation of concurrency in Petri nets.
The composition of Petri nets can be executed in several different ways, based on the kinds of interactions between components and properties that the composition needs to satisfy. Several works in the literature address the problem of composition: in \cite{bernardinello2023,nesterov2019}, the authors present a composition that preserves soundness between workflow nets; \cite{baldan2001} presents a composition operation between open nets, i.e. Petri nets with a set of places that represent the interface of the net with the external environment; \cite{haddad2013} introduces I/O Petri nets studying the properties of their channels. 
In \cite{reisig2019},
a general framework to model composition is proposed, and,
among other formalisms, its application to Petri nets is shown. 
In \cite{fettke2022} the authors present an application of composition calculus to business process models.

In our work, we propose a framework to study MAS based on Petri nets. We start from two semantics developed on transition systems, and we show how we can derive Petri net agents and their composition starting from each of them.
We then prove that from the Petri net MAS in one of these semantics, we can always obtain an equivalent Petri net MAS in the other. The steps to reach this goal are described in Fig.~\ref{fig:schema}. We start from AMAS and open MAS.
For both semantics, the transition system of each agent can be synthesized into a 1-safe Petri net. In AMAS, agents synchronize on common transitions, and the resulting global model is defined as canonical interleaved interpreted systems (IIS). In \cite{AM23} we proved that composing Petri nets agents through fusion of transitions produces a Petri net with a marking graph isomorphic to the canonical IIS.
When we consider open MAS, the translation into Petri nets require more steps. In an open MAS, the possibility to execute an action is conditioned on the value of some external variables. This concept is not explicitly present in Petri net semantics. For this reason, at agent level, we prove the equivalence between closed module, namely modules in which external dependencies have been explicitly added, and Petri net agents modified in order to explicitly add constraints derived from compositions.
In addition, we prove the equivalence between the global closed MAS obtained through data synchronization and the global Petri net obtained by fusion of sequential components.
Finally, we prove a relation between the two semantics.

The paper is structured as follows. In Sec.~\ref{sec:mas-ts} we formally define AMAS, open MAS, closed modules and their composition.
In Sec.~\ref{sec:pn} we provide basic definitions on 1-safe Petri nets and their synthesis from transition systems.
Sec.~\ref{sec:comp-pn} presents our first contribution: we describe agent composition for the two models, proving the equivalence between the Petri net global models, and the canonical IIS or closed MAS.
In Sec.~\ref{sec:2sem-pn} we discuss and prove the relations between the two semantics; Sec.~\ref{sec:concl} concludes the paper describing future developments of this work. 
\begin{figure}[ht]
    \centering
  \begin{tikzpicture}[node distance=2cm,>=arrow30]
    \node(x1){AMAS}; %$(L,\iota,Evt,\PR,\TR,\mathcal{PV},V)$};
\node(x2)[below of=x1]{open MAS (module)}; %$(X, I, L, \TR, \lambda, \iota)$};
\node(x3)[right of=x2, yshift=1cm]{Tran. sys.};
\node(x4)[right of=x3, xshift=1cm]{1-safe Petri net};
\node(x5)[right of=x4, xshift=1cm, yshift=-1cm]{net with var.};
\node(x6)[below of=x5, yshift=.5cm]{net with int.};
\node(y2)[above of=x1, xshift=1cm]{canonical IIS};
\node(y3)[below of=x2, xshift=1cm, yshift=-2.5cm]{closed MAS};
\node(y4)[right of=y2, xshift=2cm]{global net (dyn. synch.)};
\node(y5)[right of=y3, xshift=2cm]{global net (stat. synch.)};
\node(y6)[below of=x2, xshift=2.0cm, yshift=0cm]{closed module};
\node(y7)[right of=y6, xshift=1cm, yshift=0cm]{net with ext.};
\node(z1)[right of=y2, xshift=-.5cm]{$\equiv$};
\node(z2)[right of=y3, xshift=-.5cm]{$\equiv$};
\node(z3)[right of=y6, xshift=-.5cm]{$\equiv$};
\draw(x1) edge [->] node (t1) [xshift=.5cm, yshift=.2cm] {\tiny{proj.}} (x3);
\draw(x2) edge [->] node (t2) [xshift=.7cm, yshift=-.2cm] { } (x3);%\tiny{add names}} (x3);
\draw(x3) edge [->] node (t3) [yshift=.3cm] {\tiny{synthesis}} (x4);
\draw(x4) edge [->, out=-60, in=170] (x5);
\draw(x5) edge [->,loop right] node (t4) {\tiny{refine}} (x5);
\draw(x6) edge [->,loop right] node (t5) {\tiny{refine}} (x6);
\draw(x1) edge [->] node (t6) [xshift=1.3cm] {\tiny{synch. on transitions}} (y2);
\draw(x2) edge [->] node (t7) [xshift=-1cm] {\tiny{synch. on data}} (y3);
\draw(x4) edge [->] node (t8) [xshift=.8cm, yshift=.2cm] {\tiny{tran. fusion}} (y4);
\draw(x6) edge [->] node (t9) [xshift=1.1cm, yshift=.1cm] {\tiny{place fusion}} (y5);
\draw(x2) edge [->, out=0, in=170] node (t10) [yshift=.2cm, xshift=.1cm] {\tiny{add seq. comp.}} (x5);
\draw(x5) edge [->, out=-120, in=120] node (t11) [xshift=-1.2cm] {\tiny{add ext. places}} (x6);
\draw(x2) edge [->, out=0, in=120] (x6);
\draw(x2) edge [->] node (t12) [xshift=1.0cm, yshift=.1cm] {\tiny{fair randomness}} (y6);
\draw(x6) edge [->] node (t13) [xshift=-0.7cm, yshift=.3cm] {\tiny{fair randomness}} (y7);
\draw(y7) edge [->] node (t14) [xshift=-1.1cm, yshift=.1cm] {\tiny{seq. comp. fusion}} (y5);
\end{tikzpicture}
    \caption{Schema of involved models.}
    \label{fig:schema}
\end{figure}
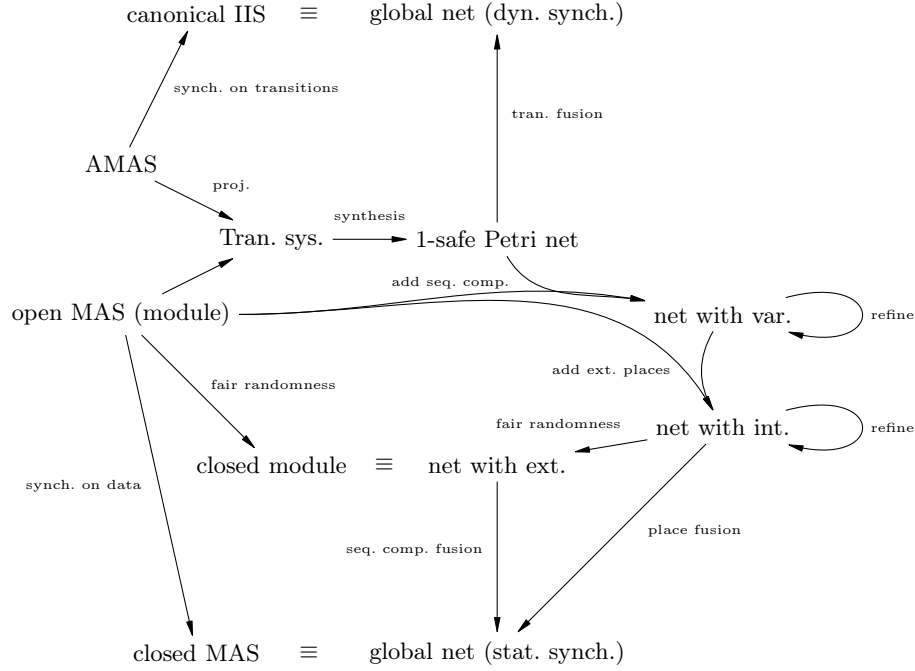

\section{MAS as transition systems: two semantics}
\label{sec:mas-ts}
Transition systems are often used in the formal analysis of multi-agent systems.
In these models, each agent is represented as a transition system, and the
rules to compose them to get the global model may vary based on what 
the model intends to explicitly represent.
In this paper we focus on two different composition rules, referring to two
different semantics, both aiming to describe asynchronous MAS. 
In the first, described for example in \cite{JPSDM20}, 
the synchronization is based
on common transitions; in the second, described in \cite{LSWW13},
the synchronization is based on the read only access to the local states of other agents.
These two semantics and corresponding compositions
are described more in detail in the following subsections.

\begin{example}
Fig.~\ref{fig:tramas} shows an example of multi-agent system.
In this model, the agents on the left and on the right represent
two trains, whereas the agent in the middle is the controller.
Each train has three local states, for each $i\in\{1, 2\}$, $w_i$ is the state in which the train $i$ is waiting to enter
a tunnel, in $t_i$, train $i$ is in the tunnel, and in $a_i$
train $i$ is out of the tunnel.
The controller has three states: $g$ denotes that no train is in
the tunnel (green light for entering), $r_1$ denote the presence of train $1$ in the tunnel
or its permission to enter it (red light for others), and analogously for $r_2$ with
respect to train $2$.
The synchronizations between agents need to guarantee that the 
two trains are never in the tunnel together.
The example can be easily generalized for an arbitrary number 
of trains.
It is discussed both in \cite{JPSDM20} and in \cite{MJK23}, where two
different semantics of synchronization are proposed, leading
to different behaviors of the global system.
\end{example}
\begin{figure}
    \centering
  \begin{tikzpicture}[->,auto,>=arrow30,node distance=8mm,font=\tiny]

\node[circle,draw=black,inner sep=1pt,minimum size=12pt] (A1) {w1};
\node[circle,draw=black,inner sep=1pt,minimum size=12pt,below of=A1, xshift=-8mm] (A2) {a1};
\node[circle,draw=black,inner sep=1pt,minimum size=12pt,below of=A2, xshift=8mm] (A3) {t1};

\node[circle,draw=black,inner sep=1pt,minimum size=12pt,right of=A1, xshift=12mm] (B1) {g};
\node[below of=B1](BT){};
\node[circle,draw=black,inner sep=1pt,minimum size=12pt,below of=BT, xshift=-8mm] (B2) {r1};
\node[circle,draw=black,inner sep=1pt,minimum size=12pt,below of=BT, xshift=8mm] (B3) {r2};

\node[circle,draw=black,inner sep=1pt,minimum size=12pt,right of=B1, xshift=12mm] (C1) {w2};
\node[circle,draw=black,inner sep=1pt,minimum size=12pt,below of=C1, xshift=8mm] (C2) {a2};
\node[circle,draw=black,inner sep=1pt,minimum size=12pt,below of=C2, xshift=-8mm] (C3) {t2};

\path (A1) edge node {n1} (A3)
      (A3) edge node {n2} (A2)
      (A2) edge node {n3} (A1)
      (B1) edge [bend left, out=15, in=165] node [xshift=-3mm, yshift=-2mm] {n1} (B2)
      (B2) edge [bend left, out=15, in=165] node [xshift=3mm, yshift=2mm] {n2} (B1)
      (B1) edge [bend left, out=15, in=165] node [xshift=-3mm, yshift=2mm] {m1} (B3)
      (B3) edge [bend left, out=15, in=165] node [xshift=3mm, yshift=-2mm] {m2} (B1)
      (C1) edge node [swap] {m1} (C3)
      (C3) edge node [swap] {m2} (C2)
      (C2) edge node [swap] {m3} (C1)
      ;

\end{tikzpicture}
    \caption{Train-Gate-Controller benchmark with two trains, adapted from \cite{JPSDM20}.}
    \label{fig:tramas}
\end{figure}
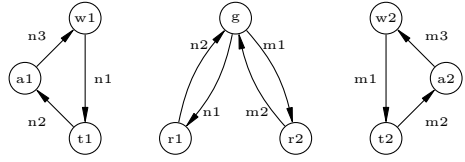
\subsection{Synchronization on transitions}
\label{sec:mas-tr}
In this section we recall an \emph{asynchronous multi-agent system} defined in~\cite{JPSDM20}.
\begin{definition}
An asynchronous multi-agent system ($\AMAS$) consists of n agents $\Ag =\{1,\dots,n \}$. 
Each agent is associated with a tuple 
$\Ag_i=(L_i,\iota_i,\Evt_i,\PR_i,\TR_i,\mathcal{PV}_i,V_i)$, where
\begin{itemize}
    \item $\Ls_i=\{l_i^1,\ldots,l_i^{n_i}\}$ is a set of local states;
    \item $\iota_i\in L_i$ is an initial state;
    \item $\Evt_i=\{\alpha_i^1,\ldots,\alpha_i^{m_i}\}$ is a set of events in which agent $\Ag_i$ can choose to participate;
    \item $\PR_i:L_i\to 2^{\Evt_i}$ is a local protocol, which assigns 
    events to states in which they are available;
    \item $\TR_i:L_i\times \Evt_i\to L_i$ is a local transition function, such that $\TR_i(l_i,\alpha)$ is 
    defined whenever $\alpha\in \PR_i(l_i)$;
    \item $\mathcal{PV}_i$ is a set of local propositions;
    \item $V_i:L_i\to 2^{\mathcal{PV}_i}$ is the valuation of local propositions in local states.
\end{itemize}
\end{definition} 
Since verification of the model of $\AMAS$ is not in the scope of this paper, we are interested only in
transition systems that define the behavior of $\AMAS$, namely the tuples $(\Ls_i,\Evt_i,\TR_i,\iota_i)$.
However, note that local events of different agents may be not disjoint. 
Events which are present in more than one event set $\Evt$ require participation of more than one 
agent, namely those agents synchronize on such events. 
\begin{example}
In Fig.~\ref{fig:tramas}, the events $n1, m1, n2, m2$ are shared
by two agents and require the participation of both of them to occur,
whereas the events $n3$ and $m3$ are local, and depends on a single agent. 
\end{example}

%
%-----------------------------
%
\paragraph{Composition of agents}
We recall from~\cite{JPSDM20} the definition of \emph{canonical interleaved interpreted system},
which is a composition of agents being parts of an asynchronous multi-agent system 
with synchronizations on common events.
Since we are not interested in model checking, we would concentrate only on the behavior
of the multi-agent system (putting apart the propositional variables). 
\begin{definition}
A canonical interleaved interpreted system (canonical IIS) is an $\AMAS$ extended with a tuple $(St,Evt,\TR,\iota)$ where:
\begin{itemize}
    \item $St\subseteq \Ls_1\times\ldots\times \Ls_n$ is a set of global states;
    \item $\Evt = \bigcup_{i\in\{1,\ldots,n\}} Evt_i$ is a set of events;
    \item $\TR:St\times Evt\to St$ is a (partial) global transition function, where
    $\TR((l_1,\ldots,l_n),\alpha)=(l'_1,\ldots,l'_n)$ if \ 
    $\TR_i(l_i,\alpha) = l'_i$ for all $i$ where $\alpha\in Evt_i$ and
    $\TR_i(l_i,\alpha) = l_i$ otherwise;
    \item $\iota=(\iota_1,\ldots,\iota_n)$ is an initial state.
\end{itemize} 
\end{definition}
Given a canonical IIS $I$, some of its states may not be reachable through any execution, 
due to the restrictions given by the synchronizations, 
and therefore also the transitions outgoing from these states can never be executed. 
We will denote with $I_r$ the canonical system where these unreachable states and transitions have been pruned. 

By definition, the number of states in the IIS grows exponentially with the number of agents, 
therefore limiting the number of compositions when studying the properties of the system may help in the analysis. 
\begin{example}
Fig.~\ref{fig:iis} shows the global model of the $\AMAS$ in 
Fig.~\ref{fig:tramas} when the synchronizations happen on transitions.
For clarity, in the figure we omitted the states unreachable 
from the initial one.
\end{example}
\begin{figure}
    \centering
  \begin{tikzpicture}[->,auto,>=arrow30,node distance=1.3cm,font=\tiny]

\node[initial,state,inner sep=-1pt,ellipse,minimum width=43pt] (A1)                {w1,g,w2};
%\node[state,inner sep=-1pt,ellipse,minimum width=43pt]         (B1) [below of=A1]  {w1,r1,w2};
\node[state,inner sep=-1pt,ellipse,minimum width=43pt]         (B2) [below of=A1, xshift=-2cm]  {t1,r1,w2};
%\node[state,inner sep=-1pt,ellipse,minimum width=43pt]         (B3) [below of=B2]  {a1,r1,w2};
\node[state,inner sep=-1pt,ellipse,minimum width=43pt]         (A2) [below of=B2]  {a1,g,w2};
%\node[state,inner sep=-1pt,ellipse,minimum width=43pt]         (C1) [right of=A1,xshift=8mm]  {w1,r2,w2};
\node[state,inner sep=-1pt,ellipse,minimum width=43pt]         (C2) [below of=A1,xshift=2cm]  {w1,r2,t2};
%\node[state,inner sep=-1pt,ellipse,minimum width=43pt]         (C3) [right of=C2,xshift=8mm]  {w1,r2,a2};
\node[state,inner sep=-1pt,ellipse,minimum width=43pt]         (A3) [below of=C2]  {w1,g,a2};
%\node[state,inner sep=-1pt,ellipse,minimum width=43pt]         (D1) [below of=A3]  {w1,r1,a2};
\node[state,inner sep=-1pt,ellipse,minimum width=43pt]         (D2) [below of=A3]  {t1,r1,a2};
%\node[state,inner sep=-1pt,ellipse,minimum width=43pt]         (D3) [below of=D2]  {a1,r1,a2};
\node[state,inner sep=-1pt,ellipse,minimum width=43pt]         (A4) [below of=D2,xshift=-2cm]  {a1,g,a2};
%\node[state,inner sep=-1pt,ellipse,minimum width=43pt]         (E1) [right of=A2,xshift=8mm]  {a1,r2,w2};
\node[state,inner sep=-1pt,ellipse,minimum width=43pt]         (E2) [below of=A2]  {a1,r2,t2};
%\node[state,inner sep=-1pt,ellipse,minimum width=43pt]         (E3) [right of=E2,xshift=8mm]  {a1,r2,a2};

\path (A1) edge node [swap] {n1} (B2)
      (B2) edge node [swap] {n2} (A2)
      (A2) edge node [swap] {n3} (A1)
      (A1) edge node {m1} (C2)
      (C2) edge node {m2} (A3)
      (A3) edge node {m3} (A1)
      (A3) edge node {n1} (D2)
      (D2) edge node {n2} (A4)
      (A4) edge node {n3} (A3)
      (A2) edge node [swap] {m1} (E2)
      (E2) edge node [swap] {m2} (A4)
      (A4) edge node [swap] {m3} (A2)
      (D2) edge node [swap,xshift=0.6cm,yshift=-0.4cm]{n3} (B2)
      (E2) edge node [xshift=-0.4cm,yshift=-0.35cm]{m3} (C2)
      ;
\end{tikzpicture}
    \caption{Global model of the $\AMAS$ in Fig.~\ref{fig:tramas} with synchronizations on transitions}
    \label{fig:iis}
\end{figure}
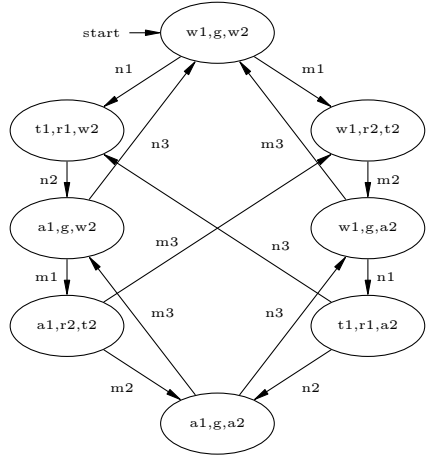
\subsection{Synchronization on data}
\label{sec:mas-pl}
In \cite{LSWW13}, the authors propose another approach to model concurrent 
systems composed by several agents interacting with each other.
In their model, each agent is represented by a \emph{reactive module}.
We add the labeling function with codomain $\Lbl$ to this model and assign the names to the transitions.
\begin{definition}
\label{d:module}
A (labeled) module of an agent $j$ is a tuple $M_j = (X_j, I_j, \Ls_j, \Trext_j, \lambda_j, \iota_j, \ell_j, \Lbl)$, where:
\begin{itemize}
    \item $X_j$ is a finite set of variables controlled by agent $j$; 
    \item $I_j$ is a finite set of variables from which agent $j$ directly depends on. It must hold $I_j \cap X_j = \emptyset$.
    \item $\Ls_j$ is the set of local states;
    \item $\lambda_j: \Ls_j \rightarrow D^{X_j}$ is a function labeling each state $q$ with a valuation $\lambda_j(q): X \rightarrow D$ (with $D$ domain for both internal and external variables).
    \item $\Trext_j: \Ls_j \times D^{I_j} \rightarrow \Ls_j$ is the transition function;
    \item $\iota_j$ is the initial state,
    \item $\ell_j: \Ls_j \times D^{I_j} \times \Ls_j \rightarrow 2^{\Lbl_j}$ is a labeling function such that if $\ell_j(p, v, q) = \ell_j(p', v', q')$, then $\lambda_j(p) = \lambda_j(p')$, $\lambda_j(q) = \lambda_j(q')$, and $\Trext_j(p, v') = q, \Trext_j(p', v) = q$. 
\end{itemize}
The singleton labels of the transitions will be identified with their only elements.
\end{definition}

To store valuations of variables, we use functions (sets/multisets) instead of tuples for mathematical elegance and formula clarity. Labels are necessary for Petri net synthesis and are represented as sets (specifically singletons) for technical reasons, ensuring more elegant definitions of fusions without conflicts or hierarchical component structures. Transition names are typically chosen based on differences in internal variable valuations between input and output states, while label splitting can be performed based on the valuation of external variables.

\begin{example}
\label{ex:tgs-2sem1opt}
The Train-Gate-Controller example can be adapted to this new
semantics. In this case, the agents are 
isomorphic to those in Fig.~\ref{fig:tramas}, but all the 
actions are local, and their occurrence may be constrained to
the value in the local states of other agents.
In the example, we define the sets of variables for the train $1$ on the left as 
$X_1 = \{w1, a1, t1\}$, $I_1 = \{r1\}$.
Since each variable can take two different values, there are eight possible valuations.
However, only three of them are in the image of labeling function $\lambda_1$.
Analogously for train $2$, $X_2 = \{w2, a2, t2\}$, $I_2 = \{r2\}$.
Finally, for the controller $c$, $X_c = \{g, r1, r2\}$, and $I_c =\{w1, w2, a1, a2\}$. 

In particular, train $1$ can execute the
actions $n1$ and $n2$ only if the controller is in the state $r1$.
Formally, transition $n1$ is denoted by $(\{w1=1,t1=0,a1=0\},\{r1=1\},\{w1=0,t1=1,a1=0\})$.
The controller can execute $n1$ only if train $1$ is in the state $w1$,
and $n2$ only if train $1$ is in the state $a1$.
Symmetrically for train $2$.
However, note that controller have four input variables.
Two of them are controlled by train $1$, the other two by train $2$.
This means that, formally, we have four instances of transition $n1$:
\begin{itemize}
    \item 
$(\{g:1,r1:0,r2:0\},\{w1:1,a1:0,w2:0,a2:0\},\{g:0,r1:1,r2:0\})$,
    \item 
$(\{g:1,r1:0,r2:0\},\{w1:1,a1:0,w2:1,a2:0\},\{g:0,r1:1,r2:0\})$,
    \item 
$(\{g:1,r1:0,r2:0\},\{w1:1,a1:0,w2:0,a2:1\},\{g:0,r1:1,r2:0\})$,
    \item 
$(\{g:1,r1:0,r2:0\},\{w1:1,a1:0,w2:1,a2:1\},\{g:0,r1:1,r2:0\})$.
\end{itemize}
By noting that there are no state of train $2$ which is compatible with the last version,
we can erase it.

In some cases, the constraints for the actions to occur may become redundant. Consider for example the action $n2$ performed by train 1: we required that the controller must be in the state $r1$ to allow its occurrence, but this would happen even without imposing it explicitly, since there is no other evaluation of the system in which $n2$ would be allowed (see Fig. \ref{fig:datasync})
\end{example}
\begin{example}
\label{ex:tgs-2sem2opt}
The agents in the system described in Ex. \ref{ex:tgs-2sem1opt} can also be designed with a unique internal variable,
assuming three possible values (following~\cite{MJK23}).
We can deduce this from example \ref{ex:tgs-2sem1opt}, by observing that in the context with three binary variables,
there can never be more than one variable inside the same agent with value $1$;
then, when we can equivalently assume the presence of a single variable taking as
values the three binary variables, and assigning the value of the variable in a state to the variable with value $1$ in the binary case.
With this design, to describe train $i$ we assign $X_1 =\{si\}$, $i\in \{1,2\}$ where
$si$ can assume values $\{wi, ai, ti\}$,
and $I_1 = \{sc\}$, where $sc$ assume values
$\{g, r1, r2\}$; the controller is defined
by $X_c = \{sc\}$, and $I_c = \{s1, s2\}$.
Also in this case, the transition systems of
the modules are isomorphic to those in Fig. \ref{fig:tramas}. 
A main difference between the two proposed designs is that in this second case, the agents can access more information than in 
the option presented in Ex. \ref{ex:tgs-2sem1opt}. For example, in this last setting, a trains can access 
the value of single state of the controller, being able to observe whether the entrance 
to the tunnel has been granted to the other train. This does not happen in the context described in Ex. \ref{ex:tgs-2sem1opt},
where $I_i = \{r_i\}$, therefore the only information given to train $i$ from the controller is whether the access was granted for itself, and not for the other train.
Although having more privacy for the agents
would be preferable in some circumstances,
increasing the number of variables could generate larger systems after synchronization,
as it will be discussed later in the paper,
therefore the choice must be done depending on the specific needs of the application.
Intermediate options (e.g. some agents with binary variables and others with ternary) are also possible to mitigate pros and cons of the two approaches.
\end{example}
We can expand an open system (module) to closed one by taking into account all the possible valuations. 
However, there are two important issues to be fixed.
We need to provide a reasonable method of changing the valuation of external variable and decide which valuation is the initial one.

To address the first issue, we can refer to assume-guarantee reasoning and propose the most liberal approach, allowing for arbitrary changes of a single variable at time.
It is much harder to guess or draw the initial valuation, hence we leave it as a parameter of universal closing.

\begin{definition}
\label{d:cl_module}
A \emph{universal closure} of a module $M = (X, I, \Ls, \Trext, \lambda, \iota, \ell)$ with a valuation $v_{init}\in D^I$ is a system
$M' = (X\cup I, \emptyset, \Ls\times D^I, \Trext', \lambda', \iota', \ell')$, where
\begin{itemize}
    \item $\Trext'=
    \{((x,v),(y,v)) \;|\; \Trext(x,v)=y\}
    \cup
    \{((x,v),(x,w)) \;|\; v(i)\neq w(i) \wedge \forall_{j\neq i} v(j)=w(j)\}
    $;
    \item $\lambda'(x,v)(q) = \lambda(x)(q)$ for $q\in X$
    and $\lambda'(x,v)(q) = v(q)$ otherwise;
    \item $\iota'(q) = (\iota(q),v_{init})$;
    \item $\ell'(x,v,y) = \ell(x,v,y)$ for $y = \Trext(x,v)$ 
    and $\ell'(x,v,y) = \emptyset$ otherwise.
\end{itemize}
\end{definition}

The images of universal closure are called \emph{closed modules}.

\paragraph{Composition of agents}

Having two modules $M_1$ and $M_2$, we can utilize internal variables of one of them which are external variables of the other to aggregate them in a natural way and obtain a larger system. 
We follow the asymmetric part of the composition presented in~\cite{MJK23}.

For this reason we need to recall the notion of compatible valuations.
Let $Y,Z\subseteq X$ and $\rho_1\in D^Y$ while $\rho_2\in D^Z$.
We say that $\rho_1$ is compatible with $\rho_2$ (denoted by $\rho_1 \sim \rho_2$)
if for any $x\in Y\cap Z$ we have $\rho_1(x)=\rho_2(x)$.
We can compute the union of $\rho_1$ with $\rho_2$ which is compatible with $\rho_1$ by setting
$(\rho_1 \cup \rho_2)(x) = \rho_1(x)$ for $x\in Y$ and
$(\rho_1 \cup \rho_2)(x) = \rho_2(x)$ for $x\in Z$.

\begin{definition}
Let $M_1 = (X_1, I_1, \Ls_1, \Trext_1, \lambda_1, \iota_1, \ell_1)$ and $M_2 = (X_2, I_2, \Ls_2, \Trext_2 \lambda_2, \iota_2, \ell_2)$ be two modules with $X_1\cap X_2=\emptyset$ and $\Lbl_1\cap\Lbl_2 = \emptyset$.
We define $M_1 | M_2 = (X=X_1\uplus X_2,I=(I_1\cup I_2)\setminus X,\Ls=\Ls_1\times \Ls_2,\Trext,\lambda,\iota=(\iota_1,\iota_2),\ell)$, where
\begin{itemize}
\item $\lambda:L\to D^X$, $\lambda(q_1,q_2)=\lambda_1(q_1)\cup\lambda_2(q_2)$,
\item $\Trext$ is the minimal transition relation derived by the set of rules presented below:
\[
\bf{ASYN_L}\;\;\;
\begin{array}{c}
q_1\xrightarrow[]{\alpha_1}_{\Trext_1}{q'_1\;\;\;\;q_2\xrightarrow[]{\alpha_2}_{\Trext_2}}{q'_2}\\
\alpha_1\sim\alpha_2 \;\;\;\; \lambda_1(q_1)\sim\alpha_2 \;\;\;\; \lambda_2(q_2)\sim\alpha_1\\
\hline
(q_1,q_2)\xrightarrow[]{\alpha}_\Trext(q'_1,q_2)
\end{array}
\]
\[
\bf{ASYN_R}\;\;\;
\begin{array}{c}
q_1\xrightarrow[]{\alpha_1}_{\Trext_1}{q'_1\;\;\;\;q_2\xrightarrow[]{\alpha_2}_{\Trext_2}}{q'_2}\\
\alpha_1\sim\alpha_2 \;\;\;\; \lambda_1(q_1)\sim\alpha_2 \;\;\;\; \lambda_2(q_2)\sim\alpha_1\\
\hline
(q_1,q_2)\xrightarrow[]{\alpha}_\Trext(q_1,q'_2)
\end{array}
\]
where $\alpha\in D^I$ and $\alpha(x) = \alpha_1(x)\cup\alpha_2(x)$.
\item $\ell((q_1,q_2),(\alpha_1\cup\alpha_2)\setminus X,(q'_1,q_2))) = \ell(q_1,\alpha_1,q'_1)$,
and $\ell((q_1,q_2),(\alpha_1\cup\alpha_2)\setminus X,(q_1,q'_2))) = \ell(q_2,\alpha_2,q'_2)$
\end{itemize}
\end{definition}

Referring to~\cite{MJK23}, we can compose in this way more than two modules (and such an operation is associative).
Note that the set of external variables of a composition might reduce to the empty set and we call the compositions of modules (or single modules) with non-empty sets of external variables open (multi-agent) systems, while those with empty set of external variables - closed (multi-agent) systems.

\begin{remark}
    The composition of closed modules is a closed module.
\end{remark}

Similarly to the observation we made in Sec. \ref{sec:mas-tr}, the closed multi-agent system obtained through data synchronization may have unreachable parts.

\begin{example}
The global model based on these semantics 
is shown in Fig.~\ref{fig:datasync}. 
Note that this is valid for both possible choices of presented in Examples~\ref{ex:tgs-2sem1opt} and~\ref{ex:tgs-2sem2opt}. 
The labels used as names of the states are the names of the only variables with value $1$ in the first case or the values of the only variable in the second case.
Comparing it with the model in Fig.~\ref{fig:datasync}
one can see that the behaviors allowed in the two models differs,
although both of them satisfy the requirement of having only
one train in the tunnel at any moment. For example, in the model
in Fig.~\ref{fig:datasync}, once train $1$ has obtained the 
permission to enter, if it is fast enough, it can enter infinitely
often in the tunnel without having this permission removed.
Hence, in order to guarantee that both trains can eventually
enter the tunnel if they enter their waiting state, the
cooperation of the trains is required.
In Fig.~\ref{fig:iis}, this must be guaranteed by the controller alone.
\end{example}
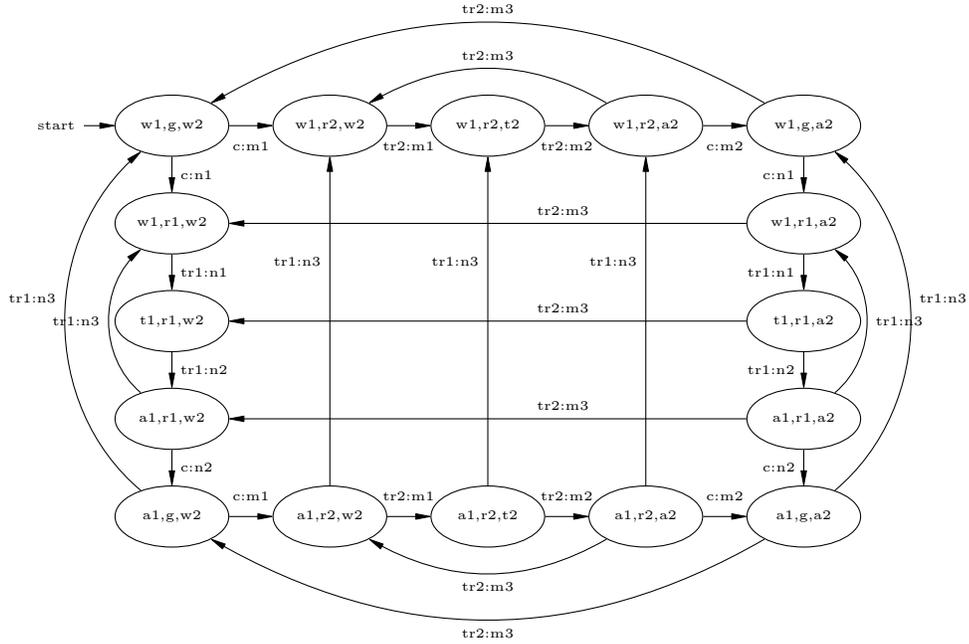
\begin{figure}[ht]
    \centering
   \begin{tikzpicture}[->,auto,>=arrow30,node distance=1.3cm,font=\tiny]

\node[initial,state,inner sep=-1pt,ellipse,minimum width=43pt] (A1)                {w1,g,w2};
\node[state,inner sep=-1pt,ellipse,minimum width=43pt]         (B1) [below of=A1]  {w1,r1,w2};
\node[state,inner sep=-1pt,ellipse,minimum width=43pt]         (B2) [below of=B1]  {t1,r1,w2};
\node[state,inner sep=-1pt,ellipse,minimum width=43pt]         (B3) [below of=B2]  {a1,r1,w2};
\node[state,inner sep=-1pt,ellipse,minimum width=43pt]         (A2) [below of=B3]  {a1,g,w2};
\node[state,inner sep=-1pt,ellipse,minimum width=43pt]         (C1) [right of=A1,xshift=8mm]  {w1,r2,w2};
\node[state,inner sep=-1pt,ellipse,minimum width=43pt]         (C2) [right of=C1,xshift=8mm]  {w1,r2,t2};
\node[state,inner sep=-1pt,ellipse,minimum width=43pt]         (C3) [right of=C2,xshift=8mm]  {w1,r2,a2};
\node[state,inner sep=-1pt,ellipse,minimum width=43pt]         (A3) [right of=C3,xshift=8mm]  {w1,g,a2};
\node[state,inner sep=-1pt,ellipse,minimum width=43pt]         (D1) [below of=A3]  {w1,r1,a2};
\node[state,inner sep=-1pt,ellipse,minimum width=43pt]         (D2) [below of=D1]  {t1,r1,a2};
\node[state,inner sep=-1pt,ellipse,minimum width=43pt]         (D3) [below of=D2]  {a1,r1,a2};
\node[state,inner sep=-1pt,ellipse,minimum width=43pt]         (A4) [below of=D3]  {a1,g,a2};
\node[state,inner sep=-1pt,ellipse,minimum width=43pt]         (E1) [right of=A2,xshift=8mm]  {a1,r2,w2};
\node[state,inner sep=-1pt,ellipse,minimum width=43pt]         (E2) [right of=E1,xshift=8mm]  {a1,r2,t2};
\node[state,inner sep=-1pt,ellipse,minimum width=43pt]         (E3) [right of=E2,xshift=8mm]  {a1,r2,a2};

\path (A1) edge node {c:n1} (B1)
      (B1) edge node {tr1:n1} (B2)
      (B2) edge node {tr1:n2} (B3)
      (B3) edge node {c:n2} (A2)
      (A2) edge [bend left,out=50, in=130] node [yshift=3mm] {tr1:n3} (A1)
      (B3) edge [bend left,out=50, in=130] node {tr1:n3} (B1)
      (A1) edge node [swap,yshift=-1mm] {c:m1} (C1)
      (C1) edge node [swap,yshift=-1mm] {tr2:m1} (C2)
      (C2) edge node [swap,yshift=-1mm] {tr2:m2} (C3)
      (C3) edge node [swap,yshift=-1mm] {c:m2} (A3)
      (A3) edge [bend right] node [swap] {tr2:m3} (A1)
      (C3) edge [bend right] node [swap] {tr2:m3} (C1)
      (A3) edge node [swap] {c:n1} (D1)
      (D1) edge node [swap] {tr1:n1} (D2)
      (D2) edge node [swap] {tr1:n2} (D3)
      (D3) edge node [swap] {c:n2} (A4)
      (A4) edge [bend right,out=-50, in=-130] node [swap,yshift=3mm] {tr1:n3} (A3)
      (D3) edge [bend right,out=-50, in=-130] node [swap] {tr1:n3} (D1)
      (A2) edge node [yshift=1mm] {c:m1} (E1)
      (E1) edge node [yshift=1mm] {tr2:m1} (E2)
      (E2) edge node [yshift=1mm] {tr2:m2} (E3)
      (E3) edge node [yshift=1mm] {c:m2} (A4)
      (A4) edge [bend left] node {tr2:m3} (A2)
      (E3) edge [bend left] node {tr2:m3} (E1)
      (E1) edge node [yshift=8mm] {tr1:n3} (C1)
      (E2) edge node [yshift=8mm] {tr1:n3} (C2)
      (E3) edge node [yshift=8mm] {tr1:n3} (C3)
      (D1) edge node [swap,xshift=1cm] {tr2:m3} (B1)
      (D2) edge node [swap,xshift=1cm] {tr2:m3} (B2)
      (D3) edge node [swap,xshift=1cm] {tr2:m3} (B3)
%      (A4) edge [color=red] (A1)
%      (E1) edge [color=red] (C2)
%      (E2) edge [color=red] (C3)
%      (E3) edge [color=red] (C1)
%      (D1) edge [color=red] (B2)
%      (D2) edge [color=red] (B3)
%      (D3) edge [color=red] (B1)
%      (E3) edge [color=red, bend right] (A2)
%      (D3) edge [color=red, bend left] (A3)
%      (A2) edge [color=blue, bend right] (C1)
%      (E3) edge [color=blue, bend left] (A3)
%      (A3) edge [color=blue, bend left] (B1)
%      (D3) edge [color=blue, bend right] (A2)
%      (E3) edge [color=green] (A1)
%      (D3) edge [color=green] (A1)
%      (B3) edge [color=blue, bend right] (A1)
%      (C3) edge [color=blue, bend left] (A1)
      ;

\end{tikzpicture}
    \vspace*{-1cm}
    \caption{Global model of the TGC MAS with two trains and 
    synchronizations on data. Since the transitions are local,
    but some have the same label, the owner is specified before 
    their label.}
    \label{fig:datasync}
\end{figure}

\begin{remark}
    Note that Fig.3 is in fact very similar to Fig.2. If in the case of the synchronization on data we force to execute $trx:(m/n)y$ just after $c:(m/n)y$ (for $y\in\{1,2\}$) treating such pairs as synchronized actions, and allow $trx:(m/n)3$ in all reachable (by forced synchronized actions) states we get a transition system for global model isomorphic with the one obtained for the synchronization on transitions. 
\end{remark}
     
\section{1-safe Petri nets with read and inhibitor arcs}
\label{sec:pn}
We can model both semantics defined in Sec. \ref{sec:mas-ts} in the framework of $1$-safe Petri nets.
In this section we provide the formal definitions of the Petri nets that we use for the modeling, we recall region theory for the synthesis of transition systems (Sec. \ref{sec:synt}) and we show how region theory can be used to obtain $1$-safe Petri nets for the two models defined in the previous section (Sec. \ref{sec:synt-mas}).

Petri nets were introduced by Carl Adam Petri in his PhD thesis \cite{Petri62} as a 
formal graphical model to represent and analyze concurrent systems. 
In this section we provide some basic definitions that will be useful in the rest of the 
paper. For an extensive overview about Petri nets and their applications we refer to \cite{murata89,peterson77}. 

A \emph{plain} net is characterized by a set of \emph{places} or \emph{conditions} $P$, represented as circles, by a set of \emph{transitions} $T$, represented as squares, 
and by a  \emph{flow relation} between them 
$\flow\subseteq (P\times T) \cup (T \times P)$, represented with arcs. 
Although sharing the same name,
transitions in Petri nets should not be confused with transitions in transition systems.

For each element $x\in P\cup T$, its preset is $^\bullet x = \{y\in P\cup T : (y, x)\in \flow\}$, and its postset is $x^\bullet = \{y\in P\cup T : (x, y)\in F\}$. 
For each transition $t\in T$, we assume that its preset and its postset are non-empty, 
i.e. $^\bullet t \neq \emptyset$ and $t^\bullet \neq \emptyset$. 
The elements in $^\bullet t$ are also called \emph{preconditions} of $t$, and the elements in $t^\bullet$ 
are also called \emph{postconditions}.

A net system is a quadruple $\Sigma = (P, T, F, m_0)$, where $P, T, F$ are the elements of the net, 
and $m_0: P \rightarrow \mathbb{N}$ is the \emph{initial marking}. 
A transition $t\in T$ is enabled in a marking $m$ if for each $p\in {^\bullet t}$, $m(p) \geq 0$. 
If a transition is enabled, it can \emph{occur} or \emph{fire}, and its occurrence generates a 
new marking $m'$ defined as follows. 
\[
    m'(p) =
  \begin{cases}
    m(p)-1 & \text{for all \(p \in {^\bullet t} \setminus t^\bullet\)}\\
    m(p)+1 & \text{for all \(p \in t^\bullet \setminus {^\bullet t}\)}\\
    m(p)   & \text{in all other cases.}
  \end{cases}
\]
In symbols, $m[t\rangle$ denotes that $t$ is enabled in $m$, while $m[t\rangle m'$ denotes 
that $m'$ is the marking produced from the occurrence of $t$ in $m$. 
A marking $m$ is reachable in a net system ${\Sigma = (P, T, F, m_0)}$ if 
there is a sequence of transitions (called a \emph{firing sequence}) 
$t_1...t_n$ such that $m_0[t_1\rangle m_1...m_{n-1}[t_n\rangle m$. 
The set of all the reachable markings is denoted with $[m_0\rangle$. 
A transition $t\in T$ is 1-live if there is a marking $m\in [m_0\rangle$ such that $m[t\rangle$. 

Let $m$ be a reachable marking, and $t_1, t_2\in T$ be two transitions enabled in $m$: 
$t_1$ and $t_2$ are in \emph{conflict} in $m$ if $^\bullet t_1 \cap {^\bullet t_2} \neq \emptyset$; 
$t_1$ and $t_2$ are concurrent in $m$ if $^\bullet t_1 \cap {^\bullet t_2} = \emptyset$ and $t_1^\bullet \cap t_2 ^\bullet = \emptyset$. 

In this paper we work with the class of 1-safe net systems. 
A net system is \emph{1-safe} if, for each $m\in [m_0\rangle$ and for each $p\in P$, $m(p) \leq 1$. 
In a 1-safe system, each marking can (and will) be considered as a set of places, and each place can 
be interpreted as a proposition, that is true if the place belongs to the marking, and false otherwise. 
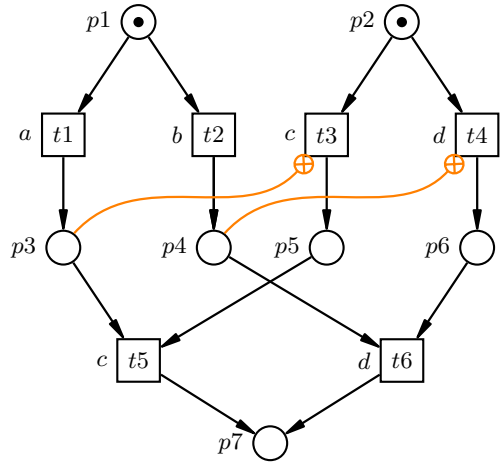
\begin{figure}
    \centering
        \begin{tikzpicture}[->,node distance=1.5cm,>=arrow30,line  width=0.3mm,scale=1.1,auto,bend angle=45,font=\small]
     \tikzstyle{place}=[draw,circle,thick,minimum size=4.5mm]
     \tikzstyle{transition}=
                [draw,regular polygon,thick,
                 regular polygon sides=4,minimum size=8mm, inner sep = -2pt]

    \node[place,draw=black,tokens=1,label=left:$p1$,right of=A2](p1){};
    \node[place,draw=black,tokens=1,label=left:$p2$,right of= p1, xshift=2cm](p2){};
    \node[transition,draw=black,label=left:$a$, below of= p1, xshift=-1cm](t1){$t1$};
    \node[transition,draw=black,label=left:$b$,below of= p1, xshift=1cm](t2){$t2$};
    \node[transition,draw=black,label=left:$c$,below of= p2,xshift=-1cm](t3){$t3$};
    \node[transition,draw=black,label=left:$d$,below of= p2,xshift=1cm](t4){$t4$};
    \node[place,draw=black,tokens=0,label=left:$p3$,below of= t1](p3){};
    \node[place,draw=black,tokens=0,label=left:$p4$,below of= t2](p4){};
    \node[place,draw=black,tokens=0,label=left:$p5$,below of= t3](p5){};
    \node[place,draw=black,tokens=0,label=left:$p6$,below of= t4](p6){};
    \node[transition,draw=black,label=left:$c$,below of= p4,xshift=-1cm](t5){$t5$};
    \node[transition,draw=black,label=left:$d$,below of= p6,xshift=-1cm](t6){$t6$};
    \node[place,draw=black,tokens=0,label=left:$p7$,below of= t5, xshift=1.75cm, yshift=0.4cm](p7){};

    \path (p1) edge [->] (t1);
    \path (p1) edge [->] (t2);
    \path (t1) edge [->] (p3);
    \path (t2) edge [->] (p4);
    \path (t2) edge [->] (p4);
    \path (p2) edge [->] (t3);
    \path (p2) edge [->] (t4);
    \path (t3) edge [->] (p5);
    \path (t4) edge [->] (p6);
    \path (p3) edge [->] (t5);
    \path (p5) edge [->] (t5);
    \path (p4) edge [->] (t6);
    \path (p6) edge [->] (t6);
    \path (t5) edge [->] (p7);
    \path (t6) edge [->] (p7);

    \path [->,>=read] (p3) [color=orange, bend left,out=30,in=-150,min distance=1cm]  edge (t3);
    \path [->,>=read] (p4) edge [color=orange, bend left,out=30,in=-150,min distance=1cm] (t4);
    
     \end{tikzpicture}
    \caption{Example of $1$-safe labeled net system with read arcs.}
    \label{fig:pn-ex}
\end{figure}    
A \emph{labeled} Petri net $\Sigma_\ell= (\Sigma, \ell)$ is a 1-safe net system with a function 
$\ell: T \rightarrow 2^\Lambda$, where $\Lambda$ is a set of labels. 
In the cases where images of labeling function are singletons, we will use the only elements instead of sets consisting of one label.
Abusing the notation, for each 
$T' \subseteq T$ subset of $T$, we will denote with $\ell(T') = \{\alpha\in 2^\Lambda : \exists t\in T': \ell(t) = \alpha\}$ 
the set of labels of the elements in $T'$. The set $\ell(T)$ is the \emph{alphabet} of $\Sigma_\ell$.
When $\ell$ is clear from the context we will refer to the labeled net only as $\Sigma$.
\begin{example}
Consider the net system in Fig. \ref{fig:pn-ex}, for the moment ignoring the orange arcs.
It is a labeled and $1$-safe net system, with initial marking $\{p1, p2\}$.
The transitions enabled in the initial marking are $\{t1, t2, t3, t4\}$.
Transitions $t1$ and $t2$ are in conflict with each other, whereas $t1$ and $t3$ are concurrent.
The net system is labeled, therefore each transitions is represented both with a unique identifier ($t_i$, $i\in \{1,...,6\}$)
and with its label; some labels are shared, e.g. $\ell(t5) = \ell(t6) = c$.
\end{example}
The sequential behavior of a labeled Petri net 
can be described by an initialized labeled transition system, 
where each state corresponds to a reachable marking, 
and each arc is labeled by the label of the transition leading from the source marking to the target one.

\begin{definition} %marking graph
	Let $\Sigma = (P, T, F, m_0, \ell)$ be a  labeled Petri net, 
	its \emph{marking graph} is a transition system $\MG(\Sigma) = ([m_0\rangle, \ell(T), Ar, m_0)$
	where  $Ar= \{ f: [m_0\rangle \times \ell(T) \rightarrow [m_0\rangle \mid f(m, \ell(t)) = m' \iff  \,   m[t \rangle m' \}$. 
\end{definition}

Let $\Sigma = (P, T, \flow, m_0, \ell)$ be a labeled $1$-safe system.
A subsystem of $\Sigma$ is a labeled $1$-safe system $\Sigma' = (P', T', \flow', m'_0, \ell')$ such that 
$P' \subseteq P$, $T' \subseteq T$, $m'_0 \subseteq m_0$, $\flow' \equiv \flow_{|(P' \times T')\cup (T' \times P')}$, and $\ell' \equiv \ell_{|T'}$.
Let $P_s$ be a subset of places, and $T_s = \{t\in T : \exists p\in P_s : t\in \pre{p} \lor t \in \post{p}\}$
the set of transitions for which an element of $P_s$ is a precondition or a postcondition. 
The subsystem $\Sigma_s = (P_s, T_s, \flow_s, m_{0,s}, \ell)$ is defined as \emph{subsystem generated by $P_s$}.
This subsystem is a \emph{sequential component} of a $1$-safe net system $\Sigma$ iff for each $t \in T_s$, $|\pre{t}| = |\post{t}| = 1$, and for each $m \in [m_{0,s}\rangle$, $|m| = 1$.

A net system with read arcs (inhibitor arcs) is a tuple $\Sigma^i = (P,T,\flow,\read,m_0)$ (respectively $\Sigma^r = (P,T,\flow,\inh,m_0)$), where $P, T, \flow$ are elements of the net and $m_0$ is the initial marking, while 
$\read\subseteq P\times T$ is an \emph{activation relation} ($\inh\subseteq P\times T$ is an \emph{inhibition relation}, respectively).

For each element $t\in T$, 
the set of its \emph{activators} is $^\otimes t = \{p\in P: (p,t)\in \read\}$, 
while 
the set of its \emph{inhibitors} is $^\circ t = \{p\in P: (p,t)\in \inh\}$.

A transition $t$ is enabled in a marking $m$ of a system with activators $\Sigma^r$ 
if for each $p\in {^\bullet t} \cup {^\otimes t}$, $m(p)\geq 0$.
Similarly, $t$ is enabled in a marking $m$ of a system with inhibitors $\Sigma^i$ 
if for each $p\in {^\bullet t}$ not only $m(p)\geq 0$, but also $m(p')=0$ 
for every $p'\in {^\circ t}$. 
For both systems with inhibitor arcs and systems with read arcs, \emph{firing rules} for enabled transitions are the same as in systems without inhibition and activation relations. 

Graphically, 
activators are represented by arcs with small filled circles as arrowheads, while 
inhibitors are represented by arcs with small empty circles as arrowheads.

\begin{example}
In Fig. \ref{fig:pn-ex}, the read arcs are represented with orange.
When we consider read arcs, only $t1$ and $t2$ are enabled in the initial marking,
since the condition required by the read arc for the occurrence of $t3$ and $t4$
is not satisfied.
\end{example}

In what follows we provide few notions of fusion operation that merges disconnected nets into one system. 
We will concentrate on the fusion of two nets 
$\Sigma_1 = (P_1, T_1, \flow_1, \read_1, m_{0,1}, \ell_1)$ and
$\Sigma_2 = (P_2, T_2, \flow_2, \read_2, m_{0,2}, \ell_2)$ into a net 
$\Sigma = (P, T, \flow, \read, m_{0}, \ell)$.

\paragraph{Fusion of transitions}
The first idea is to identify copies of the same transition that are present in models of different parts of a system. 
Additionally, we assume that $P_1 \cap P_2 = \emptyset$.
Since we operate on labeled Petri nets, we would like to identify all transitions with the same label, which multiplies the number of equilabeled transitions in the resulting system.

Formally, we will denote $\Sigma$ as $\Sigma_1 \bowtie \Sigma_2$, where:
\begin{itemize}
    \item $P = P_1 \uplus P_2$;
    \item $T = 
    \{(t_1,t_2)\in T_1\times T_2 \;|\; \ell_1(t_1)=\ell_2(t_2)\} 
    \cup 
    \{(t,\varepsilon) \;|\; t\in T_1 \wedge \forall_{t'\in T_2}\;\ell_1(t)\neq \ell_2(t')\} 
    \cup 
    \{(\varepsilon,t) \;|\; t\in T_2 \wedge \forall_{t'\in T_1}\;\ell_2(t)\neq \ell_1(t')\}$;
    \item $\flow = 
    \{(p,(t_1,t_2)) \;|\; (p,t_1)\in\flow_1\}
    \cup
    \{(p,(t_1,t_2)) \;|\; (p,t_2)\in\flow_2\}
    \cup
    \{((t_1,t_2),p) \;|\; (t_1,p)\in\flow_1\}
    \cup
    \{((t_1,t_2),p) \;|\; (t_2,p)\in\flow_2\}$;
    \item $\read = 
    \{(p,(t_1,t_2)) \;|\; (p,t_1)\in\read_1\}
    \cup
    \{(p,(t_1,t_2)) \;|\; (p,t_2)\in\read_2\}$;
    \item $m_0 = m_{0,1} \cup m_{0,2}$;
    \item $\ell(t_1,t_2) = \ell_1(t_1)\cup\ell_2(t_2)$.
 \end{itemize}
Fig.~\ref{fig:synctr-pn} show an example of synchronization on
transitions, where two Petri nets synchronize on transitions
labeled with $a$.
\begin{figure}
    \centering
        \begin{tikzpicture}[node distance=1.3cm,>=arrow30,line  width=0.3mm,scale=1.1,auto,bend angle=45,font=\tiny]
     \tikzstyle{place}=[draw,circle,thick,minimum size=4.5mm]
     \tikzstyle{transition}=
                [draw,regular polygon,thick,
                 regular polygon sides=4,minimum size=8mm, inner sep = -2pt]

     \node(p1)at(3.5,4)[color=red,place,tokens=1,label=above:$p1$]{};
     \node(p2)at(2,2)[color=red,place,tokens=0,label=left:$p2$]{};
     \node(p3)at(5,2)[color=red,place,tokens=0,label=right:$p3$]{};
     \node(a1)at(2,3)[color=red,transition] {$a$};
     \node(a2)at(5,3)[color=red,transition] {$a$};
     \node(b)at(3,3)[color=red,transition] {$b$};
     \node(c)at(4,3)[color=red,transition] {$c$};
     \path(p1) edge [->] (a1);
     \path(p1) edge [->] (a2);
     \path(a1) edge [->] (p2);
     \path(a2) edge [->] (p3);
     \path(p2) edge [->] (b);
     \path(b) edge [->] (p1);
     \path(p3) edge [->] (c);
     \path(c) edge [->] (p1);

     \node(p4)at(7,4)[color=blue,place,tokens=1,label=right:$p4$]{};
     \node(p5)at(7,2)[color=blue,place,tokens=0,label=left:$p5$]{};
     \node(a3)at(6,3)[color=blue,transition] {$a$};
     \node(d)at(8,3)[color=blue,transition] {$d$};
     \path(p4) edge [->] (a3);
     \path(a3) edge [->] (p5);
     \path(p5) edge [->] (d);
     \path(d) edge [->] (p4);

     \node(p1s)at(10.5,5)[color=red,place,tokens=1,label=above:$p1$]{};
     \node(p2s)at(9,2)[color=red,place,tokens=0,label=left:$p2$]{};
     \node(p3s)at(12,2)[color=red,place,tokens=0,label=right:$p3$]{};
     \node(a1s)at(9,3)[color=red,transition] {$a$};
     \node(a2s)at(12,3)[color=red,transition] {$a$};
     \node(bs)at(10,3)[color=red,transition] {$b$};
     \node(cs)at(11,3)[color=red,transition] {$c$};
     \node(p4s)at(10.5,4)[color=blue,place,tokens=1,label=right:$p4$]{};  
     \node(p5s)at(10.5,1)[color=blue,place,tokens=0,label=right:$p4$]{};  
     \node(a31s)at(9.1,3)[color=blue,transition] {$a$};
     \node(a32s)at(11.9,3)[color=blue,transition] {$a$};
     \node(ds)at(10.5,2.1)[color=blue,transition] {$d$};
     \path(p1s) edge [->] (a1s);
     \path(p1s) edge [->] (a2s);
     \path(a1s) edge [->] (p2s);
     \path(a2s) edge [->] (p3s);
     \path(p2s) edge [->] (bs);
     \path(bs) edge [->] (p1s);
     \path(p3s) edge [->] (cs);
     \path(cs) edge [->] (p1s);
     \path(p4s) edge [->] (a31s);
     \path(p4s) edge [->] (a32s);
     \path(a1s) edge [->] (p5s);
     \path(a2s) edge [->] (p5s);
     \path(p5s) edge [->] (ds);
     \path(ds) edge [->] (p4s);
     
     \end{tikzpicture}
    \caption{Simple example of synchronization on transitions.
    Let $\Sigma_1$ be the net on the left and $\Sigma_2$ be the
    net in the center; the net on the right represent their
    synchronization $\Sigma_1 \bowtie \Sigma_2$.
    }
    \label{fig:synctr-pn}
\end{figure}
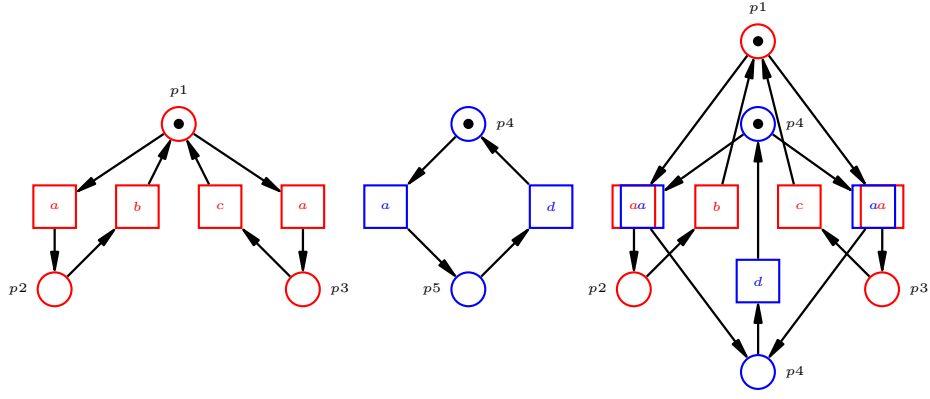
\paragraph{Fusion of places}
Similarly, we can identify corresponding places that are present in models of different parts of a system.
Note that this time the fusion is simpler, as places are not labeled.
We naturally assume that $T_1 \cap T_2 = \emptyset$ (while there may by many transitions with the same label in both parts), and that for every $p \in P_1 \cap P_2$, $m_{0,1}(p) = m_{0,2}(p)$.

Formally, we will denote $\Sigma$ as $\Sigma_1 \infty \Sigma_2$, where:
\begin{itemize}
    \item $P = P_1 \cup P_2$; 
    \item $T = T_1 \uplus T_2$;
    \item $\flow = \flow_1 \cup \flow_2$;
    \item $\read = \read_1 \cup \read_2$;
    \item $m_0(p) = m_{0,1}(p)$ if $p\in P_1$ 
    and $m_0(p) = m_{0,2}(p)$ otherwise;
    \item $\ell(t) = \ell_1(t)$ if $t\in T_1$
    and $\ell(t) = \ell_2(t)$ otherwise.
 \end{itemize}
\paragraph{Fusion of sequential components}
Finally, we identify families of pairwise disjoint sequential components.
For this purpose, we need to choose four families of subsets $X_1=\{X^i_1\subseteq P_1 | i=1..f_1\}$
and $X'_1=\{X'^i_1\subseteq P_1 | i=1..f'_1\}$ and $X_2=\{X^i_2\subseteq P_2 | i=1..f_2\}$ and $X'_2=\{X'^i_2\subseteq P_2 | i=1..f'_2\}$ identifying sequential components and bijections $\rho_i:X^i_1\to X^i_2$ and $\rho'_i:X'^i_2\to X'^i_1$ such that
for every $p\in X^i_1$ we have $m_{0,1}(p) = m_{0,2}(\rho_i(p))$ and 
for every $q\in X'^i_2$ we have $m_{0,2}(q) = m_{0,1}(\rho'_i(q))$.
To make the situation completely symmetric, we make use of the set character of labeling functions.

Formally, we will denote $\Sigma$ as $\Sigma_1 \propto_\rho \Sigma_2$, where:
\begin{itemize}
    \item $P = (P_1 \cup P_2)\setminus (\bigcup_i X^i_2 \setminus \bigcup_i X'^i_2) \setminus (\bigcup_i X'^i_1\setminus \bigcup_i X^i_1)$;
    \item $T = 
    \{(\{t_1\},U^{p,q})\in 2^{T_1}\times 2^{T_2}\;|\;\emptyset \neq U^{p,q}=\bigcup_{i\;|\; \{p\}=X_1^i\cap {\bullet t_1}\land \{q\}=X_1^i\cap{t_1\bullet}}\{t_2 \;|\;\rho_i(p)\in{\bullet t_2}\land \rho_i(q)\in{t_2\bullet}\}\}
    \cup
    \{(U^{p,q},\{t_2\})\in 2^{T_1}\times 2^{T_2}\;|\;\emptyset \neq U^{p,q}=$ $ \bigcup_{i\;|\; \{p\}=X'^i_2\cap {\bullet t_2}\land \{q\}=X'^i_2\cap{t_2\bullet}}\{t_1 \;|\;\rho'_i(p)\in{\bullet t_1}\land \rho'_i(q)\in{t_1\bullet}\}\}
    \cup
    \{(\{t\},\emptyset) \;|\; t\in T_1 \wedge ({\bullet t}\cap (\bigcup_i X^i_1 \cup \bigcup_j X'^j_1) = \emptyset \vee {t\bullet}\cap (\bigcup_i X^i_1 \cup \bigcup_j X'^j_1) = \emptyset)\} 
    \cup 
    \{(\emptyset,\{t\}) \;|\; t\in T_2 \wedge ({\bullet t}\cap (\bigcup_i X^i_2 \cup \bigcup_j X'^j_2) = \emptyset \vee {t\bullet}\cap (\bigcup_i X^i_2 \cup \bigcup_j X'^j_2) = \emptyset)\}.
    $
    \item $\flow =
    \{(p, (U_1, U_2)) \; | \; U_1 \subseteq T_1, U_2 \subseteq T_2 \land (p,t_1) \in \flow_1 \land t_1\in U_1\} \cup 
    \{(p, (U_1, U_2)) \; | \;  U_1 \subseteq T_1, U_2 \subseteq T_2 \land (p,t_2) \in \flow_2 \land t_2\in U_2\} \cup
    \{((U_1, U_2), p) \; | \;  U_1 \subseteq T_1, U_2 \subseteq T_2 \land (t_1,p) \in \flow_1 \land t_1\in U_1\} \cup 
    \{((U_1, U_2), p) \; | \;  U_1 \subseteq T_1, U_2 \subseteq T_2 \land (t_2,p) \in \flow_2 \land t_2\in U_2\};    
    $
    \item $\read =
    \{(p, (U_1, U_2)) \; | \; U_1 \subseteq T_1, U_2 \subseteq T_2 \land (p,t_1) \in \read_1 \land t_1\in U_1\} 
    \cup
    \{(p, (U_1, U_2)) \; | \;  U_1 \subseteq T_1, U_2 \subseteq T_2 \land (p,t_2) \in \read_2 \land t_2\in U_2\};
    $
    \item $m_0(p) = m_{0,1}(p)$ if $p\in P_1$ 
    and $m_0(p) = m_{0,2}(p)$ otherwise;
    \item $\ell((U_1,U_2)) = \bigcup_{t_1\in U_1}\ell_1(t_1)\cup\bigcup_{t_2\in U_2}\ell_2(t_2)$ taking a union over empty set equal to $\emptyset$;
 \end{itemize}
The fusion of sequential components can generate different synchronizations, 
depending on the families of places in which we choose to synchronize the agents.
This is illustrated in the following example.
\begin{example}
Consider two nets presented in Fig.~\ref{fig:sync_seq} (on the left). On the right part of that figure one can see two possible synchronizations on sequential components (on the right) between two agents (on the left).
Let $P_1$ be the set of places for the red agent and $P_2$ be the set of places for the blue agent.
For the first composition (shown above in Fig. \ref{fig:sync_seq}), we define $X_1 = \{\{p_1, p_2, p_3\}\}$, $X_2 = \{\{p_4, p_5, p_6\}\}$, 
$X'_1 = X'_2 = \emptyset$, and we define $\rho(p_1) = p_4$, $\rho(p_2) = p_5$, $\rho(p3)=p_6$.
In the second composition (lower part on the right), we connect different components
$\{X_1 =\{\{p_1, p_2\}\}, X_2 = \{\{p_4, p_6\}\}\}$ and $X'_1 = X'_2 = \emptyset$.
The bijection $\rho$ is defined as follows: $\rho(p1) = p4$, $\rho(p2) = p6$.
From this example, we can see that the choice of the components for the synchronization
can significantly change the shape and properties of the composed net. 

\end{example}
\begin{figure}
    \centering
        \begin{tikzpicture}[node distance=1.3cm,>=arrow30,line  width=0.3mm,scale=1.1,auto,bend angle=45,font=\tiny]
     \tikzstyle{place}=[draw,circle,thick,minimum size=4.5mm]
     \tikzstyle{transition}=
                [draw,regular polygon,thick,
                 regular polygon sides=4,minimum size=8mm, inner sep = -2pt]

     \node(w1)at(2,4)[color=red,place,tokens=1,label=above:$p1$]{};
     \node(a1)at(2,1)[color=red,place,tokens=0,label=left:$p3$]{};
     \node(t1)at(3.5,2.5)[color=red,place,tokens=0,label=right:$p2$]{};
     \node(n1)at(3.5,4)[color=red,transition] {$a$};
     \node(n2)at(3.5,1)[color=red,transition] {$b$};
     \node(n3)at(2,2.5)[color=red,transition] {$c$};
     \node(n4)at(2.75,3.25)[color=red,transition] {$d$};     
     \path(n3) edge [->] (w1);
     \path(w1) edge [->] (n1);
     \path(n1) edge [->] (t1);
     \path(t1) edge [->] (n2);
     \path(n2) edge [->] (a1);
     \path(a1) edge [->] (n3);
     \path(t1) edge [->] (n4);
     \path(n4) edge [->] (w1);

     \node(w2)at(4.5,4)[color=blue,place,tokens=1,label=above:$p4$]{};     
     \node(a2)at(4.5,1)[color=blue,place,tokens=0,label=left:$p6$]{};
     \node(t2)at(6,2.5)[color=blue,place,tokens=0,label=right:$p5$]{};
     \node(m1)at(6,4)[color=blue,transition] {$e$};
     \node(m2)at(6,1)[color=blue,transition] {$f$};
     \node(m3)at(4.5,2.5)[color=blue,transition] {$g$};
     \node(m4)at(5.25,2.5)[color=blue,transition] {$h$};
     \node(m5)at(6.65,0.35)[color=blue,transition] {$k$};
     \path(m3) edge [->] (w2);
     \path(w2) edge [->] (m1);
     \path(m1) edge [->] (t2);
     \path(t2) edge [->] (m2);
     \path(m2) edge [->] (a2);
     \path(a2) edge [->] (m3);
     \path(w2) edge [bend left] [->] (m4);
     \path(m4) edge [bend left] [->] (a2);
     \path(a2) edge [bend right] [->] (m5);
     \path(m5) edge [->] [bend right] (t2);

     \node(s1)at(9.5,5)[color=gray,place,tokens=1,label=left:$p1$]{};     
     \node(s3)at(9.5,3)[color=gray,place,tokens=0,label=left:$p3$]{};
     \node(s2)at(11,4)[color=gray,place,tokens=0,label=right:$p2$]{};
     \node(tr1)at(11,5)[color=gray,transition] {$\{e,a\}$};
     \node(tr2)at(11,3)[color=gray,transition] {$\{f,b\}$};
     \node(tr3)at(9.5,4)[color=gray,transition] {$\{g,c\}$};

     \path(tr3) edge [->] (s1);
     \path(s1) edge [->] (tr1);
     \path(tr1) edge [->] (s2);
     \path(s2) edge [->] (tr2);
     \path(tr2) edge [->] (s3);
     \path(s3) edge [->] (tr3);

     %\node(w2)at(4.5,4)[color=blue,place,tokens=1,label=above:$p4$]{}; 
     \node(s'1)at(9.5,2)[color=gray,place,tokens=1,label=above:$p1$]{};     
     \node(s'5)at(9.5,-1)[color=gray,place,tokens=0,label=below:$p5$]{};
     \node(s'2)at(11,0.5)[color=gray,place,tokens=0,label=right:$p2$]{};
     \node(s'3)at(8,0.5)[color=gray,place,tokens=0,label=left:$p3$]{};
     \node(tr'6)at(11,-1)[color=gray,transition] {$f$};
     \node(tr'7)at(11.65,-1.65)[color=gray,transition] {$k$};
     \node(tr'5)at(11,2)[color=gray,transition] {$e$};
     \node(tr'2)at(9.5,.5)[color=gray,transition] {$\{d,g\}$};
     \node(tr'1)at(10.25,.5)[color=gray,transition] {$\{a,h\}$};
     \node(tr'4)at(8,2)[color=gray,transition] {$c$};
     \node(tr'3)at(8,-1)[color=gray,transition] {$b$};

     \path(tr'2) edge [->] (s'1);
     \path(s'1) edge [->] (tr'5);
     \path(tr'5) edge [->] (s'2);
     \path(s'2) edge [->] (tr'6);
     \path(tr'6) edge [->] (s'5);
     \path(s'5) edge [->] (tr'2);
     \path(s'1) edge [bend left] [->] (tr'1);
     \path(tr'1) edge [bend left] [->] (s'5);
     \path(s'5) edge [bend right] [->] (tr'7);
     \path(tr'7) edge [->] [bend right] (s'2);

     \path(tr'4) edge [->] (s'1);
     \path(s'3) edge [->] (tr'4);
     \path(tr'3) edge [->] (s'3);
     \path(s'5) edge [->] (tr'3);

     %\path(tr'3) edge [->] (s'1);
     %\path(s'1) edge [->] (tr'1);
     %\path(tr'1) edge [->] (s'2);
     %\path(s'2) edge [->] (tr'2);
     %\path(tr'2) edge [->] (s'3);
     %\path(s'2) edge [->] (tr'3);

    \node[state,draw=none] (f1)at(6.25,3.5) {};
    \node[state,draw=none] (f2)at(8.5,4.5) {};
    \node[state,draw=none] (f3)at(8.5,2.5) {};
    \draw [->] (f1) to node[auto] {} (f2);
    \draw [->] (f1) to node[auto] {} (f3);
    %\path()

     \end{tikzpicture}
     
    \caption{Two possible ways to synchronize two agents on sequential components. The blue and red graphs represent the agents, the two gray nets (upper and lower) represent two of their possible compositions.}
    \label{fig:sync_seq}
\end{figure}
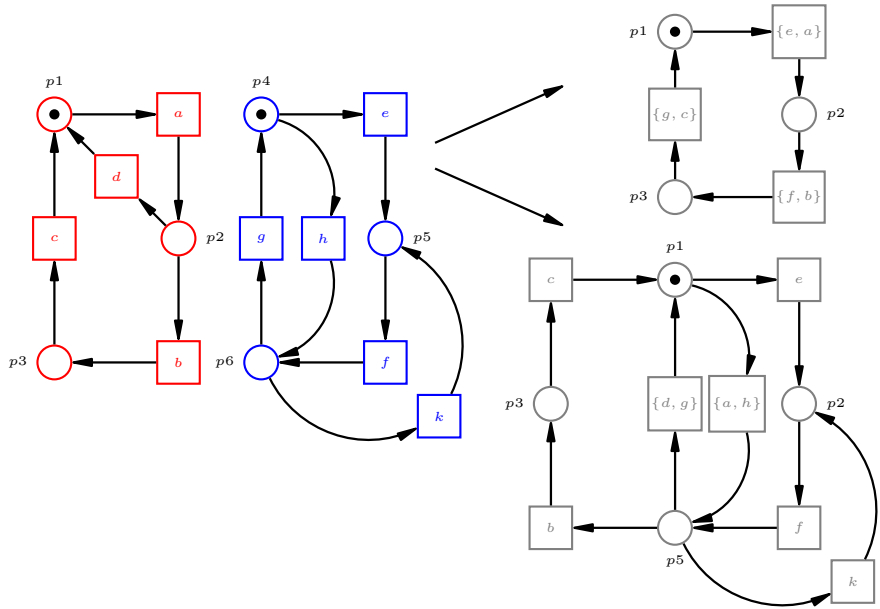

\begin{example}
\label{ex:autofusion}
    Note that fusion of sequential components may be used to merge two equivalent parts of a single system. As a representative example, one can consider the net presented in Fig.~\ref{fig:autofusion} (on the left). Its two green sequential components are merged into one using an external sequential component, the magenta net on the right. 
    For the sequential composition, we use $X_1 = X_2 = \emptyset$
    and $X'_1 = \{\{x1,\lnot x1\},\{y1,\lnot y1\}\}$, 
    $X'_2 = \{\{r1,\lnot r1\},\{r1,\lnot r1\}\}$ together with $\rho'_1(x1)=\rho'_2(y1)=r1$ and $\rho'_1(\lnot x1)=\rho'_2(\lnot y1)=\lnot r1$. The resulting net is shown in Fig.~\ref{fig:tr2glpnb}.

\begin{figure}
    \centering
        \begin{tikzpicture}[node distance=1.3cm,>=arrow30,line  width=0.3mm,scale=1.1,auto,bend angle=45,font=\tiny]
     \tikzstyle{place}=[draw,circle,thick,minimum size=4.5mm]
     \tikzstyle{transition}=
                [draw,regular polygon,thick,
                 regular polygon sides=4,minimum size=8mm, inner sep = -2pt]

     \node(w1)at(2,4)[color=red,place,tokens=1,label=above:$w1$]{};
     \node(nw1)at(1.7,3.3)[color=red,place,tokens=0,label=below:$\lnot w1$]{};
     \node(a1)at(1,1)[color=red,place,tokens=0,label=left:$a1$]{};
     \node(na1)at(1.5,1.5)[color=red,place,tokens=1,label=right:$\lnot a1$]{};
     \node(t1)at(2,2)[color=red,place,tokens=0,label=right:$t1$]{};
     \node(nt1)at(3.6,2)[color=red,place,tokens=1,label=left:$\lnot t1$]{};
     \node(n1)at(2.8,3)[color=red,transition] {$n1$};
     \node(n2)at(2.8,1)[color=red,transition] {$n2$};
     \node(n3)at(1,4)[color=red,transition] {$n3$};
     \path(n3) edge [->] (w1);
     \path(nw1) edge [->] (n3);
     \path(w1) edge [->] (n1);
     \path(n1) edge [->] (nw1);
     \path(n1) edge [->] (t1);
     \path(t1) edge [->] (n2);
     \path(n2) edge [->] (a1);
     \path(na1) edge [->] (n2);
     \path(a1) edge [->] (n3);
     \path(n3) edge [->] (na1);
     \path(nt1) edge [->] (n1);
     \path(n2) edge [->] (nt1);

     \node(r1)at(8,2.5)[color=magenta,place,tokens=0,label=right:$r1$]{};
     \node(nr1)at(9,2.5)[color=magenta,place,tokens=1,label=right:$\lnot r1$]{};
     \node(r1nr1)at(8.5,3.5)[color=magenta,transition] {$\emptyset$};
     \node(nr1r1)at(8.5,1.5)[color=magenta,transition] {$\emptyset$};

     \node(x1)at(4,1.5)[color=green,place,tokens=0,label=right:$x1$]{};
     \node(xr1)at(6,1.5)[color=green,place,tokens=1,label=right:$\lnot x1$]{};
     \node(x1nr1)at(5,2)[color=green,transition] {$\emptyset$};
     \node(nr1x1)at(5,1)[color=green,transition] {$\emptyset$};

     \node(y1)at(4,3.5)[color=green,place,tokens=0,label=right:$y1$]{};
     \node(yr1)at(6,3.5)[color=green,place,tokens=1,label=right:$\lnot y1$]{};
     \node(y1nr1)at(5,4)[color=green,transition] {$\emptyset$};
     \node(nr1y1)at(5,3)[color=green,transition] {$\emptyset$};

     \path(r1) edge [->] (r1nr1);
     \path(r1nr1) edge [->] (nr1);
     \path(nr1) edge [->] (nr1r1);
     \path(nr1r1) edge [->] (r1);

     \path(x1) edge [->] (x1nr1);
     \path(x1nr1) edge [->] (xr1);
     \path(xr1) edge [->] (nr1x1);
     \path(nr1x1) edge [->] (x1);

     \path(y1) edge [->] (y1nr1);
     \path(y1nr1) edge [->] (yr1);
     \path(yr1) edge [->] (nr1y1);
     \path(nr1y1) edge [->] (y1);

     \path [->,>=read] (y1) edge (n1);  
     \path [->,>=read] (x1) edge (n2);  
     \end{tikzpicture}
    \caption{Two green sequential components of the net on the left merged using the external, magenta net, on the right.}
    \label{fig:autofusion}
\end{figure}
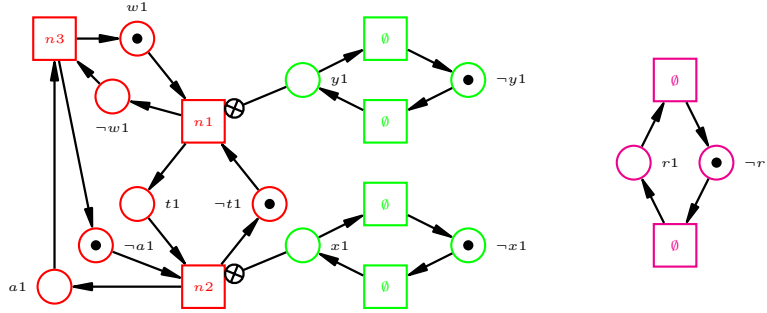
\end{example}

\begin{definition}
Let $\Sigma=(P,T,\flow,m_0)$ be a net obtained by a fusion of subnets $\Sigma_i=(P_i,T_i,\flow_i,m_{0,i})$. Then the projection of a marking $M\subseteq P$ of system $\Sigma$ to its subsystem $\Sigma_i$ is a marking $M|_{\Sigma_i}\subseteq P_i$ defined in the following way: $M|_{\Sigma_i}(p) = M(p)$ for $p\in P_i\cap P$ or $M|_{\Sigma_i}(p) = M(\rho(p))$ otherwise (in the case of sequential component fusion, when $p$ changed the name after the fusion). 
\end{definition}

\subsection{Synthesis of Petri nets from transition systems}
\label{sec:synt}
Given a labeled transition system $\TS = (\Ls, \Evt, \TR, \iota)$, 
solving the synthesis problem means finding a
Petri net $\Sigma$ such that $\MG(\Sigma)$ is isomorphic to $\TS$.
The classical techniques for the synthesis of 1-safe net systems are based on 
the research of \emph{regions} \cite{BBD15}.
Intuitively, a region is a subset of $\Ls$, and it represents a place of the net.
Each label $e\in \Evt$ is associated to a transition in the synthesized net,
and its relations with places are determined by the relations between the 
arcs labeled with $e$ in $\TS$ and the regions in $\TS$.

Formally, a \emph{region} is a subset of states $r\subseteq \Ls_i$ such that, 
for each $e\in \Evt_i$, one of the following conditions holds.
\begin{enumerate}
    \item $e$ enters the region, i.e. for each arc labeled with $e$ from 
    $s_1$ to $s_2$, $s_1 \not \in r \wedge s_2 \in r$;
    \item $e$ leaves the region, i.e. for each arc labeled with $e$ from 
    $s_1$ to $s_2$, $s_1 \in r \wedge s_2 \not\in r$; 
    \item $e$ does not cross the border of the region, i.e. for each arc 
    labeled with $e$ from $s_1$ to $s_2$, $s_1 \in r \wedge s_2 \in r$, or 
    $s_1 \not\in r \wedge s_2 \not\in r$. 
\end{enumerate}
If $e$ does not cross the border of $r$, and for each arc labeled with $e$ from 
$s_1$ to $s_2$, we have $s_1, s_2\in r$, then we can say that $e$ is \emph{inside} $r$.
If the synthesis problem has a solution, we can determine the flow relation 
in the net as follows. 
For each transition $e$, for each place $r$, $r$ is a precondition of $e$ if $e$ leaves $r$ in 
$\TS$, $r$ is a postcondition of $e$ if $e$ enters $r$ in $\TS$. 
If $e$ is inside $r$, then we can see $r$ as both a precondition and a 
postcondition of $e$, and add a self loop to the net. 
Otherwise, there is no flow relation between $r$ and $e$ in the net. 

Not every transition system can be synthesized into a 1-safe net with the procedure described above. 
In particular, we can synthesize a 1-safe net from a transition system if, 
and only if, the set of regions of the transition system satisfies the so 
called \emph{state separation property} (SSP) and \emph{event-state separation property} (ESSP): 
\begin{itemize}
   \item $\forall s, s'\in L_i \quad s\neq s'  \quad \rightarrow \exists r\in R : (s\in r \wedge s' \not\in r) \lor (s\not\in r \wedge s\in r)$\quad (SSP)
   \item $\forall e \in E, \, \forall s\in S : e$  is not outgoing from  $s$, 
    $\rightarrow \exists r : s\not\in r \wedge (e$  leaves  $r \lor e$  is inside  $r)$ \quad(ESSP)
\end{itemize}
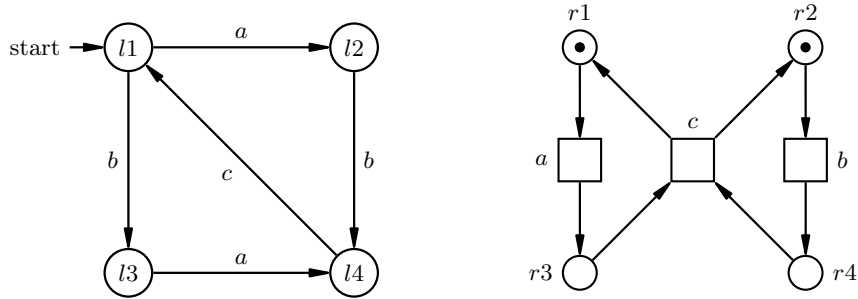
\begin{figure}
    \centering
       \begin{tikzpicture}[->,node distance=3cm,>=arrow30,line  width=0.3mm,scale=1.1,auto,bend angle=45,font=\small]
     \tikzstyle{place}=[draw,circle,thick,minimum size=4.5mm]
     \tikzstyle{transition}=
                [draw,regular polygon,thick,
                 regular polygon sides=4,minimum size=8mm, inner sep = -2pt]
                 
    \node[initial,circle,draw=black,inner sep=1pt,minimum size=18pt](A1){$l1$};
    \node[circle,draw=black,inner sep=1pt,minimum size=18pt,right of=A1] (A2){$l2$};
    \node[circle,draw=black,inner sep=1pt,minimum size=18pt,below of=A1] (A3) {$l3$};
    \node[circle,draw=black,inner sep=1pt,minimum size=18pt,below of=A2] (A4) {$l4$};

    \path (A1) edge node {$a$} (A2)
      (A3) edge node {$a$} (A4)
      (A1) edge node [xshift=-4mm] {$b$} (A3)
      (A2) edge node {$b$} (A4)
      (A4) edge node {$c$} (A1)
      ;

     \node[place,draw=black,tokens=1,label=above:$r1$,right of=A2](p1){};
     \node[place,draw=black,tokens=1,label=above:$r2$,right of= p1](p2){};
    \node[place,draw=black,tokens=0,label=left:$r3$,below of= p1](p3){};
    \node[place,draw=black,tokens=0,label=right:$r4$,below of= p2](p4){};
    \node[transition,draw=black,label=left:$a$,below of= p1,yshift=1.5cm](t1){};
    \node[transition,draw=black,label=right:$b$,below of= p2,yshift=1.5cm](t2){};
    \node[transition,draw=black,label=above:$c$,right of= t1,xshift=-1.5cm](t3){};

    \path (p1) edge [->] (t1);
    \path (t1) edge [->] (p3);
    \path (p2) edge [->] (t2);
    \path (t2) edge [->] (p4);
    \path (p3) edge [->] (t3);
    \path (p4) edge [->] (t3);
    \path (t3) edge [->] (p1);
    \path (t3) edge [->] (p2);
    ;
     \end{tikzpicture}
    \vspace*{-1cm}
    \caption{Synthesis of a 1-safe net}
    \label{fig:regions}
\end{figure}
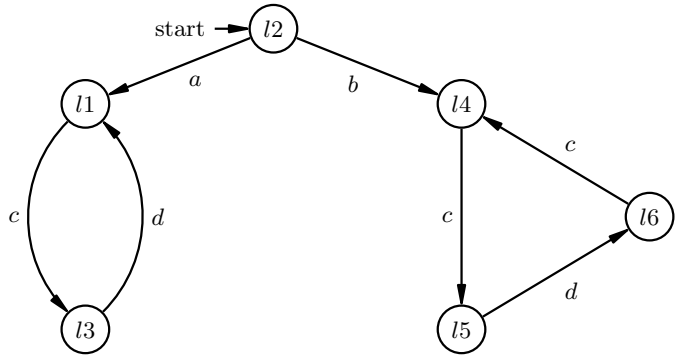
\begin{figure}
    \centering
        \begin{tikzpicture}[->,node distance=3cm,>=arrow30,line  width=0.3mm,scale=1.1,auto,bend angle=45,font=\small]
     \tikzstyle{place}=[draw,circle,thick,minimum size=4.5mm]
     \tikzstyle{transition}=
                [draw,regular polygon,thick,
                 regular polygon sides=4,minimum size=8mm, inner sep = -2pt]
                 
    \node[initial,circle,draw=black,inner sep=1pt,minimum size=18pt] (A2){$l2$};
    \node[circle,draw=black,inner sep=1pt,minimum size=18pt,left of=A2,yshift=-1cm,xshift=0.5cm](A1){$l1$};
    \node[circle,draw=black,inner sep=1pt,minimum size=18pt,right of=A2,yshift=-1cm,xshift=-0.5cm] (A4){$l4$};
    \node[circle,draw=black,inner sep=1pt,minimum size=18pt,below of=A1] (A3) {$l3$};
    \node[circle,draw=black,inner sep=1pt,minimum size=18pt,below of=A4] (A5) {$l5$};
    \node[circle,draw=black,inner sep=1pt,minimum size=18pt,right of=A4,yshift=-1.5cm,xshift=-0.5cm] (A6) {$l6$};

    \path (A1) edge [bend right] node [swap] {$c$} (A3)
      (A3) edge [bend right] node [swap] {$d$} (A1)
      (A2) edge node {$a$} (A1)
      (A2) edge node [swap] {$b$} (A4)
      (A4) edge node [swap] {$c$} (A5)
      (A5) edge node [swap] {$d$} (A6)
      (A6) edge node [swap] {$c$} (A4)
      ;

     \end{tikzpicture}
    \caption{Transition system that cannot be synthesized without a labeled net.}
    \label{fig:non-sint}
\end{figure}
\begin{example}
Consider the transition system on the left in Fig.~\ref{fig:regions}.
A $1$-safe net synthesizing it is on the right of the same figure.
Place $r1$ in the net is the region $\{l1, l3\}$ in the transition system;
since all the arcs labeled with $a$ leave the region, $r1$ is a precondition
of $a$; the arc labeled with $c$ enters the region, therefore $r1$ is a
postcondition of $c$; finally the occurrences of $b$ are either inside or
outside the region, therefore $r1$ and $b$ have no relation with each other.
Analogously for the regions $\{l1, l2\}$, $\{l2, l4\}$, $\{l3,l4\}$.
This set of regions solves all the separation properties for the
transition system.
\end{example}
However, it is always possible to obtain a labeled 1-safe system, by allowing the net to  
have more transitions with the same (singleton) label. 
In this case, we can split the transitions of $\TS$ with the same label into subgroups, and 
look for regions as if each group had a different label. 
This generates a set of different transitions in the net sharing the same label. 
To obtain such a net is always possible, since in the worse case we could consider subsets formed by 
single arcs: if each arc of the transition system is considered as if it had different 
labels from the others, it is easy to verify that the set of regions satisfies the 
separation properties SSP and ESSP. 
In this case, each state of the transition system $\TS$ is a region and therefore 
can be translated into a place of the synthesized net. 
As showed in \cite{bernardinello93}, the minimal 
regions with respect to inclusion are sufficient for the synthesis, therefore  we can consider the states of $\TS$ as all and only the places of the net. 

\begin{example}
    Let us consider the net depicted on Figure~\ref{fig:non-sint}. 
    Note that we cannot synthesize this system as a 1-safe net, because state $l_4$ need to be in all pre-regions and post-regions of $c$ at the same time.
    However, we can construct a net with three copies of $c$ and only one $d$
    (see Figure~\ref{fig:net-non-sint}).
\end{example}
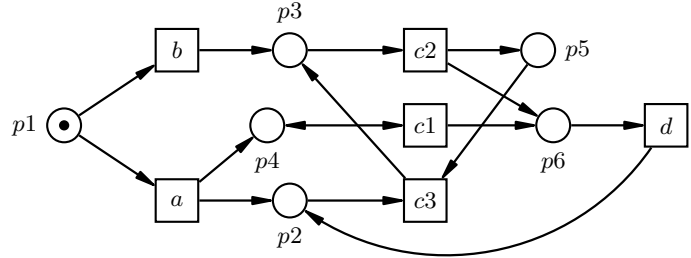
\begin{figure}
    \centering
        \begin{tikzpicture}[->,node distance=1.5cm,>=arrow30,line  width=0.3mm,scale=1.1,auto,bend angle=45,font=\small]
     \tikzstyle{place}=[draw,circle,thick,minimum size=4.5mm]
     \tikzstyle{transition}=
                [draw,regular polygon,thick,
                 regular polygon sides=4,minimum size=8mm, inner sep = -2pt]

    \node[place,draw=black,tokens=1,label=left:$p1$,right of=A2](p1){};
    
    %\node[place,draw=black,tokens=0,label=left:$p5$,below of= t3](p5){};
    %\node[place,draw=black,tokens=0,label=left:$p6$,below of= t4](p6){};
    
    \node[transition,draw=black, right of= p1, yshift=-1cm](t1){$a$};
    \node[transition,draw=black, right of= p1, yshift=1cm](t2){$b$};
    \node[place,draw=black,label=below:$p2$,right of= t1](p2){};    
    \node[place,draw=black,label=above:$p3$,right of= t2](p3){};
    \node[place,draw=black,tokens=0,label=below:$p4$,right of= t1, yshift=1cm,xshift=-0.3cm](p4){};
    
    \node[transition,draw=black,right of= p2,xshift=0.3cm](t3){$c3$};
    \node[transition,draw=black,right of= p2,yshift=1cm,xshift=0.3cm](t4){$c1$};
    
    \node[transition,draw=black,right of= p3,xshift=0.3cm](t5){$c2$};

    \node[place,draw=black,label=right:$p5$,right of= t5](p5){};
    \node[place,draw=black,label=below:$p6$,right of= t5, yshift=-1cm,xshift=0.2cm](p6){};
    
    \node[transition,draw=black,right of= p6](t6){$d$};

    \path (p1) edge [->] (t1);
    \path (p1) edge [->] (t2);
    \path (t1) edge [->] (p2);
    \path (t2) edge [->] (p3);
    \path (t1) edge [->] (p4);
    \path (p3) edge [->] (t5);
    \path (p4) edge [<->] (t4);
    \path (p2) edge [->] (t3);
    \path (t4) edge [->] (p6);
    \path (t5) edge [->] (p6);
    \path (t5) edge [->] (p5);
    \path (p6) edge [->] (t6);
    \path (t3) edge [->] (p3);
    \path (p5) edge [->] (t3);
    \path (t6) [bend left] edge [->] (p2);
    
     \end{tikzpicture}
    \caption{Net synthesized from the transition system in Fig. \ref{fig:non-sint}, obtained by splitting label $c$ in three different transitions.}
    \label{fig:net-non-sint}
\end{figure}
Let $\TS$ be a labeled transition system.
Any set of regions in $\TS$ satisfying all the separation properties 
is sufficient to construct a $1$-safe net system,
whose marking graph is isomorphic to $\TS$, independently of the chosen regions.
In particular, if a set of regions $R$ satisfies all the separation properties, any set of regions $R'$ 
such that $R\subseteq R'$ satisfies the separation properties.
Hence, we can add any place associated to a region to the synthesized net without changing its marking graph.
This property can be used for example to obtain sequential components in a $1$-safe system.

Let $\Sigma$ be a $1$-safe system, and $\MG(\Sigma)$ be its marking graph.
Let $R'$ be a set of regions partitioning the states in $\MG(\Sigma)$; if the elements of $R'$
are not yet in $\Sigma$, we can add them to the set of places.
For the properties of regions, the subsystem $\Sigma'$ of $\Sigma$ generated by $R'$ is a sequential component.
A common use of this properties consists in adding all the complementary places of those in the net.
It is easy to see that if a set of states is a region,
also its complement in the marking graph is,
and therefore, adding complementary places is always possible.
Since a region and its complement form a partition of the states of the transition system, the subnet 
that they generate forms a sequential component in the system. 
Hence, complementary places can be added to decompose a net system into sequential components.

\subsection{Synthesis of multi-agent systems}
\label{sec:synt-mas}
In this section we show how to apply the synthesis with region theory to models of multi-agent systems.
Our aim is two-fold. On the one hand, the representation of agents with Petri 
nets is at most large as the representation with transition systems, but, if
the same label appears multiple time in the agent, the Petri net may be smaller.
This is particularly important at the composition stage as will be
explained in detail in Sec.~\ref{sec:comp-pn}: even in the worse
case, the composed Petri net can be exponentially smaller than the
composition of transition systems; in the best case in which each
agent can be synthesized with region theory, the composed Petri net has only
one transition for each label of the multi-agent system.
On the other hand, the use of Petri nets could be a common framework
to unify the two semantics proposed in the previous section.

We have already noted in Sec.~\ref{sec:mas-tr} that in the multi-agent systems, each agent is a labeled transition system, therefore
we can use region theory to synthesize a 1-safe system for each of them.

From the semantics described in Sec.~\ref{sec:mas-pl}, we derive a transition system as follows.
Let $M_j = (X_j, I_j, \Ls_j, \Trext_j, \lambda_j, \iota_j, \ell, \Lbl)$ be a module.
$(p,v',q),(p',v,q')\in\Trext$.
We transform $M_j$ into the transition system $\Ag_j=(\Ls_j,\Evt_j,\TR_j,\iota_j)$, where $\Evt_j = \ell(\Trext)$, and
$\TR_j \subseteq \Ls_j\times \Evt_j\to \Ls_j$ such that 
$\TR_j(p,e) = q$ if there exists any $v\in D^{I_j}$ and $\Trext_j(p,v)=q$.
Note that in the $\Ag_j$ constructed in this way,
we lost the information about the synchronization between agents, whereas in the model derived from $\AMAS$, synchronizations can be derived from the labels of transitions.
This issue will be tackled in Sec. \ref{sec:comp-pn},
where we will show how to incorporate this information in the agents.

Since each agent in both the semantics can be represented as a labeled transition system, we
can apply to each agent $\Ag_j = (\Ls_j, \Evt_j, \TR_j, \iota_j)$ 
the synthesis procedure as described in the previous section, and obtain a $1$-safe system agent defined
as $\Sigma_j = (R_j, T_j, \flow_j, m_{o,j}, \ell_j)$,
such that:
\begin{itemize}
    \item $R_j$ is a set of regions in $\Ag_j$ solving all the separation problems; 
    \item 
$T_j$ is the set of transitions in the $1$-safe system, each of them is associated with one or more elements in $\TR_j$; 
    \item 
$\flow_j$ is the flow relation, fully determined by $\TR_j$ as explained in the previous section; 
\item $m_{0,j}$ is the initial 
marking, that coincides with $\iota_j$; 
    \item 
$\ell_j: T_j \rightarrow 2^{\Evt_j}$ is the  labeling function, associating 
every transition of the net with the label of the corresponding arc on the agent (formally as a singleton). 
\end{itemize}
\begin{example} 
\label{ex:TGCsem1}
Consider the $\AMAS$ in Figure~\ref{fig:tramas}.
In Figure~\ref{fig:tr2pn}, each agent has been synthesized into a Petri net. 
In this case, each agent is trivially synthesizable in an exact 
way: since each action is never repeated inside the same agent, 
each of their states is a minimal region, and the set of minimal
regions satisfy all the separation properties.
\begin{figure}
    \centering
        \begin{tikzpicture}[node distance=1.3cm,>=arrow30,line  width=0.3mm,scale=1.1,auto,bend angle=45,font=\tiny]
     \tikzstyle{place}=[draw,circle,thick,minimum size=4.5mm]
     \tikzstyle{transition}=
                [draw,regular polygon,thick,
                 regular polygon sides=4,minimum size=8mm, inner sep = -2pt]

     \node(w1)at(2,4)[color=red,place,tokens=1,label=above:$w1$]{};
     \node(a1)at(1,1)[color=red,place,tokens=0,label=left:$a1$]{};
     \node(t1)at(2,2)[color=red,place,tokens=0,label=right:$t1$]{};
     \node(n1)at(2.8,3)[color=red,transition] {$n1$};
     \node(n2)at(2.8,1)[color=red,transition] {$n2$};
     \node(n3)at(1,4)[color=red,transition] {$n3$};
     \path(n3) edge [->] (w1);
     \path(w1) edge [->] (n1);
     \path(n1) edge [->] (t1);
     \path(t1) edge [->] (n2);
     \path(n2) edge [->] (a1);
     \path(a1) edge [->] (n3);

     \node(g)at(5,4)[color=green,place,tokens=1,label=above:$g$]{};
     \node(r1)at(4.5,2)[color=green,place,tokens=0,label=left:$r1$]{};
     \node(r2)at(5.5,2)[color=green,place,tokens=0,label=right:$r2$]{};
     \node(sn1)at(4,3)[color=green,transition] {$n1$};
     \node(sn2)at(4,1)[color=green,transition] {$n2$};
     \node(sm1)at(6,3)[color=green,transition] {$m1$};
     \node(sm2)at(6,1)[color=green,transition] {$m2$};
     \path(g) edge [->] (sn1);
     \path(sn1) edge [->] (r1);
     \path(r1) edge [->] (sn2);
%     \path [->] (sn2) edge [bend left,out=150,in=60,min distance=6.5cm] (g);
     \path [->] (sn2) edge [bend right,out=-60,in=-180,min distance=1cm] (g);
     \path(g) edge [->] (sm1);
     \path(sm1) edge [->] (r2);
     \path(r2) edge [->] (sm2);
%     \path [->] (sm2) edge [bend right,out=-150,in=-60,min distance=6.5cm] (g);
     \path [->] (sm2) edge [bend left,out=60,in=180,min distance=1cm] (g);

          \node(w2)at(8,4)[color=blue,place,tokens=1,label=above:$w2$]{};
     \node(a2)at(9,1)[color=blue,place,tokens=0,label=right:$a2$]{};
     \node(t2)at(8,2)[color=blue,place,tokens=0,label=left:$t2$]{};
     \node(m1)at(7.2,3)[color=blue,transition] {$m1$};
     \node(m2)at(7.2,1)[color=blue,transition] {$m2$};
     \node(m3)at(9,4)[color=blue,transition] {$m3$};
     \path(m3) edge [->] (w2);
     \path(w2) edge [->] (m1);
     \path(m1) edge [->] (t2);
     \path(t2) edge [->] (m2);
     \path(m2) edge [->] (a2);
     \path(a2) edge [->] (m3);
     
     \end{tikzpicture}
    \caption{1-safe Petri nets for transition systems depicted in Fig.~\ref{fig:tramas}.}
    \label{fig:tr2pn}
\end{figure}
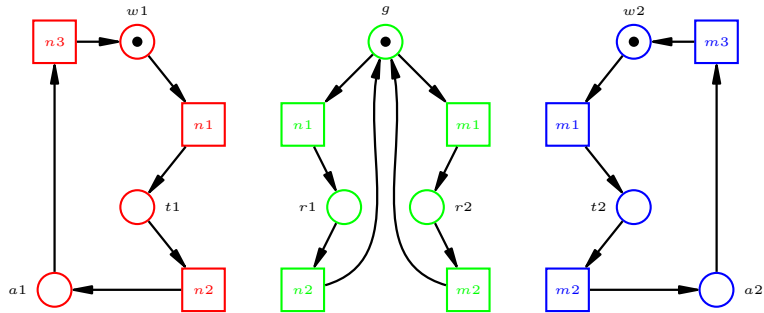

\end{example}
\begin{remark}
In the synthesis of agents, we do not need
read or inhibitor arcs,
as they can be simulated by classical arcs: If transition $t$ can only occur when place $p$ is marked, we can add two relations to F, namely $(p, t)$ and $(t, p)$. In this way, $p$ become a precondition of $t$, therefore necessary for its occurrence, and it assumes the same value after its occurrence. The inhibitor arcs can be simulated analogously by adding the complementary place of $p$. The marking graph obtained with this construction is isomorphic to the one obtained with read and inhibitor arcs. However, the semantics obtained through the two constructions are not equivalent: With only classical arcs, the agent would consume the token in place $p$, instead of just observing its value, as would happen in the case of read and inhibitor arcs. 
When we consider individual agents, we allow for the classical arcs semantics, whereas for interfaces in the data-based model we find the use of read arcs more precise. 
\end{remark}
\begin{remark}
\label{rem:non-unique}
In general, we can have different labeled 1-safe systems that are able to achieve the same transition system. Among the possible solutions, we prefer those that have fewer places. Especially, we want to avoid complementary places if they are not necessary. In this way, we increase the chance to obtain non-trivial sequential components. Another important issue is related to the situation where the considered system cannot be synthesized as a 1-safe Petri net. A possible solution is label-splitting, i.e. choosing transitions with the same label that violates a separation property and divide such a set into two parts in such a way that the considered separation property is satisfied. Naturally, it is good to minimize the number of transitions in the result, and hence also minimize the number of splittings.
\end{remark}

\section{Composition with Petri nets in the two semantics}
\label{sec:comp-pn}
In this section we show how to compose Petri net agents in two different ways, in order to obtain global systems with an equivalent behavior to those described in Sec. \ref{sec:mas-ts}. In particular, Sec. \ref{sec:tr} focuses on the synchronization on common transitions, analogously to Sec. \ref{sec:mas-tr} on transition systems, whereas Sec. \ref{sec:pn-pl} proposes the synchronization on sequential components, with the aim to produce a global system analogous to the one defined in Sec. \ref{sec:mas-pl}.
\subsection{Synchronization on common actions}
\label{sec:tr}
In this section we provide the construction of a \emph{global} net showing the 
interaction of the agents when they synchronize on transitions:

\begin{definition}
\label{def:synch_act}
Let $\Sigma_1 = (P_1, T_1, F_1, \iota_1, \ell_1), ..., \Sigma_n = (P_n, T_n, F_n, \iota_n, \ell_n)$ be the set of Petri net agents.
The action based composition of those nets is a net $\Sigma = (P, T, F, m_0, \ell)$ such that:
\begin{itemize}
    \item $P$ is the union of the sets of places $P_i$.
    \item $T = \bigcup_{\alpha\in \bigcup_{i=1,\ldots,n}\ell_i(T_i)}\bigotimes_{i \in \{1,...,n\}}T_i^\alpha$, where $T_i^\alpha = \{t\in T_i : \ell_i(t) = \alpha\}$ is the set of transitions used in the net $\Sigma_i$ and labeled with $\alpha$.
    \item The flow relation is determined as follows: for each transition $t \in T$, and each place $p\in P$ there is an arc from $p$ to $t$ if there is a $\Sigma_i$ and $t_j\in T_i$ such that $p\in P_i$, $t_j$ is a component in $t$, and $(p, t_j) \in F_i$; analogously for the arcs from $t\in T$ to $p\in P$. 
    \item The initial marking $m_0$ is the union of all the elements $\iota_i$, with  $i\in \{1,\ldots, n\}$.
    \item The labeling function $\ell$ associates every transition $t\in T$ to the  union of labels of all its components, that is a singleton by construction (as the union of positive number of equal singletons and empty sets).
\end{itemize}
We denote the alphabet $\Lambda$ of $\Sigma$ with $\ell(T)$.
\end{definition}

By the construction presented as Definition~\ref{def:synch_act}, each place in $\Sigma$ belongs at most to one agent, 
whereas the transitions can be shared. 
Note that some of the transitions may be enabled in no reachable marking, 
and therefore are not live. 
As we discussed in Sec. \ref{sec:tr}, and more in detail in \cite{AM23}, the same problem happens when we consider the composition of $\AMAS$, because some 
states may not be reachable from the initial state $\iota$, due to the synchronization 
constraints. 
\begin{figure}
    \centering
        \begin{tikzpicture}[node distance=1.3cm,>=arrow30,line  width=0.3mm,scale=1.1,auto,bend angle=45,font=\tiny]
     \tikzstyle{place}=[draw,circle,thick,minimum size=4.5mm]
     \tikzstyle{transition}=
                [draw,regular polygon,thick,
                 regular polygon sides=4,minimum size=8mm, inner sep = -2pt]

     \node(w1)at(3,4)[color=red,place,tokens=1,label=above:$w1$]{};
     \node(a1)at(2,1)[color=red,place,tokens=0,label=left:$a1$]{};
     \node(t1)at(3,2)[color=red,place,tokens=0,label=right:$t1$]{};
     \node(n1)at(3.8,3)[color=red,transition] {$n1$};
     \node(n2)at(3.8,1)[color=red,transition] {$n2$};
     \node(n3)at(2,4)[color=red,transition] {$n3$};
     \path(n3) edge [->] (w1);
     \path(w1) edge [->] (n1);
     \path(n1) edge [->] (t1);
     \path(t1) edge [->] (n2);
     \path(n2) edge [->] (a1);
     \path(a1) edge [->] (n3);

     \node(g)at(5,4)[color=green,place,tokens=1,label=above:$g$]{};
     \node(r1)at(4.5,2)[color=green,place,tokens=0,label=left:$r1$]{};
     \node(r2)at(5.5,2)[color=green,place,tokens=0,label=right:$r2$]{};
     \node(sn1)at(4,3)[color=green,transition] {$n1$};
     \node(sn2)at(4,1)[color=green,transition] {$n2$};
     \node(sm1)at(6,3)[color=green,transition] {$m1$};
     \node(sm2)at(6,1)[color=green,transition] {$m2$};
     \path(g) edge [->] (sn1);
     \path(sn1) edge [->] (r1);
     \path(r1) edge [->] (sn2);
%     \path [->] (sn2) edge [bend left,out=150,in=60,min distance=6.5cm] (g);
     \path [->] (sn2) edge [bend right,out=-60,in=-180,min distance=1cm] (g);
     \path(g) edge [->] (sm1);
     \path(sm1) edge [->] (r2);
     \path(r2) edge [->] (sm2);
%     \path [->] (sm2) edge [bend right,out=-150,in=-60,min distance=6.5cm] (g);
     \path [->] (sm2) edge [bend left,out=60,in=180,min distance=1cm] (g);

          \node(w2)at(7,4)[color=blue,place,tokens=1,label=above:$w2$]{};
     \node(a2)at(8,1)[color=blue,place,tokens=0,label=right:$a2$]{};
     \node(t2)at(7,2)[color=blue,place,tokens=0,label=left:$t2$]{};
     \node(m1)at(6.2,3)[color=blue,transition] {$m1$};
     \node(m2)at(6.2,1)[color=blue,transition] {$m2$};
     \node(m3)at(8,4)[color=blue,transition] {$m3$};
     \path(m3) edge [->] (w2);
     \path(w2) edge [->] (m1);
     \path(m1) edge [->] (t2);
     \path(t2) edge [->] (m2);
     \path(m2) edge [->] (a2);
     \path(a2) edge [->] (m3);
     
     \end{tikzpicture}
    \caption{Global Petri net model resulting from the composition of the nets depicted in Fig.~\ref{fig:tr2pn}.}
    \label{fig:tr2glpna}
\end{figure}
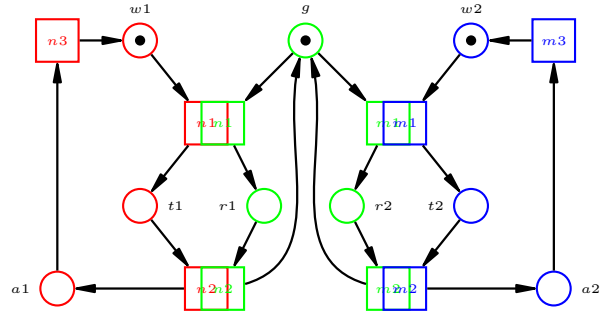

\begin{example}
\label{ex:small_parts}
Fig.~\ref{fig:tr2glpna} represents the composition
of the three agents from Fig. \ref{fig:tr2pn} based on synchronizations in transitions. 
Since in this case every agent is synthesizable, in the global 
net in Fig.~\ref{fig:tr2glpna} each transition has 
a different label. 

This is in general not the case. Consider, as an example, the Petri net in Fig.~\ref{fig:net-non-sint}, where three transitions share the label $c$. If we need to synchronize this agent with another sharing label $c$, we need to fuse all the occurrences of $c$ in the first agent, hence $c1$, $c2$ and $c3$, with all the occurrences of $c$ in the second agent.
Hence, let $n_1$ be the number of occurrences of $c$ in the first agent and $n_2$ be the number of occurrences of $c$ in the second agent, the result of the composition of the two Petri nets will have $n_1 \times n_2$ transitions labeled as $c$.  
\end{example}
Let $I_r$ be the canonical IIS where all the unreachable states and transitions have been removed. 
The following proposition shows that
synthesis and composition are commutative, 
i.e synthesizing Petri net agents from the $\AMAS$ and then composing them is equivalent to 
construct the composition of $\AMAS$ and then synthesizing a Petri net.
As already noted in Remark \ref{rem:non-unique}, the synthesis of labeled 1-safe
systems, may produce different sets of Petri net agents. However, for all of
them the commutativity holds.
\begin{proposition}
Let $S = (\Ag_1,...,\Ag_n)$ be an $\AMAS$,
$\Sigma_1,...,\Sigma_n$ be a set of Petri
net agents such that for each $i\in\{1,...,n\}$, $\Sigma_i$ is a 1-safe system, possibly labeled, obtained through the synthesis of $\Ag_i$,
$\Sigma = (P, T, F, m_0, \ell)$ be global Petri net 
constructed as described above, and $I$ the canonical IIS of $S$. 
The marking graph of $\Sigma$ is isomorphic to $I_r$.
\end{proposition}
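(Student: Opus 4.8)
The plan is to build an explicit bijection $\phi$ between the reachable markings of $\Sigma$ and the reachable states of $I_r$ (equivalently, the set $St$ restricted to reachable states), and then check that $\phi$ respects the labeled transition relations in both directions. The natural candidate is the projection map: a marking $M \subseteq P$ of $\Sigma$ restricts, for each agent $i$, to a marking $M|_{\Sigma_i} \subseteq R_i$ of the synthesized net $\Sigma_i$; since $\Sigma_i$ is a sequential component (its marking graph is isomorphic to $\Ag_i$), $M|_{\Sigma_i}$ corresponds via the synthesis isomorphism to a unique local state $l_i \in \Ls_i$, and we set $\phi(M) = (l_1,\ldots,l_n)$. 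I would first argue that this is well defined on reachable markings — i.e. that every reachable $M$ of $\Sigma$ restricts to a reachable, hence singleton-modulo-complementary-places, marking of each $\Sigma_i$ — using the fact that the $P_i$ are pairwise disjoint (Definition~\ref{def:synch_act}) and that firing a transition $t$ of $\Sigma$ projects onto either a firing of its component $t_j \in T_i$ in $\Sigma_i$ (when $\ell_i$ contains the label of $t$) or the identity on $\Sigma_i$ otherwise.

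\textbf{Key steps.}
\emph{Step 1 (projections are local firings).} Fix a transition $t = (t_1,\ldots,t_n) \in T$ with label $\alpha$, where each $t_i \in T_i^\alpha$ or $t_i = \varepsilon$. By the flow relation in Definition~\ref{def:synch_act}, $\pre{t} \cap P_i = \pre{t_i}$ and $\post{t}\cap P_i = \post{t_i}$ for those $i$ with $t_i \neq \varepsilon$, and $\pre{t}\cap P_i = \post{t}\cap P_i = \emptyset$ otherwise. Hence $M[t\rangle M'$ in $\Sigma$ holds iff for every $i$ either $M|_{\Sigma_i}[t_i\rangle M'|_{\Sigma_i}$ in $\Sigma_i$, or ($\alpha \notin \ell_i(T_i)$ and $M|_{\Sigma_i} = M'|_{\Sigma_i}$). \emph{Step 2 (matching with the IIS transition function).} Transport this through the synthesis isomorphism $\MG(\Sigma_i) \cong \Ag_i$: $M|_{\Sigma_i}[t_i\rangle M'|_{\Sigma_i}$ becomes $\TR_i(l_i,\alpha) = l'_i$, and the identity case becomes $l_i = l'_i$ with $\alpha \notin \Evt_i$. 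Comparing with the definition of the global $\TR$ in the canonical IIS, this says exactly $\TR(\phi(M),\alpha) = \phi(M')$. So $\phi$ carries arcs of $\MG(\Sigma)$ to arcs of $I$ with the same label. \emph{Step 3 (surjectivity onto $I_r$ and injectivity).} By induction on firing sequences, starting from $m_0 = \bigcup_i \iota_i$ which maps to $\iota = (\iota_1,\ldots,\iota_n)$, the image of $\phi$ is precisely the set of states of $I$ reachable from $\iota$, i.e. the states of $I_r$; conversely every reachable state of $I_r$ is hit because each global transition of $I$ with label $\alpha$ lifts — pick for each $i$ with $\alpha \in \Evt_i$ the unique $t_i \in T_i^\alpha$ enabled at $M|_{\Sigma_i}$ (uniqueness from sequentiality of $\Sigma_i$ and determinism of $\TR_i$), assemble $t = (t_i)_i \in T$. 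Injectivity of $\phi$ follows from injectivity of each synthesis isomorphism together with disjointness of the $P_i$ (so $M$ is determined by the tuple $(M|_{\Sigma_i})_i$, modulo complementary places which are themselves determined by the region structure).

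\textbf{Main obstacle.}
The delicate point is \emph{surjectivity of the lifting in Step 3 and the converse direction of the arc correspondence}: I must show that whenever $\TR(\phi(M),\alpha)$ is defined, the assembled transition $t = (t_i)_i$ is actually enabled in $M$ in $\Sigma$ (not just componentwise consistent), and that no spurious transition of $\Sigma$ fires that has no IIS counterpart. The first is handled by the componentwise enabling characterization of Step 1 together with the observation that $\TR(\phi(M),\alpha)$ defined means each relevant $\TR_i(l_i,\alpha)$ is defined, hence each $t_i$ is enabled at $M|_{\Sigma_i}$, hence $t$ is enabled at $M$; the second is vacuous because every $t \in T$ has label in some $\ell_i(T_i)$ and its firing forces the corresponding local moves. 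A secondary subtlety is bookkeeping when the synthesis of some $\Ag_i$ required label-splitting (Remark~\ref{rem:non-unique}) or added complementary places: there $T_i^\alpha$ may contain several transitions and $R_i$ several regions per state, but since $\MG(\Sigma_i) \cong \Ag_i$ by construction regardless of these choices, the argument goes through verbatim — which is exactly the uniformity claimed in the paragraph preceding the proposition. I expect Steps 1 and 2 to be routine unwinding of the definitions, with the real writing effort concentrated on making Step 3's reachability induction and the lifting-enabling lemma precise.
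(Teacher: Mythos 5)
Your proof is correct and follows essentially the same route as the paper's: both exploit the disjoint partition of the global places into the agents' places and the isomorphisms $\MG(\Sigma_i)\cong\Ag_i$, and then match enabled labels and successor states inductively over the reachable part, your version merely making the bijection (projection composed with the local synthesis isomorphisms) explicit. Two small slips do not affect the argument: $\Sigma_i$ need not be a sequential component (only $\MG(\Sigma_i)\cong\Ag_i$ is used), and after label-splitting the enabled $t_i\in T_i^\alpha$ need not be unique --- but any choice yields the same successor marking, which is all the label-indexed marking-graph isomorphism requires.
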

\begin{proof}
The proof is based on the fact that the set
of places $P$ of the global net can be partitioned into
the places of the agents $\Sigma_i$, and each of them
has a marking graph isomorphic to an agent $\Ag_i$.

We first show that the set of labels of the
transitions enabled in $m_0$ in $\Sigma$ is
the same as the set of labels of the transitions
occurring in the initial state $\iota$ of $I_r$, 
and that for each label $\alpha$, the cardinality of the set of transitions
associated to $\alpha$ is the same.

Assume that $\iota$ enables a transition with label $\alpha$.
Let $\Ag_\alpha = \{\Ag_j : j\in J\subseteq \{1,...,n\}\}$ be the set of agents in the $\AMAS$ with $\alpha$ in their alphabet.
All the $\Ag_j\in \Ag_\alpha$ must
enable a transition labeled with $\alpha$
in their initial state $\iota_j$.
By contradiction, assume that no transition
labeled with $\alpha$ is enabled in $m_0$;
since $m_0$ is the union of the initial states of the $\Sigma_i$,
there must be a $\Sigma_i$ with $\alpha$ in its alphabet
that does not enable any transition labeled with $\alpha$ in its initial state.
This is impossible, since by construction, 
$\MG(\Sigma_i)$ is isomorphic to  $\Ag_i\in \Ag_\alpha$.
Similarly, if there is a transition with a
label $\alpha$ enabled in $m_0$, then each
Petri net agent having it in its alphabet must have it enabled 
in the initial state, since by construction,
the marking graph of the Petri net agents are isomorphic
to the agents in $S$, $\alpha$ must be enabled in
the initial state also in $I_r$.

For each label $\alpha$, the number of transitions 
labeled $\alpha$ and enabled in $\iota$
is the product of the numbers of transitions
enabled in the initial states of each agent in $\Ag_\alpha$;
since the marking graphs of the Petri net agents
are isomorphic to the agents in $\Ag_\alpha$,
in $m_0$ the same number of transitions labeled with $\alpha$ is enabled.

Next, we show that two transitions bring to different markings in $\Sigma$,
iff they bring to different states in $I_r$.
Assume that two transitions $t_1$, $t_2$
enabled in $\iota$
arrive in different states of $I_r$, then by construction,
there is at least an agent $\Ag_i$ participating in the action
and such that the local state after the occurrence of $t_1$
differs from the local state after the occurrence of $t_2$.
This must happen also in $\Sigma_i$, 
since its marking graph is isomorphic to $\Ag_i$,
therefore the markings reached from $m_0$ in $\Sigma$ differs at 
least for the places belonging to $\Sigma_i$.
Analogously, if $t_1$ and $t_2$ lead from $\iota$ to the same state in $I_r$, 
then for all the agents participating in them, 
the local states after $t_1$ and $t_2$ must be the same,
and this is true also for all the $\Sigma_i$ participating in $t_1$ and $t_2$, and therefore also for their union.

This same reasoning can be applied recursively to the states reached from the initial marking,
until considering all the reachable states.
With the same argument we can also show that a cycle is closed on the marking graph of $\Sigma$
iff it is closed in $I_r$.
\end{proof}
Note that the result stated in the proposition above holds
for any 1-safe labeled system  synthesized
from a transition system describing behaviors of particular agents and their compositions.

\subsection{Synchronization based on sequential components}
\label{sec:pn-pl}
In the case of the data-driven synchronization, the synthesis of the single module is more sophisticated. 
We have already described how to obtain a net from the underlying transition system; here we show which additional information we need to incorporate in the agents. 
To improve readability, we divide the procedure into stages.

First, let us consider a module $M=(X,I,\Ls,\Trext,\lambda,\iota,\ell)$, as in the Definition~\ref{d:module}. 
We use the synthesis procedure based on the region theory and described in Section~\ref{sec:synt-mas} obtaining a 1-safe Petri net $\Sigma = (R, T, \flow, \iota, \ell)$.
Note that, at this stage, we forget about the values of the variables (both internal and external). In the following stages we need to restore them.

We start by adding internal variables of $M$ to the net $\Sigma$.
Let us consider a variable $x\in X$ and all of the possible valuations of this variable $val_x = \{v_1,\ldots,v_n\}$. 
First we note that the set of local states
$\Ls_{x_i} = \{q\in\Ls \;|\; \lambda(q)(x) = v_i\}$ 
forms a region in $\Ag$ and we can add a fresh place called $x_i$ to $\Sigma$ (modifying $F$ and $\iota$ accordingly). 
Indeed, directly by the construction, if $t1,t2\in T$ and
$\ell(t1) = \ell(t2) = a \in \Evt$ and $t1$ or $t2$ 
is enabled at $p,q\in\Ls$ then
$\lambda(p) = \lambda(q)$ (hence also $\lambda(p)(x) = \lambda(q)(x)$).
Similarly, if $t1,t2\in T$ and
$\ell(t1) = \ell(t2) = a \in \Evt$ and $t1$ or $t2$ 
lead to some $p,q\in\Ls$ then
$\lambda(p) = \lambda(q)$ (hence also $\lambda(p)(x) = \lambda(q)(x)$).
Hence every transition labeled with $a$ either enters $\Ls_{x_i}$,
or exits from $\Ls_{x_i}$ or is independent with $\Ls_{x_i}$.
Moreover, the family $\Ls_x = \{\Ls_{x_i} \;|\; i\in val_x\}$ forms 
a partition of $\Ls$ into regions. 
Hence, as discussed in Sec.~\ref{sec:synt}, $\Ls_x$ is a sequential component of
$\Sigma' = (R\cup \{x_1,\ldots,x_n\}, T, \flow', \iota', \ell\})$
and adding those places does not change the behavior of $\Sigma$
(i.e. the reachability graphs of $\Sigma$ and $\Sigma'$ are equivalent).

We repeat the procedure for all internal variables of $M$ 
obtaining $\Sigma_{var} = (P_{var}, T, \flow_{var}, \iota_{var}, \ell)$:

\begin{definition}
\label{def:netwithvars}
Let $\Sigma = (R, T, \flow, \iota, \ell)$ be a 1-safe Petri net that is an effect of synthesis of module $M=(X,I,\Ls,\Trext,\lambda,\iota_M,\ell_M)$.
We construct $\Sigma_{var} = (P_{var}, T, \flow_{var}, \iota_{var}, \ell)$, where
\begin{itemize}
    \item $P_{var} = R \cup \bigcup_{x\in X} \{x_1,\ldots,x_{n_x}\}$, where $n_x$ is a maximal value that the variable $x$ can get.
    \item $\flow_{var}$ is $\flow$ enriched in the way described above (by changes of the values related to the introduced regions).
    \item $\iota_{var}(p) = \iota(p)$ for $p\in R$, $\iota_{var}(x_{\iota_M(x)})=1$, and $\iota_{var}(p)=0$ otherwise.
\end{itemize}
\end{definition}

\begin{remark}
In some cases the places representing the value 
of internal variable may have already been added 
to the $1$-safe net during the synthesis phase 
described in Sec.~\ref{sec:synt-mas}. 
If this is the case, we can refine the net $\Sigma_{var}$ by not adding the new places,
and just noting which place represents a certain value of an internal variable.

\end{remark}

In the next stage, we will add external variables to mimic the operation of the modeled system by restricting the enabledness of the transitions from $T_{var}$.

Consider all suitable external variables $y_i\in I$ and all possible valuations of those variables $val_{y_i} = \{w_{i,1},\ldots,w_{i,m_i}\}$ and construct places $P_{ext}=\{y_{i,j} \;|\; i\in I \wedge 1\leq j\leq m_i\}$ similarly to the case of internal variables. We also allow for a free circulation of tokens
among different values of the same variable by defining
$T_{ext} = \{t_{i,j,k} \;|\; i\in I \wedge 1\leq j,k \leq m_i \wedge j\neq k\}$ and $\flow_{ext} = \{(y_{i,j},t_{i,j,k}) \;|\; y_{i,j}\in P_{ext} \wedge t_{i,j,k}\in T_{ext}\} \cup \{(t_{i,j,k},y_{i,k}) \;|\; y_{i,k}\in P_{ext} \wedge t_{i,j,k}\in T_{ext}\}$.
To leave the original names of the transitions that change the values of the internal variables, we take $\ell(t)=\emptyset$ for all transitions related to the external variables. Recalling Example~\ref{ex:autofusion}, we can define separate components for all external variables - they will be properly merged in case of multiple use of internal variables. Note that the ranges of variability must match.

We need to guarantee that transitions from $T_{var}$ are enabled only in the favorable circumstances.
Namely, only if the valuation of additional places corresponds
to one of the conditional transitions from $\Trext$.
Usually there are many solutions for this goal, but the common problem is the valuation of the external variables in the initial state of closed MAS. For this purpose we define a parameterized and straight-forward solution as follows. 

Let us consider a transition $t\in T_{var}$ and all states $\Ls_t$
at which $t$ is enabled in $\Sigma_{var}$. 
We check whether there are $i \in I$ from which $t$ is independent,
namely $t$ can occur for any value of $i$.
For each $t$, we call $I_{\lnot t}$ the set of external variables satisfying this 
property, $I_d = I \setminus I_{\lnot t}$, and $D^{I_d}$ the set of evaluations of the variables in $I_d$.
By $val_t = \{val\in D^{I_d} \;|\; \ell(q,val,p)=t\}$, we define the set of all admissible  valuations of external variables, for a given $t$.

We define a Petri net $\Sigma_{int}^{init}$ with read arcs, where $init\in D^I$. 

\begin{definition}
\label{d:modulef}
    Let $M=(X,I,\Ls,\Trext,\lambda,\iota_M,\ell_M)$ be a module,
    $init\in D^I$
    and 
    $\Sigma_{var} = (P_{var}, T, \flow_{var}, \iota_{var}, \ell)$ be a Petri net
    obtained as described in Definition~\ref{def:netwithvars}.
    We define $\Sigma^{init}_{int} = (P_{int}, T_{int}, \flow_{int}, \read, \iota_{int}, \ell_{int})$, where:
    \begin{itemize}
        \item $P_{int} = P_{var}\cup P_{ext}$.
        \item $\iota_{int} = \iota_{var}\cup\{y_{i,init(y_i)} \;|\; y_i\in I\}$.
        \item $T_{int} = T_{ext}\cup\bigcup_{t\in T_{var}}\;T_t$, where
        $T_t = \{t_{val} \;|\; val \in val_t\}$.
        \item $\flow_{int} = \flow_{ext}\cup\{(p,t_{val}) | (p,t)\in \flow_{var}\wedge val\in val_t\} \cup\{(t_{val},p) | (t,p)\in \flow_{var}\wedge val\in val_t\}$.
        \item $\ell_{int}(t_{val}) = \ell_{var}(t)$ for $t\in T_{ext}$ and $\ell_{int}(t)=\emptyset$ for $t\in T_{ext}$.
        \item $\read = \{(y_{i,j},t_{val}) \;|\; t_{val}\in T_{int} \wedge val(y_i) = j\}$.
    \end{itemize}
\end{definition}

Note that in the constructed net $\Sigma_{int}$ we can simulate all the computations of $\Sigma$ by  
tuning the external part of the net between each consecutive transitions from $T$.

For each external variable $w$, the places in
$P_{ext}$ generate a sequential component $\Sigma_w$ of $\Sigma_{int}$. For each $\Sigma_w$ we assume
\emph{super-fair randomness}, namely, let
$T_w$ be the set of transitions in $\Sigma_w$ and $t\in T_w$; $t$ needs to occur infinitely often
in $\Sigma_{int}$.
\begin{example}
The net $\Sigma_{int}$ from Fig.~\ref{fig:tr2glpnb} refers to agent $\Sigma_1$ of Fig. \ref{fig:tr2pn}, closed only with respect to the controller.
Note that $n1$ depends only on the variable $r1$, and it can occur only when for a single value of the variable (since $r1$ can take only $2$ values), therefore we do not need to split the transition further. 
Similarly, neither $n3$ or $n2$ need to be split, since they are independent from all the external variables.

\begin{figure}
    \centering
        \begin{tikzpicture}[node distance=1.3cm,>=arrow30,line  width=0.3mm,scale=1.1,auto,bend angle=45,font=\tiny]
     \tikzstyle{place}=[draw,circle,thick,minimum size=4.5mm]
     \tikzstyle{transition}=
                [draw,regular polygon,thick,
                 regular polygon sides=4,minimum size=8mm, inner sep = -2pt]

     \node(w1)at(2,4)[color=red,place,tokens=1,label=above:$w1$]{};
     \node(nw1)at(1.7,3.3)[color=red,place,tokens=0,label=below:$\lnot w1$]{};
     \node(a1)at(1,1)[color=red,place,tokens=0,label=left:$a1$]{};
     \node(na1)at(1.5,1.5)[color=red,place,tokens=1,label=right:$\lnot a1$]{};
     \node(t1)at(2,2)[color=red,place,tokens=0,label=right:$t1$]{};
     \node(nt1)at(3.6,2)[color=red,place,tokens=1,label=left:$\lnot t1$]{};
     \node(n1)at(2.8,3)[color=red,transition] {$n1$};
     \node(n2)at(2.8,1)[color=red,transition] {$n2$};
     \node(n3)at(1,4)[color=red,transition] {$n3$};
     \path(n3) edge [->] (w1);
     \path(nw1) edge [->] (n3);
     \path(w1) edge [->] (n1);
     \path(n1) edge [->] (nw1);
     \path(n1) edge [->] (t1);
     \path(t1) edge [->] (n2);
     \path(n2) edge [->] (a1);
     \path(na1) edge [->] (n2);
     \path(a1) edge [->] (n3);
     \path(n3) edge [->] (na1);
     \path(nt1) edge [->] (n1);
     \path(n2) edge [->] (nt1);

     %\node(g)at(5,3.5)[color=green,place,tokens=1,label=right:$g$]{};
     \node(r1)at(5,2)[color=green,place,tokens=0,label=right:$r1$]{};
     %\node(r2)at(9,3.5)[color=green,place,tokens=0,label=right:$r2$]{};
     %\node(ng)at(5,1.5)[color=green,place,tokens=0,label=right:$\lnot g$]{};
     \node(nr1)at(6,2)[color=green,place,tokens=1,label=right:$\lnot r1$]{};
     %\node(nr2)at(9,1.5)[color=green,place,tokens=1,label=right:$\lnot r2$]{};
     %\node(gng)at(4.5,2.5)[color=green,transition] {$\emptyset$};
     %\node(ngg)at(5.5,2.5)[color=green,transition] {$\emptyset$};
     \node(r1nr1)at(5.5,3)[color=green,transition] {$\emptyset$};
     \node(nr1r1)at(5.5,1)[color=green,transition] {$\emptyset$};
     %\node(r2nr2)at(8.5,2.5)[color=green,transition] {$\emptyset$};
     %\node(nr2r2)at(9.5,2.5)[color=green,transition] {$\emptyset$};
     %\node(sm2)at(6,1)[color=green,transition] {$m2$};
     %\path(g) edge [->] (gng);
     %\path(gng) edge [->] (ng);
     %\path(ng) edge [->] (ngg);
     %\path(ngg) edge [->] (g);
     \path(r1) edge [->] (r1nr1);
     \path(r1nr1) edge [->] (nr1);
     \path(nr1) edge [->] (nr1r1);
     \path(nr1r1) edge [->] (r1);
     %\path(r2) edge [->] (r2nr2);
     %\path(r2nr2) edge [->] (nr2);
     %\path(nr2) edge [->] (nr2r2);
     %\path(nr2r2) edge [->] (r2);
     \path [->,>=read] (r1) edge [bend right] (n1);  
     \path [->,>=read] (r1) edge [bend left] (n2);  
     %\path [->,>=read] (ng) edge [bend left=80] (n1);  
     \end{tikzpicture}
    \caption{The net $\Sigma_{int}$.}  
    \label{fig:tr2glpnb}
\end{figure}
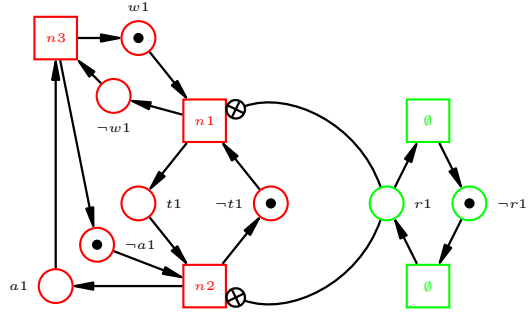

\end{example}

\begin{proposition}
\label{prop:iso-agents}
The marking graph of $\Sigma_{int}$ is isomorphic to the underlying graph of the closed module under the assumption of super-fair randomness.
\end{proposition}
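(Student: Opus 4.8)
The plan is to exhibit an explicit label-preserving isomorphism $\phi$ from the marking graph $\MG(\Sigma_{int})$ (with $\Sigma_{int}=\Sigma_{int}^{init}$) onto the reachable part of the underlying graph of the closed module $M'=(X\cup I,\emptyset,\Ls\times D^I,\Trext',\lambda',\iota',\ell')$, where $M'$ is the universal closure of $M$ with initial valuation $v_{init}:=init$, and then to check that $\phi$ and $\phi^{-1}$ carry arcs to arcs with matching labels.

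First I would record the structure of a marking of $\Sigma_{int}$. Since $X\cap I=\emptyset$ we have $P_{int}=P_{var}\uplus P_{ext}$; the only flow arcs incident to $P_{ext}$ come from $\flow_{ext}$, so for every external variable $y_i$ the places $\{y_{i,1},\dots,y_{i,m_i}\}$ together with $T_{ext}$ form a sequential component holding exactly one token (initially on $y_{i,init(y_i)}$), while the transitions in $T_t$ only \emph{read} those places. Hence each reachable marking $m$ decomposes uniquely as $(m|_{P_{var}},v_m)$, with $v_m\in D^I$ the valuation encoded by the $P_{ext}$-tokens. As firing $t_{val}\in T_t$ acts on $P_{var}$ exactly as firing $t$ in $\Sigma_{var}$ while the $T_{ext}$-transitions leave $P_{var}$ untouched, any firing sequence of $\Sigma_{int}$ projects (erase the $T_{ext}$-steps, replace each $t_{val}$ by $t$) to a firing sequence of $\Sigma_{var}$, so $m|_{P_{var}}$ is reachable in $\Sigma_{var}$. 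Using that $\MG(\Sigma_{var})$ and $\MG(\Sigma)$ coincide and are isomorphic, via some $\psi$, to the transition system $\Ag=(\Ls,\Evt,\TR,\iota_M)$ derived from $M$ in Sec.~\ref{sec:synt-mas}, I set $\phi(m)=(\psi(m|_{P_{var}}),v_m)$, which is a state of $M'$.

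Injectivity of $\phi$ follows from injectivity of $\psi$ and from $P_{var}\uplus P_{ext}$, and $\phi(\iota_{int})=(\iota_M,init)=\iota'$. Surjectivity onto the reachable states of $M'$ needs an argument: given $(x,v)$ with $x$ reachable in $\Ag$ (which is forced as soon as $(x,v)$ is reachable in $M'$, since the projection of $M'$ to the first coordinate maps arcs to $\Ag$-arcs or self-loops), pick a firing sequence of $\Sigma_{var}$ reaching $\psi^{-1}(x)$ and lift it to $\Sigma_{int}$ by, before each step $t$, firing $T_{ext}$-transitions to drive the external part to a valuation whose restriction to the dependent variables $I_d$ of $t$ lies in $val_t$ — possible because each external component is strongly connected via $T_{ext}$ and $val_t\neq\emptyset$ — then firing $t_{val}$, which is enabled since its $P_{var}$-preconditions are those of $t$ and its read conditions are met by the chosen external state; a final block of $T_{ext}$-moves brings the external part to $v$. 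For the arc correspondence I would split on the kind of transition of $\Sigma_{int}$: a transition $t_{i,j,k}\in T_{ext}$ is enabled at $m$ iff $v_m(y_i)=j$ and merely moves the token to $y_{i,k}$, so it realises exactly the single-variable-change arcs $((x,v),(x,w))$ of $\Trext'$ and has label $\emptyset$ on both sides; a transition $t_{val}\in T_t$ is enabled at $m$ iff $t$ is enabled in $\Sigma_{var}$ at $m|_{P_{var}}$ and $v_m|_{I_d}=val$, and then the module labeling axioms (equal labels force equal $\lambda$-values of source and of target and coherent behaviour of $\Trext$) together with the definition of $I_{\lnot t}$ as the variables $t$ is insensitive to show that this is equivalent to $\Trext(x,v_m)$ being defined, the move going $(x,v_m)\to(\Trext(x,v_m),v_m)$, which is a type-(a) arc of $\Trext'$ with label $\ell'(x,v_m,\Trext(x,v_m))=\ell_{var}(t)$. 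As both $\MG(\cdot)$ and $\Trext'$ are taken as relations, parallel equilabelled arcs need no separate treatment, and $\phi$ is the desired isomorphism of labelled transition systems.

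Two comments on scope. The super-fair randomness hypothesis is not actually needed for the isomorphism of graphs: any reachable marking is reached by a finite firing sequence, which extends to a super-fair run, so the hypothesis removes nothing from $\MG(\Sigma_{int})$; it is stated because $\Sigma_{int}$ is always considered under it and because it is what makes the \emph{infinite runs} of $\Sigma_{int}$ match those of the maximally-liberal closed module. I expect the real obstacle to be the bookkeeping forced by label-splitting: a single $t\in T_{var}$ may be associated with several $e$-labelled arcs of $\Ag$, each in turn with several conditional transitions of $\Trext$, so one must first pin down that the partition $I=I_d\uplus I_{\lnot t}$ and the set $val_t$ are well defined and that $T_t=\{t_{val}\mid val\in val_t\}$ enumerates precisely the external contexts enabling the underlying move of $M$; once this is settled, both the reachability lift in the surjectivity step and the arc correspondence are routine.
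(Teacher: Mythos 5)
Your proof is correct and follows essentially the same route as the paper's: both rest on the synthesis isomorphism $\MG(\Sigma_{var})\cong$ the underlying transition system of $M$, and then match the three ingredients of the universal closure (free single-variable changes, initial valuation, conditional transitions) with $T_{ext}$, $\iota_{int}$, and the split transitions $T_t$ with their read arcs. Your write-up merely makes explicit the marking decomposition $(m|_{P_{var}},v_m)$ and the resulting bijection, where the paper only sketches these steps.
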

\begin{proof}
By construction, the marking graph of $\Sigma_{var}$ is isomorphic to the 
underlying transition system of the open module $M$,
since it is obtained from the synthesis with regions.
We need to proof that the subsequent operations reproduce on the Petri net 
the closure operation for modules, in terms of its sequential behavior.
Let $\{((x,v),(x,w)) \;|\; v(i)\neq w(i) \wedge \forall_{j\neq i} v(j)=w(j)\} \subseteq \Trext'$
be the set of transition as in Def.~\ref{d:cl_module}; in the Petri net, this set of transitions
is reproduced by the elements in $T_{ext}$.
The initial marking of $\Sigma$ and of the closed model is isomorphic by construction,
since we can assign it in the same way for both.
Finally, the splitting of each transition $t$ into the set $T_t$ in the net and the read
arcs connecting them to places associated to external variables ensure that 
for each external evaluation allowing for the transition there is a transition
in the marking graph, and vice versa.
\end{proof}

Given a set of Petri net agents $\Sigma_{1, int},...,\Sigma_{n, int}$
derived as described above, we can construct
the global Petri net by synchronizing the system on the common sequential components.
By construction, for each agent $\Sigma_{i,int}$,
for each sequential component $\Sigma_{i,x, ext}$ referring to an external variable $x$ in $\Sigma_{i,int}$, 
there is an agent $\Sigma_{j,int}$ such that 
$x$ is an internal variable for $\Sigma_{j, int}$. Therefore, there is a sequential component
$\Sigma_{j,x,val}$ in $\Sigma_{j, val}$
such that the places in $\Sigma_{i,x,ext}$ and $\Sigma_{j,x,val}$ overlap, and we can apply the
composition through sequential components defined in Sec.~\ref{sec:pn}.
We have all the possible transitions between different values of any external variable that match with utilized transitions that change the values of the internal variable; hence
the $1$-safe system obtained by synchronizing all the external components with internal sequential components of the other modules is
the global model of the multi-agent Petri net system.

\begin{proposition}
Let $\Sigma$ be the global net with synchronizations on sequential components 
as described above, and $G$ be the closure of the global model obtained through the synchronization
of reactive modules. The marking graph of $\Sigma$ is isomorphic to the reachable part of $G$.
\end{proposition}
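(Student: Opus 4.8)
The plan is to lift the per-agent correspondence of Proposition~\ref{prop:iso-agents} along the fusion of sequential components, matching the combinatorics of $\propto_\rho$ with the data-synchronisation rules $\mathbf{ASYN}_L$, $\mathbf{ASYN}_R$ and with the closure of Definition~\ref{d:cl_module}. Because module composition is associative, I would add the agents one at a time, correspondingly fusing the sequential components of the newly added $\Sigma_{k+1,int}$ with those of the agents already present (including, for a variable read by several agents, identifying their free-circulation components as in Example~\ref{ex:autofusion}), and keeping as invariant that the marking graph of the current partial net is isomorphic to the reachable part of the closure of the corresponding partial module composition. The base case is exactly Proposition~\ref{prop:iso-agents}, and the final step yields $\Sigma$ and $G$.

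\textbf{States.} By construction every place of $\Sigma$ lies in a (possibly renamed) sequential component of some $\Sigma_{i,int}$, so by the projection of markings defined above a marking $M$ of $\Sigma$ restricts to a marking $M|_{\Sigma_{i,int}}$ of each agent net, and $M$ is recovered from the family $(M|_{\Sigma_{i,int}})_i$ as soon as its members agree on the fused places. By Proposition~\ref{prop:iso-agents}, each $M|_{\Sigma_{i,int}}$ names a state $(q_i,v_i)$ of the $i$-th closed module, i.e. a local state $q_i\in\Ls_i$ together with a valuation $v_i\in D^{I_i}$. The first step is to check that the fused places force \emph{coherence}: if $x$ is owned by $j$ and read by $i$, then the places recording the value of $x$ in $\Sigma_{j,var}$ have been identified with the external $x$-places of $\Sigma_{i,int}$, whence $v_i(x)=\lambda_j(q_j)(x)$; and if $x$ belongs to the residual external alphabet $I$, all agents reading it agree on its value. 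A coherent family is then precisely a global state $\big((q_1,\dots,q_n),v\big)$ with $v\in D^{I}$, that is, a state of $G$; and the initial marking of $\Sigma$ corresponds to the initial state of $G$ once the per-agent closure parameters are chosen compatibly with $v_{init}$ and with the owners' initial valuations, which is exactly the condition that makes all the fusions well defined. This yields a bijection $\phi$ from the markings of $\Sigma$ onto a set of states of $G$ containing $\iota_G$.

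\textbf{Edges.} By the definition of $\propto_\rho$, a transition of $\Sigma$ is either (i) a split transition $t_{val}$ of some $\Sigma_{i,int}$ (Definition~\ref{d:modulef}), merged with the free-circulation transitions of every agent that reads an internal variable of $\Sigma_i$ modified by $t_{val}$; or (ii) a free-circulation transition of a component attached to a residual variable $x\in I$. In case (i), after fusion the activation arcs of $t_{val}$ point at the value-places held by the owners, so $t_{val}$ is enabled at $M$ iff for each variable it reads the value recorded in $M$ is the one prescribed by $val$; by Proposition~\ref{prop:iso-agents} this is the condition under which the module transition $q_i\xrightarrow{\alpha_i}_{\Trext_i}q'_i$ of $M_i$ fires, while the coherence of $M$ is exactly the compatibility premise ($\alpha_1\sim\alpha_2$, $\lambda_k(q_k)\sim\alpha_i$, $\lambda_i(q_i)\sim\alpha_k$) of $\mathbf{ASYN}_L$/$\mathbf{ASYN}_R$ for the move of agent $i$; and firing $t_{val}$ moves agent $i$ from $q_i$ to $q'_i$, updates in every reader the recorded value of the changed $X_i$-variables to $\lambda_i(q'_i)$, and leaves every other local state fixed, which is the conclusion of the rule. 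In case (ii), firing changes a single residual variable by one value, i.e. it realises one of the transitions $\{((x,v),(x,w))\mid v(i)\neq w(i)\wedge\forall_{j\neq i}v(j)=w(j)\}$ added by the closure. The converse is symmetric, using that each application of $\mathbf{ASYN}_L$/$\mathbf{ASYN}_R$ lifts, through the per-agent isomorphisms, to an enabled split transition of $\Sigma$ and each closure step to an enabled free-circulation transition. Since $\phi$ maps the initial marking to $\iota_G$ and preserves and reflects edges, an induction on the length of firing sequences shows that $\phi$ restricts to an isomorphism between $\MG(\Sigma)$ and the part of $G$ reachable from $\iota_G$; the unreachable states of $G$ (which, as for data synchronisation in general, may exist) are exactly those outside the image of $\phi$.

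\textbf{Main obstacle.} The technical heart is the bookkeeping of $\propto_\rho$: for an owner transition that changes several of its variables read by several distinct agents, one must pin down which free-circulation transitions are merged with it and verify that the resulting flow, activation relation and labelling reproduce exactly one application of $\mathbf{ASYN}_L$/$\mathbf{ASYN}_R$ together with the ``stay put'' effect on the non-moving agents. Keeping the initial markings consistent across all fusions — so that they agree on the value-places and the global initial state is the one prescribed by $v_{init}$ — is the other point that needs care.
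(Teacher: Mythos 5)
Your proposal is correct and follows essentially the same route as the paper: reduce to pairwise composition via associativity, use Proposition~\ref{prop:iso-agents} as the per-agent base correspondence, match initial states, enabled transitions and their effects, and conclude by induction over reachable states. Your sketch is in fact more explicit than the paper's own argument (notably in the coherence analysis of fused value-places and the case split between $t_{val}$ and free-circulation transitions), but it is the same proof idea.
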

\begin{proof}
We show that the proposition holds for two agents.
Since the composition of $n$ agents can be seen as the composition of the first $n - 1$ agents with the $n$ agent,
the proof extends to the composition of any number of agents.

Let $M_1$ and $M_2$ be two compatible modules, and $M$ their composition, and
$\Sigma_1 = (P_1, T_1, \flow_1, \read_1, m_{0,1}, \ell_1)$ and $\Sigma_2 = (P_2, T_2, \flow_2, \read_2, m_{0,2}, \ell_2)$ 
be two Petri net agents derived from $M_1$ and $M_2$ as described above, 
with $\Sigma$ being their composition on sequential components.
The transition system $G$ and $\MG(\Sigma)$ have the same initial states by construction.
In $\Sigma$, the enabled transitions are those enabled in $m_{0,1}$ and $m_{0,2}$ 
that do not depend on external variables, those enabled in $m_{0,1}$ such that $m_{0,2}$
allows for their execution and those enabled in $m_{0,2}$ such that $m_{0,1}$ allows
for their execution.
In $G$ for each of these transitions, there must be an enabled transition with the same label, since by Prop. \ref{prop:iso-agents} the transitions are in $M_1$ and $M_2$ and they are allowed to occur;
for the same reason, no other transition can occur in the initial evaluation without being simultaneously allowed to occur in the initial state of $\Sigma$.
A consequence of Prop. \ref{prop:iso-agents} is also that the effect on the initial state of a transition $t$ occurring in $\MG(\Sigma)$ is analogous
to the effect on the initial state on $G$
of the occurrence of transition $t'$, hence $\ell_\Sigma(t) = \ell_G(t')$.
The same reasoning can be applied analogously to all reachable states, proving the proposition by structural induction.
\end{proof}
\begin{figure}
    \centering
        \begin{tikzpicture}[node distance=1.3cm,>=arrow30,line  width=0.3mm,scale=1.1,auto,bend angle=45,font=\tiny]
     \tikzstyle{place}=[draw,circle,thick,minimum size=4.5mm]
     \tikzstyle{transition}=
                [draw,regular polygon,thick,
                 regular polygon sides=4,minimum size=8mm, inner sep = -2pt]

     \node(w1)at(2,4)[color=red,place,tokens=1,label=above:$w1$]{};
     \node(nw1)at(1.7,3.3)[color=red,place,tokens=0,label=below:$\lnot w1$]{};
     \node(a1)at(1,1)[color=red,place,tokens=0,label=left:$a1$]{};
     \node(na1)at(1.5,1.5)[color=red,place,tokens=1,label=right:$\lnot a1$]{};
     \node(t1)at(2,2)[color=red,place,tokens=0,label=right:$t1$]{};
     \node(nt1)at(3.6,2)[color=red,place,tokens=1,label=left:$\lnot t1$]{};
     \node(n1)at(2.8,3)[color=red,transition] {$n1$};
     \node(n2)at(2.8,1)[color=red,transition] {$n2$};
     \node(n3)at(1,4)[color=red,transition] {$n3$};
     \path(n3) edge [->] (w1);
     \path(nw1) edge [->] (n3);
     \path(w1) edge [->] (n1);
     \path(n1) edge [->] (nw1);
     \path(n1) edge [->] (t1);
     \path(t1) edge [->] (n2);
     \path(n2) edge [->] (a1);
     \path(na1) edge [->] (n2);
     \path(a1) edge [->] (n3);
     \path(n3) edge [->] (na1);
     \path(nt1) edge [->] (n1);
     \path(n2) edge [->] (nt1);

     \node(g)at(6.5,4)[color=green,place,tokens=1,label=above:$g$]{};
     \node(ng)at(6.5,2.5)[color=green,place,tokens=0,label=above:$\lnot g$]{};
     \node(r1)at(4.6,2)[color=green,place,tokens=0,label=below:$r1$]{};
     \node(nr1)at(5.8,2)[color=green,place,tokens=1,label=left:$\lnot r1$]{};
     \node(nr2)at(7.2,2)[color=green,place,tokens=1,label=right:$\lnot r2$]{};
     \node(r2)at(8.4,2)[color=green,place,tokens=0,label=right:$r2$]{};
     \node(sn1)at(5.2,3)[color=green,transition] {$n1$};
     \node(sn2)at(5.2,1)[color=green,transition] {$n2$};
     \node(sm1)at(7.8,3)[color=green,transition] {$m1$};
     \node(sm2)at(7.8,1)[color=green,transition] {$m2$};
     \path(g) edge [->] (sn1);
     \path(sn1) edge [->] (ng);
     \path(sn1) edge [->] (r1);
     \path(r1) edge [->] (sn2);
     \path(sn2) edge [->] (nr1);
     \path(nr1) edge [->] (sn1);
%     \path [->] (sn2) edge [bend left,out=150,in=60,min distance=6.5cm] (g);
     \path [->] (sn2) edge [bend right,out=-40,in=-180,min distance=1cm] (g);
     \path(g) edge [->] (sm1);
     \path(sm1) edge [->] (ng);
     \path(sm1) edge [->] (r2);
     \path(r2) edge [->] (sm2);
     \path(sm2) edge [->] (nr2);
     \path(nr2) edge [->] (sm1);
%     \path [->] (sm2) edge [bend right,out=-150,in=-60,min distance=6.5cm] (g);
     \path [->] (sm2) edge [bend left,out=40,in=180,min distance=1cm] (g);
     \path [<-] (sm2) edge [bend left,out=60,in=170,min distance=0.1cm] (ng);
     \path [<-] (sn2) edge [bend left,out=-60,in=-170,min distance=0.1cm] (ng);

     \path [->,>=read] (r1) edge (n1); 
     \path [->,>=read] (r1) edge (n2); 
     \path [->,>=read] (w1) edge (sn1);
     \path [->,>=read] (a1) edge [bend right] (sn2);

     \node(w2)at(11,4)[color=blue,place,tokens=1,label=above:$w2$]{};
     \node(nw2)at(11.3,3.3)[color=blue,place,tokens=0,label=below:$\lnot w2$]{};
     \node(a2)at(12,1)[color=blue,place,tokens=0,label=right:$a2$]{};
     \node(na2)at(11.5,1.5)[color=blue,place,tokens=1,label=right:$\lnot a2$]{};
     \node(t2)at(11,2)[color=blue,place,tokens=0,label=left:$t2$]{};
     \node(nt2)at(9.4,2)[color=blue,place,tokens=1,label=right:$\lnot t2$]{};
     \node(m1)at(10.2,3)[color=blue,transition] {$m1$};
     \node(m2)at(10.2,1)[color=blue,transition] {$m2$};
     \node(m3)at(12,4)[color=blue,transition] {$m3$};
     \path(m3) edge [->] (w2);
     \path(w2) edge [->] (m1);
     \path(m1) edge [->] (t2);
     \path(t2) edge [->] (m2);
     \path(m2) edge [->] (a2);
     \path(a2) edge [->] (m3);
     \path(m3) edge [<-] (nw2);
     \path(nw2) edge [<-] (m1);
     \path(m1) edge [<-] (nt2);
     \path(nt2) edge [<-] (m2);
     \path(m2) edge [<-] (na2);
     \path(na2) edge [<-] (m3);

     \path [->,>=read] (r2) edge (m1); 
     \path [->,>=read] (r2) edge (m2); 
     \path [->,>=read] (w2) edge (sm1);
     \path [->,>=read] (a2) edge [bend left] (sm2);
     %\path [->,>=read] (ng) edge [bend left=60] (m1);
     %\path [->,>=read] (ng) edge [bend right=60] (n1);     
     \end{tikzpicture}
    \caption{Composition of Petri nets on places.}
    \label{fig:comp_places}
\end{figure}
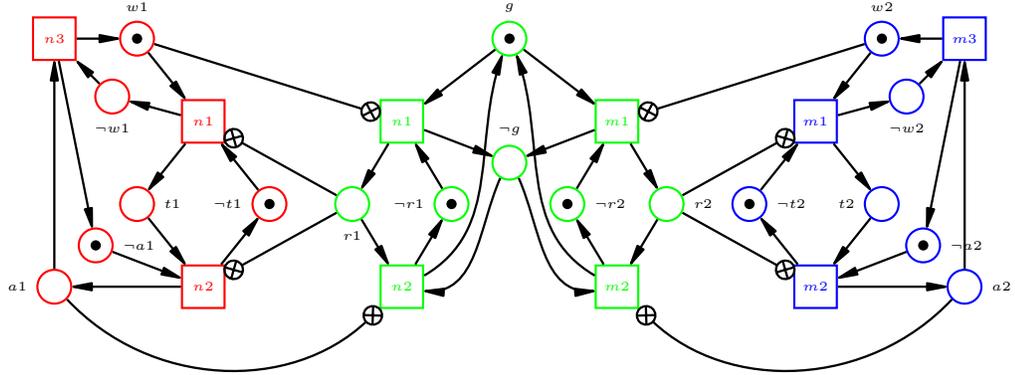

\begin{example}
Fig.~\ref{fig:comp_places} represents the synchronizations as read
arcs. In this model, both $n_1$ and $n_2$ of train $1$ can occur only if $r_1$
is marked, and $n_2$ of the controller can occur only if $a_1$
of train $1$ is marked. In addition, the controller can allow a
train to enter only if the train is waiting for it, therefore the
transition $n_1$ of the controller can occur only if $w_1$ is marked.
Symmetrically for train $2$.
\end{example}
\section{Comparison between the two semantics}
\label{sec:2sem-pn}
In this section, we discuss how to modify a system synchronized on transitions to obtain an equivalent system with synchronizations on data and vice versa (the formal definition of the considered equivalences are presented later in this section).
System synchronized on transitions are generally smaller than those synchronized on data and require to take fewer design choices (e.g. in a shared transition, we do not need to determine which agent is responsible for initiating it). Although this may be more convenient when we need to work on the system at an abstract level, it can create problems if we are interested in the system implementation, where design choices need to be made (as pointed out, for example, in \cite{JPS21}).
On the other hand, the semantics with the synchronization on the data may be more useful when focusing on lower-level features of the system, as it is a closer description of the implementation needs.
Being able to switch between these two semantics allows us to work with the most suitable representation of the system according to our current goal.

In Section~\ref{ssec:syntran2synplac} we start with a system synchronized on transitions and we transform it into a system with data synchronizations. This transformation is the most complex, since we need to transform an abstract system into a more detailed one.
In Theorem \ref{theor:weak-bis}, we prove a weak bisimulation result, showing that we can mimic every behavior of the system synchronized on transitions on its transformation. Theorem \ref{theor:eq_rmark} complete the result proving the equivalence between the reachable markings in the two semantics. 

Section~\ref{ssec:synplac2synact} presents the transformation in the opposite direction, from a system synchronized on data to the one synchronized on transitions. In this case the transformation is more straightforward and allows us to prove the isomorphism between the reachability graphs of the models in two semantics (Theorem \ref{theor:comp-pl-tr}).

\subsection{From synchronization on transitions to synchronization on places}
\label{ssec:syntran2synplac}

In this Section we assume that we have a multi-agent Petri net system constructed as in Sec.~\ref{sec:tr} (with pairwise disjoint sets of transitions), and we show how to transform it into a multi-agent system as defined in Sec.~\ref{sec:pn-pl}.
We provide a transformation at the level of agents.
Namely, for a set of 1-safe Petri nets $\{\Sigma_i \;|\; i\in\Ag\}$ we
construct a set of 1-safe Petri nets with read arcs $\{\Sigma^{init_i}_{int}\}$.

We can do it by applying the following procedure:\begin{enumerate}
    \item We establish the orders between the agents (in more general setting one might want to establish the order between the agents for each shared transition separately). 
    \item We add the synchronization places which guaranties that potential concurrency inside agents is the possibility of equivalent interleavings (not true concurrency). Namely, whenever an agent decides to synchronize with another agent, it is too busy to perform any other action before the synchronization finishes.
    \item For each group of transitions synchronizing on a
    given label $l$, we split each transition labeled with $l$
    into a proper number of copies (one for each agent participating in the synchronized transition in the global system). 
    Each of these copies is then additionally split into two transitions, 
    and add a pair of complementary places in between
    (constructing a binary internal variable).
    \item We add binary external variables related with the groups of synchronized transitions and proper read arcs.
    \item In order for the action to occur, the first agent needs to start it and execute the first of the split transitions.
    \item When the first agent is in the additional intermediate place, only the first half of the same action in the other agents may occur (we need to be sure that everybody follows, before proceeding). Until all the agents executed 
    this half, also every action in the first agent is blocked (by the emptiness of the synchronization place of this agent).
    \item When all the agents did the first half of the action,
    they can execute the second half. 
\end{enumerate}

To simplify further considerations, we assume that all labels present in the simulated system are singletons, and we treat them as single elements, not sets and we use $\varepsilon$ instead of $\emptyset$ for empty labels.

More formally, let $\Sigma_1 = (P_1, T_1, F_1, \iota_1, \ell_1), ..., \Sigma_n = (P_n, T_n, F_n, \iota_n, \ell_n)$ be the set of Petri net agents, and $\Sigma = (P, T, F, m_0, \ell)$ be their common transitions based composition. We will construct a system $\Sigma' = (P', T', F', \read', m'_0, \ell')$ which simulates $\Sigma$ using the data driven composition.

Let us introduce some necessary notions.
By $\Evt^s_i$ we denote the set of all shared events of $i$-th component. Namely, $\Evt^s_i = \{a\in\Evt_i \;|\; a\in\bigcup_{j\neq i} \Evt_j\}$.
By $Rg:\Evt\to VecTR$ we denote a range function which assigns to each (shared) event the set of all combinations of transitions labeled with $a$ that can be chosen from agents.
Formally each $ST\in VecTR$ is a function from the set of agents $\Ag$ to $\{\varepsilon\}\cup\bigcup_{i\in\Ag}\;T_i$ and $ST(i)\in\ell_i^{-1}(a)$ if $a\in\Evt_i$ 
and $ST(i)=\varepsilon$ otherwise.
Moreover, $\mathsf{next}:VecTR\times\Ag\to\Ag$ is the function that for $ST\in VecTR$ and $i\in\Ag$ returns the smallest agent $j>i$ such
that $ST(j)\neq\varepsilon$ if such exists or the smallest $j>0$ such that $ST(j)\neq\varepsilon$ otherwise. 
Note that $\next(ST,|\Ag|)$ returns the smallest (number of) agent involved in $ST$.
Similarly we define $\mathsf{prev}:VecTR\times\Ag\to\Ag$ to return the largest agent with image different from $\varepsilon$ and smaller than specified.
We also denote by $T_i^{unique}$ the set of all transitions of $i$-th agent which have labels occurring only in this agent (namely
$T_i^{unique}=T_i\setminus\ell^{-1}(\Evt^s_i)$) and by $T_i^{shared}$ all other transitions of $i$-th agent.
Then the constructions is defined as follows:
\begin{itemize}
    \item $P^{init_i}_{int} = P_i
    \cup
    \{p_i^{sync}\}
    \cup
    \{t^{(a,ST(j))}_{in}, t^{(a,ST(j))}_{out} \;|\; a\in\Evt^s_i \wedge ST\in Rg(a) \wedge t=ST(i) \wedge ST(j)\neq\varepsilon\}$;
    \item $T^{init_i}_{int} =
    T_i^{unique}
    \cup
    \{t^{(a,ST(j))}_{b},t^{(a,ST(j))}_{e} \;|\; a\in\Evt^s_i \wedge ST\in Rg(a) \wedge t=ST(i) \wedge ST(j)\neq\varepsilon\}
    $;
    \item $\flow^{init_i}_{int} =
    \flow_i|_{(P_i\times T_i^{unique})\cup(T_i^{unique}\times P_i)}
    \cup
    \{(p_i^{sync},t^{(a,ST(i))}_{b}) \;|\; t\in T_i^{shared}\}    
    \cup
    \{(t^{(a,ST(i))}_{e},p_i^{sync}) \;|\; t\in T_i^{shared}\}    
    \cup
    \{(p,t^{(a,ST(i))}_{b}) \;|\; (p,t)\in F_i \wedge t\in T_i^{shared}\}
    \cup
    \{(t^{(a,ST(i))}_{e},p) \;|\; (t,p)\in F_i \wedge t\in T_i^{shared}\}
    \cup
    \{(t^{(a,ST(j))}_{in},t^{(a,ST(j))}_{b}),(t^{(a,ST(j))}_{e},t^{(a,ST(j))}_{in}) \;|\; a\in\Evt^s_i \wedge ST\in Rg(a) \wedge t=ST(i) \wedge ST(j)\neq\varepsilon\}
    \cup
    \{(t^{(a,ST)}_{out},t^{(a,ST)}_{e}),(t^{(a,ST)}_{b},t^{(a,ST)}_{out}) \;|\; t\in T_i^{shared}\}
    $;
    \item $\read^{init_i}_{int} = 
    \{(t'^{(a,ST(j))}_{out},t^{(a,ST(i))}_{b}) \;|\; a\in\Evt^s_i \wedge ST\in Rg(a) \wedge t=ST(i) \wedge t'=ST(j) \wedge j=\mathsf{next}(ST,i))\neq ST(\mathsf{prev}(ST,1))\}
    \cup
    \{(t'^{(a,ST(j))}_{in},t^{(a,ST(i))}_{e}) \;|\; a\in\Evt^s_i \wedge ST\in Rg(a) \wedge t=ST(i) \wedge t'=ST(j) \wedge j=\mathsf{next}(ST,i))\neq ST(\mathsf{prev}(ST,1))\}
    \cup
    \{(t'^{(a,ST(j))}_{in},t^{(a,ST(i))}_{b}) \;|\; a\in\Evt^s_i \wedge ST\in Rg(a) \wedge t=ST(i) \wedge t'=ST(j) \wedge j=\mathsf{prev}(ST,i)) \wedge i=ST(\mathsf{next}(ST,|A|))\}
    \cup
    \{(t'^{(a,ST(j))}_{out},t^{(a,ST(i))}_{e}) \;|\; a\in\Evt^s_i \wedge ST\in Rg(a) \wedge t=ST(i) \wedge t'=ST(j) \wedge j=\mathsf{prev}(ST,i)) \wedge i=ST(\mathsf{next}(ST,|A|))\}
    $;
    \item $\iota^{init_i}_{int}(p) = \iota_i(p)$ for $p\in P_i$
    and
    $\iota^{init_i}_{int}(p) = 1$ for $p\in
    \{t^{(a,ST(j))}_{in} \;|\; a\in\Evt^s_i \wedge ST\in Rg(a) \wedge t=ST(i) \wedge ST(j)\neq\varepsilon\}$
    and
    $\iota^{init_i}_{int}(p_i^{sync})=1$
    and 
    $\iota^{init_i}_{int}(p) = 0$ for $p\in
    \{t^{(a,ST(j))}_{out} \;|\;  a\in\Evt^s_i \wedge ST\in Rg(a) \wedge t=ST(i) \wedge ST(j)\neq\varepsilon\}$;
    \item $\ell^{init_i}_{int}(t) = \ell_i(t)$ for $t\in T_i^{unique}$
    and
    $\ell^{init_i}_{int}(t^{(a,ST(i))}_{b}) = \ell^{init_i}_{int}(t^{(a,ST(i))}_{e}) = \ell_i(t)$ for $t\in T_i^{shared}$
    and
    $\ell^{init_i}_{int}(t^{(a,ST(j\neq i))}_{b}) = \ell^{init_i}_{int}(t^{(a,ST(j\neq i))}_{e}) = \varepsilon$.
\end{itemize}

At this point we are ready to compose the prepared modules using the fusion of sequential components. We treat all pairs $(t_{in}^{(a,ST)},t_{out}^{(a,ST)})$ where $t$ is a shared transition of one of the components (module or larger system) as binary internal variables and all other pairs $(t_{in}^{(a,ST)},t_{out}^{(a,ST)})$ as binary external variables. Note that we have only binary variables, so the bijections present in the compositions using sequential components fusion are unambiguously defined by the initial marking while the order of the compositions itself does not matter. 

Let us also relate the markings of nets $\Sigma$ and $\Sigma'$ by $sp:2^P\to 2^{P'}$ as follows:
\begin{itemize}
    \item 
        $sp(M)(p) = M(p)$ for $p\in P\cap P'$;
    \item
        $sp(M)(p) = 1$ for $p \in \{t^{(a,ST)}_{in} \;|\; a\in\Evt^s_i \wedge ST\in Rg(a) \wedge (t=ST(i) \vee t=ST(\mathsf{prev}(ST,i)))\}$;
    \item 
        $sp(M)(p) = 0$ for $p \in \{t^{(a,ST)}_{out} \;|\; a\in\Evt^s_i \wedge ST\in Rg(a) \wedge (t=ST(i) \vee t=ST(\mathsf{next}(ST,i)))\}$.
\end{itemize}

Finally, we can also relate the computations of nets $\Sigma$ and $\Sigma'$ by defining two morphisms, $\kappa:T\to T'^+$ and $\pi:T'\to 2^\Lambda$, as follows:
\begin{itemize}
    \item $\kappa(t) = t$ if $t\in T_i^{unique}$ for some agent $i$;
    \item $\kappa(t=(t^{i_1},t^{i_2}\ldots t^{i_k})) = (t^{i_1})^{(a,ST(i_1))}_b(t^{i_2})^{(a,ST(i_2))}_b\ldots (t^{i_k})^{(a,ST(i_k))}_b(t^{i_1})^{(a,ST(i_1))}_e$ $(t^{i_2})^{(a,ST(i_2))}_e\ldots (t^{i_k})^{(a,ST(i_k))}_e$ otherwise, for $a=\ell(t)$ and $SP\sim(t^{i_1},t^{i_2}\ldots t^{i_k})$;
    \item $\pi(t) = \ell_i(t)$ if $t\in T_i^{unique}$ for some agent $i$;
    \item $\pi((t^i)^{(a,ST(i))}_e) = \ell_i(t^i)$ for $t^i\in T_i^{shared}$ 
      and $t^i=ST(i)$ for some $ST\in VecTR$ 
      and $i=\mathsf{next}(ST,|A|)$;
    \item $\pi(t) = \emptyset$ otherwise.
\end{itemize}
In words, $\kappa : T \rightarrow T'^+$ and $\pi : T' \rightarrow 2^\Lambda$ are two auxiliary functions that we use to show the relations between $\Sigma$ and $\Sigma'$. In particular, $\kappa$ is a function associating to each transition $t \in T$ in $\Sigma$ the sequence of transitions in $T'$ derived from $t$ in the construction procedure, with $T'$ set of transitions in $\Sigma'$. 
In particular, if $t \in T$ is not a synchronization transition in $\Sigma$, then $\kappa(t) = t$;
otherwise let $i_1, ..., i_n$ be the agents synchronizing on $t$ in $\Sigma$ and let $i_1 < ... < i_n$ be the assigned priority order. By construction, in $\Sigma'$ we have $n \times 2$ copies of $t$, namely for each agent $i_j$, $j\in \{1,...,n\}$, we have the beginning transition $t^{i_j}_b$ and the ending transition $t^{i_j}e$. We define $\kappa(t)$ as the sequence of copies of transition $t$ as they are forced to occur in $\Sigma'$, namely $\kappa(t)= t^{i_1}_b t^{i_n}_b, ..., t^{i_1}_b t^{i_n}_e$.
Function $\pi$ assigns labels to transitions in $T'$ in order to facilitate the analysis of the relation between the labels of $T$ and of $T'$. 
In particular, if $t$ is not a synchronization transition in $T$, then there is a single occurrence of it in $\Sigma'$ and we define $\pi(t) = \ell(t)$. Otherwise, when multiple agents need to synchronize, we consider the synchronization solved when the agent initializing it executes the second copy of its transition (namely $t_e^{i_1}$).
Indeed, if we observe the occurrence of $t^{i_1}_e$, by construction we are sure that all the agents involved in the synchronization joined, and we know that there is no possible way to prevent the agents to also conclude their action.
Hence, we define $\pi(t^{i_1}_e) = \ell(t)$, and $\pi(t) = \emptyset$ in all other cases.

\begin{example}
Consider the system in Fig. \ref{fig:tr2glpna}, where the agents have been synchronized on transition. To transform it in an equivalent system synchronized on places we first decide an order between the agents for every common action, and then we make a copy of every transition involved in the synchronization, to denote its beginning and its end. In Fig. \ref{fig:sync_tr2pl}, the controller (green component) need to initialize all the shared actions, both with read and with blue train. An example of shared transition is $n1$: the controller can start it by executing $n1b$; afterward, it needs to wait for the red train to also start the action, by executing its copy of $n1b$. 
Once all the agents participating in $n1$ have started it, the controller is allowed to end the action ($n1e$), and the synchronization ends.

When transforming a system synchronized on transitions into a system synchronized on data, we have no information on which agent should be responsible to initialize the actions, therefore we can decide it arbitrarily. 
This is a weakness of modeling a system by using synchronization on transitions, since it may lead to unwanted system behaviors \cite{JPS21}.
In our example, the controller can decide which train is allowed to enter the tunnel without verifying that the train is actually ready to proceed.
When the system is designed by using synchronizations on data, this problem is avoided, since priority between agents is explicitly determined.
\end{example}

\begin{figure}[ht]
     \centering
         \begin{tikzpicture}[node distance=1.3cm,>=arrow30,line  width=0.3mm,scale=1.1,auto,bend angle=45,font=\tiny]
     \tikzstyle{place}=[draw,circle,thick,minimum size=4.5mm]
     \tikzstyle{transition}=
                [draw,regular polygon,thick,
                 regular polygon sides=4,minimum size=8mm, inner sep = -2pt]

%     \node(run1)at(0.8,2.5)[color=red,place,tokens=1,label=left:$ $]{};
     \node(w1)at(1.3,5)[color=red,place,tokens=1,label=above:$w1$]{};
     \node(a1)at(.3,.5)[color=red,place,tokens=0,label=left:$a1$]{};
     \node(t1)at(1.3,2)[color=red,place,tokens=0,label=right:$t1$]{};
     \node(n1b)at(2.1,4.5)[color=red,transition] {$n1b$};
     \node(n1m)at(1.7,3.5)[color=red,place,tokens=0,label=right:$ $]{}; %label=right:$n1m$     
     \node(n1c)at(2.5,3.5)[color=red,place,tokens=1,label=right:$ $]{}; %label=right:$n1c$ 
     \node(n1e)at(2.1,2.5)[color=red,transition] {$n1e$};
     \node(pstr1)at(0.7, 2)[color=red,place,tokens=1]{};
     \node(n2b)at(2.1,1.5)[color=red,transition] {$n2b$};
     \node(n2m)at(1.7,.5)[color=red,place,tokens=0,label=right:$ $]{}; %label=right:$n2m$     
     \node(n2c)at(2.5,.5)[color=red,place,tokens=1,label=right:$ $]{}; %label=right:$n2c$     
     \node(n2e)at(2.1,-.5)[color=red,transition] {$n2e$};
     \node(n3)at(.3,5)[color=red,transition] {$n3$};
%     \path(run1) edge [<->] (n3);
%     \path [->] (run1) edge [bend right,out=20,in=140,min distance=.8cm] (n1b);
%     \path(n1e) edge [->] (run1);
%     \path [->] (run1) edge [bend right,out=-40,in=-140,min distance=.8cm] (n2b);
%     \path [->] (n2e) edge [bend right,out=60,in=140,min distance=1cm] (run1);
     \path(n3) edge [->] (w1);
     \path(w1) edge [->] (n1b);
     \path(n1b) edge [->] (n1m);
     \path(n1c) edge [->] (n1b);
     \path(n1e) edge [->] (n1c);
     \path(n1m) edge [->] (n1e);
     \path(n1e) edge [->] (t1);
     \path(t1) edge [->] (n2b);
     \path(n2b) edge [->] (n2m);
     \path(n2c) edge [->] (n2b);
     \path(n2e) edge [->] (n2c);
     \path(n2m) edge [->] (n2e);
     \path [->] (n2e) edge [bend right,out=30,in=130,min distance=.8cm] (a1);
     \path(a1) edge [->] (n3);
     \path(pstr1) edge [<->] (n3);
     \path [->] (pstr1) edge [bend left, out=10, in=170] (n1b);
     \path [->] (pstr1) edge [bend left, out=-30, in=-150] (n2b);
     \path [->] (n1e) edge [bend left, out=-30, in=-150] (pstr1);
     \path [->] (n2e) edge [bend left, out=10, in=170] (pstr1);

%     \node(runs)at(5,2.5)[color=green,place,tokens=1,label=left:$ $]{};
     \node(g)at(5,5)[color=green,place,tokens=1,label=above:$g$]{};
     \node(psc)at(5,2)[color=green,place,tokens=1]{};
     \node(r1)at(4.3,2)[color=green,place,tokens=0,label=left:$r1$]{};
     \node(r2)at(5.7,2)[color=green,place,tokens=0,label=right:$r2$]{};
     \node(sn1b)at(3.5,4.5)[color=green,transition] {$n1b$};
     \node(sn1m)at(3.9,3.5)[color=green,place,tokens=0,label=right:$ $]{}; %label=right:$sn1m$
     \node(sn1c)at(3.1,3.5)[color=green,place,tokens=1,label=right:$ $]{}; %label=right:$sn1c$
     \node(sn1e)at(3.5,2.5)[color=green,transition] {$n1e$};
     \node(sn2b)at(3.5,1.5)[color=green,transition] {$n2b$};
     \node(sn2m)at(3.9,.5)[color=green,place,tokens=0,label=right:$ $]{}; %label=right:$sn2m$
     \node(sn2c)at(3.1,.5)[color=green,place,tokens=1,label=right:$ $]{}; %label=right:$sn2c$
     \node(sn2e)at(3.5,-.5)[color=green,transition] {$n2e$};
     \node(sm1b)at(6.5,4.5)[color=green,transition] {$m1b$};
     \node(sm1m)at(6.1,3.5)[color=green,place,tokens=0,label=right:$ $]{}; %label=right:$sm1m$
     \node(sm1c)at(6.9,3.5)[color=green,place,tokens=1,label=right:$ $]{}; %label=right:$sm1c$
     \node(sm1e)at(6.5,2.5)[color=green,transition] {$m1e$};
     \node(sm2b)at(6.5,1.5)[color=green,transition] {$m2b$};
     \node(sm2m)at(6.1,.5)[color=green,place,tokens=0,label=right:$ $]{}; %label=right:$sm2m$
     \node(sm2c)at(6.9,.5)[color=green,place,tokens=1,label=right:$ $]{}; %label=right:$sm2c$
     \node(sm2e)at(6.5,-.5)[color=green,transition] {$m2e$};
%     \path[->] (runs) edge [bend left,out=-30,in=-150,min distance=.8cm] (sn1b);
%     \path(sn1e) edge [->] (runs);
%     \path[->] (runs) edge [bend left,out=20,in=140,min distance=.8cm] (sn2b);
%     \path[->] (sn2e) edge [bend left,out=-60,in=-165,min distance=1.7cm] (runs);
%     \path[->] (runs) edge [bend right,out=30,in=150,min distance=.8cm] (sm1b);
%     \path(sm1e) edge [->] (runs);
%     \path[->] (runs) edge [bend right,out=-20,in=-140,min distance=.8cm] (sm2b);
%     \path[->] (sm2e) edge [bend right,out=60,in=165,min distance=1.7cm] (runs);
     \path(g) edge [->] (sn1b);
     \path(sn1b) edge [->] (sn1m);
     \path(sn1c) edge [->] (sn1b);
     \path(sn1e) edge [->] (sn1c);
     \path(sn1m) edge [->] (sn1e);
     \path(sn1e) edge [->] (r1);
     \path(r1) edge [->] (sn2b);
     \path(sn2b) edge [->] (sn2m);
     \path(sn2c) edge [->] (sn2b);
     \path(sn2e) edge [->] (sn2c);
     \path(sn2m) edge [->] (sn2e);
%     \path [->] (sn2) edge [bend left,out=150,in=60,min distance=6.5cm] (g);
     \path [->] (sn2e) edge [bend right,out=-30,in=175,min distance=2cm] (g);
     \path(g) edge [->] (sm1b);
     \path(sm1b) edge [->] (sm1m);
     \path(sm1c) edge [->] (sm1b);
     \path(sm1e) edge [->] (sm1c);
     \path(sm1m) edge [->] (sm1e);
     \path(sm1e) edge [->] (r2);
     \path(r2) edge [->] (sm2b);
     \path(sm2b) edge [->] (sm2m);
     \path(sm2c) edge [->] (sm2b);
     \path(sm2e) edge [->] (sm2c);
     \path(sm2m) edge [->] (sm2e);
%     \path [->] (sm2) edge [bend right,out=-150,in=-60,min distance=6.5cm] (g);
     \path [->] (sm2e) edge [bend left,out=30,in=-175,min distance=2cm] (g);
     \path [->] (sn1e) edge [bend left, out=20, in=160]  (psc);
     \path [->] (sm1e) edge [bend left, out=-20, in=-160] (psc);
     \path (psc) edge [->, bend left, out=-30, in=-150] (sn1b);
     \path (psc) edge [->, bend left, out=30, in=150] (sm1b);
     \path [->] (psc) edge [bend left, out=20, in=160] (sn2b); 
     \path [->] (psc) edge [bend left, out=-20, in=-160] (sm2b);
     \path [->] (sm2e) edge [bend left, out=30, in=150] (psc);
     \path [->] (sn2e) edge [bend left, out=-30, in=-150](psc);
     
%     \node(run2)at(9.2,2.5)[color=blue,place,tokens=1,label=left:$ $]{};
     \node(w2)at(8.7,5)[color=blue,place,tokens=1,label=above:$w2$]{};
     \node(a2)at(9.7,.5)[color=blue,place,tokens=0,label=right:$a2$]{};
     \node(t2)at(8.7,2)[color=blue,place,tokens=0,label=left:$t2$]{};
     \node(m1b)at(7.9,4.5)[color=blue,transition] {$m1b$};
     \node(m1m)at(8.3,3.5)[color=blue,place,tokens=0,label=right:$ $]{}; %label=right:$m1m$     
     \node(m1c)at(7.5,3.5)[color=blue,place,tokens=1,label=right:$ $]{}; %label=right:$m1c$     
     \node(m1e)at(7.9,2.5)[color=blue,transition] {$m1e$};
     \node(pstr3)at(9.3, 2)[color=blue,place,tokens=1]{};
     \node(m2b)at(7.9,1.5)[color=blue,transition] {$m2b$};
     \node(m2m)at(8.3,.5)[color=blue,place,tokens=0,label=right:$ $]{}; %label=right:$m2m$     
     \node(m2c)at(7.5,.5)[color=blue,place,tokens=1,label=right:$ $]{}; %label=right:$m2c$     
     \node(m2e)at(7.9,-.5)[color=blue,transition] {$m2e$};
     \node(m3)at(9.7,5)[color=blue,transition] {$m3$};
%     \path(run2) edge [<->] (m3);
%     \path [->] (run2) edge [bend left,out=-20,in=-140,min distance=.8cm] (m1b);
%     \path(m1e) edge [->] (run2);
%     \path [->] (run2) edge [bend left,out=40,in=140,min distance=.8cm] (m2b);
%     \path [->] (m2e) edge [bend left,out=-60,in=-140,min distance=1cm] (run2);
     \path(m3) edge [->] (w2);
     \path(w2) edge [->] (m1b);
     \path(m1b) edge [->] (m1m);
     \path(m1c) edge [->] (m1b);
     \path(m1e) edge [->] (m1c);
     \path(m1m) edge [->] (m1e);
     \path(m1e) edge [->] (t2);
     \path(t2) edge [->] (m2b);
     \path(m2b) edge [->] (m2m);
     \path(m2c) edge [->] (m2b);
     \path(m2e) edge [->] (m2c);
     \path(m2m) edge [->] (m2e);
     \path [->] (m2e) edge [bend left,out=-30,in=-130,min distance=.8cm] (a2);
     \path(a2) edge [->] (m3);
     \path [->] (pstr3) edge [bend left, out=-10, in=-170] (m1b);
     \path [->] (pstr3) edge [bend left, out=30, in=150] (m2b);
     \path [->] (m1e) edge [bend left, out=30, in=150] (pstr3);
     \path [->] (m2e) edge [bend left, out=-10, in=-170] (pstr3);

     \path [->,>=read] (sn1m) edge [bend left,out=10,in=180,min distance=.4cm] (n1b);     
     \path [->,>=read] (sn1c) edge [bend left,out=10,in=140,min distance=.4cm] (n1e);     
     \path [->,>=read] (n1c) edge [bend left,out=10,in=140,min distance=.4cm] (sn1b);     
     \path [->,>=read] (n1m) edge [bend left,out=10,in=180,min distance=.4cm] (sn1e);     
     \path [->,>=read] (sn2m) edge [bend left,out=10,in=180,min distance=.4cm] (n2b);     
     \path [->,>=read] (sn2c) edge [bend left,out=10,in=140,min distance=.4cm] (n2e);     
     \path [->,>=read] (n2c) edge [bend left,out=10,in=140,min distance=.4cm] (sn2b);     
     \path [->,>=read] (n2m) edge [bend left,out=10,in=180,min distance=.4cm] (sn2e);     
     \path [->,>=read] (sm1m) edge [bend right,out=-10,in=-180,min distance=.4cm] (m1b);     
     \path [->,>=read] (sm1c) edge [bend right,out=-10,in=-140,min distance=.4cm] (m1e);     
     \path [->,>=read] (m1c) edge [bend right,out=-10,in=-140,min distance=.4cm] (sm1b);     
     \path [->,>=read] (m1m) edge [bend right,out=-10,in=-180,min distance=.4cm] (sm1e);     
     \path [->,>=read] (sm2m) edge [bend right,out=-10,in=-180,min distance=.4cm] (m2b);     
     \path [->,>=read] (sm2c) edge [bend right,out=-10,in=-140,min distance=.4cm] (m2e);     
     \path [->,>=read] (m2c) edge [bend right,out=-10,in=-140,min distance=.4cm] (sm2b);     
     \path [->,>=read] (m2m) edge [bend right,out=-10,in=-180,min distance=.4cm] (sm2e);     
     
     \end{tikzpicture}
     \caption{System equivalent to the one in Fig. \ref{fig:tr2glpna} in the sense of Theorem \ref{theor:weak-bis}, where synchronizations on transitions has been transformed in synchronization on places.}
     \label{fig:sync_tr2pl}
\end{figure}
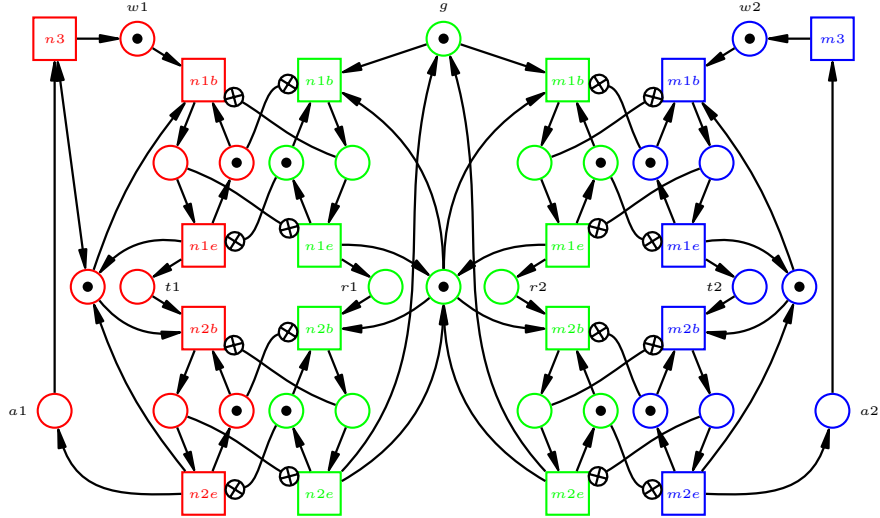

By $Parikh:X^*\to N^X$ we denote the function that assigns to a sequence the Parikh vector (multiset) of the occurrences of the elements.
The provided composition satisfies the weak bisimulation in the following sense.

\begin{theorem}
\label{theor:weak-bis}
Let $\Sigma$ be the global net obtained as the synchronization on transitions of Petri net agents $\Sigma_1,\ldots,\Sigma_n$. Moreover, let $\Sigma'$ be a system which simulates $\Sigma$ using data driven composition as described in Section~\ref{ssec:syntran2synplac}. 
Then for any $M_1,M_2 \in[m_0\rangle$ we have 
$sp(M_1)[v\rangle sp(M_2)$ iff $M_1[u\rangle M_2$
where $Parikh(v) = Parikh(\kappa(u))$ and $\ell'(\pi(v))=\ell(u)$.
\end{theorem}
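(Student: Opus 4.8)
The plan is to lift the transition-by-transition correspondence of the construction to whole computations, using a few structural facts about $\Sigma'$. Call a marking $N$ of $\Sigma'$ \emph{idle} if every synchronisation place $p^{sync}_i$ and every place $t^{(a,ST)}_{in}$ carries a token while every place $t^{(a,ST)}_{out}$ is empty; then $sp(M)$ is exactly the idle marking whose original places agree with $M$. (i) For a unique transition $t\in T_i^{unique}$, the flow of $\Sigma'$ on $P_i\times\{t\}$ coincides with that of $\Sigma_i$, so $t$ is enabled at $sp(M)$ iff it is enabled at $M$, and $sp(M)[t\rangle sp(M')$ whenever $M[t\rangle M'$. (ii) If $t=(t^{i_1},\dots,t^{i_k})$ is shared, with priority order $i_1<\dots<i_k$, then $\kappa(t)$ is a firing sequence of $\Sigma'$ from $sp(M)$ to $sp(M')$ for $M[t\rangle M'$: at the idle marking $t^{i_1}_b$ is enabled (it needs only $p^{sync}_{i_1}$, the $\Sigma$-preconditions of $t^{i_1}$, the matching $t_{in}$ place and its read condition, all present), and once it has fired the read arcs form a chain satisfied exactly in the order prescribed by $\kappa$, so no participant can be blocked and the block runs to completion, restoring all synchronisation and binary-variable places and depositing the postconditions of every $t^{i_j}$; the result is precisely $sp(M')$. (iii) While a participant of a handshake is engaged its synchronisation place is empty, so it can neither begin nor join any other handshake; hence handshakes sharing an agent cannot overlap in $v$, and every handshake opened before an idle marking is reached must have been completed.

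\textbf{From $\Sigma$ to $\Sigma'$.}
Assume $M_1[u\rangle M_2$ in $\Sigma$ and take $v=\kappa(u)$. Induction on $|u|$ using (i) and (ii) gives $sp(M_1)[\kappa(u)\rangle sp(M_2)$, since after each prefix one is again at an idle marking. Then $Parikh(v)=Parikh(\kappa(u))$ trivially, and $\ell'(\pi(v))=\ell(u)$ because $\pi$ vanishes on every begin transition and on every non-initiator end transition and equals $\ell(t)$ on the unique representative $t^{i_1}_e$ of each block $\kappa(t)$ (and on each unique $t$), so the label word read along $\kappa(u)$ through $\pi$, with empty labels deleted, is the label word of $u$.

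\textbf{From $\Sigma'$ to $\Sigma$.}
Assume $sp(M_1)[v\rangle sp(M_2)$. I would first establish an invariant on the markings reachable from $sp(M_1)$: each agent is either idle or engaged in exactly one handshake, characterised by a fixed pair $(a,ST)$ and a stage along $\kappa$, the contents of the associated binary-variable places being determined by and consistent across the participants; in particular a handshake can be opened only by its designated initiator with a fixed $ST$, and, together with (iii), every handshake occurring in $v$ is complete. Define $u$ over $T$ by scanning $v$ and emitting, in order of occurrence, the transition $t$ for each unique $t\in T$ fired and the $\Sigma$-transition $t=(t^{i_1},\dots,t^{i_k})$ (identified by the $ST$ it carries) for each occurrence of a representative end $t^{i_1}_e$. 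By construction of $u$ and the definition of $\pi$ we get $\ell'(\pi(v))=\ell(u)$, and since the transitions occurring in $v$ are exactly the unique ones fired plus, for each completed handshake, the full block $\kappa(t)$, we get $Parikh(v)=Parikh(\kappa(u))$. To see that $u$ is a firing sequence of $\Sigma$ from $M_1$ to $M_2$, I argue by induction on the emitted events: when a representative end of a handshake of $t$ (resp. a unique $t$) is reached in $v$, each participating agent $i_j$ has been locked since firing $t^{i_j}_b$, so the local state of $i_j$ just before it joined this handshake — the state from which $t^{i_j}_b$, hence $t^{i_j}$, is enabled — coincides with the $P_{i_j}$-projection of the $\Sigma$-marking reached by the already emitted prefix of $u$ (that prefix contains exactly the earlier transitions in which $i_j$ took part, each already fully completed in $v$ before this point); hence $\Sigma$ can fire $t$ there, with the announced effect on the participants' places. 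Since $v$ ends at the idle marking $sp(M_2)$, all handshakes are finished and the resulting $\Sigma$-marking restricts on every $P_i$ to $M_2$, i.e.\ equals $M_2$.

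\textbf{Main obstacle.}
The delicate part is the second direction, and inside it the handshake invariant: one must extract from the somewhat intricate read-arc pattern that the participants of a shared transition are forced to agree on $ST$ and to move through the $\kappa$-order, so that no mixed or half-finished handshake can be concealed inside an idle-to-idle run and a representative end can only fire after all participants have joined. The $Parikh$ and $\pi$-label bookkeeping and the $sp$-commutation for single transitions are then routine; the only extra care needed is that, because independent handshakes may interleave, $u$ is read off in the order of the representative events of $v$ and its enabledness is checked against the corresponding $\Sigma$-prefix rather than against the instantaneous marking of $\Sigma'$.
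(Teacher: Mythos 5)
Your forward direction is essentially the paper's: fire $\kappa(t)$ block by block between idle markings and track $\pi$ and Parikh images. For the harder direction, however, you take a genuinely different route. The paper argues by commutation: it locates the first occurrence $v[k]$ with non-empty $\pi$-image, shows that exactly one begin of each participant (initiator first, largest agent last) must precede it, then permutes $v$ so that the complete block $\kappa(t)$ sits at the front, obtaining $sp(M_1)[\kappa(t)\rangle sp(M_3)[v'''\rangle sp(M_2)$ with $M_1[t\rangle M_3$, and inducts on the shortened suffix; the Parikh equality falls out because the final reordered word is literally $\kappa(u)$. You instead avoid any reordering: you read $u$ off from the representative events of $v$ in place, and verify enabledness of each emitted $\Sigma$-transition against the $\Sigma$-marking of the already emitted prefix, using the locking by $p_i^{sync}$ and a projection argument to show that each participant's local state when it joined the handshake agrees with that prefix marking. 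The two approaches need the same structural facts about the gadget (all begins precede the initiator's end, no participant's end precedes the initiator's end, the sync place serializes each agent's participations, the end transitions restore the auxiliary places); the paper packages them as justification for the exchange steps, while you must package them as your ``handshake invariant'', which you correctly single out as the delicate point — in particular the claim, needed for your prefix-projection step, that an agent can only have completed participations in handshakes whose representative end has already occurred. What the paper's permutation buys is a very direct witness for $Parikh(v)=Parikh(\kappa(u))$ and a simple induction on length; what your version buys is that you never have to argue that the moved transitions commute with the intervening ones, at the price of carrying a finer invariant about partially executed handshakes and of a slightly more careful argument that the emission order of representative ends is consistent with each agent's local order. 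Both are sound given the intended behaviour of the read-arc protocol.
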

\begin{proof}
    Let $\Sigma$ be a Petri net obtained from systems $\Sigma_1,\ldots,\Sigma_n$ by the transition driven composition, while $\Sigma'$ be a system obtained by the data driven composition of systems $\Sigma'_1,\ldots,\Sigma'_n$ obtained from $\Sigma_1,\ldots,\Sigma_n$ and $\Sigma$ as described in this Section.

    $(\Leftarrow):$
    Let $M_1[t\rangle M_2$ where $t$ is a shared action associated with $ST$. 
    Then we can easily see that $\kappa(t)$ is enabled at
    $sp(M_1)$ since all input places for shared transition are in all the Petri net agents marked (all nonempty $ST(i)$ is enabled in agent $i$ at $M|_{\Sigma_i}$) and all the $sp(M_1)(t_{in}^{(\ell(t),t=ST(i))})=1$, while for the appropriate complementary places $sp(M_1)(t_{out}^{(\ell(t),t=ST(i))})=0$. Note that executing transition $t^{(\ell(t),t=ST(i))}_b$ changes states of those places to the opposed ones. 
    We only need to check all the read arcs.
    Note that, according to the construction, one of the agents taking part in shared transition is featured (namely the agent $i$ with the smallest number). The transition $t^{(\ell(t),t=ST(i))}_b)$ is connected by a read arc with place $ST(j)_{in}^{(\ell(t),ST(j))}$ (for the agent $j$ with the largest number), which is marked. Transitions $ST(k)^{(\ell(t),ST(k))}_b$ for all the other agents taking part in this shared action a connected by read arcs with places $ST(l)_{out}^{(\ell(t),ST(l))}$ where $l=\mathsf{prev}(ST,k)$. Hence we can, one by one, execute them in order consistent with the order of participating agents, simultaneously consuming appropriate tokens. We proceed with the second second part of the sequence (namely transitions $t^{(\ell(t),t=ST(i))}_e$) similarly, this time putting the tokens to appropriate places. At the end we are indeed at marking $sp(M_2)$.
    
    Note also that $\pi(\kappa(t)) = \ell'((t_{i=\mathsf{next}(ST,|A|)}^{(\ell(t),ST(i))})_{out})$ and 
    $\ell(t) = \ell'((t_{i=\mathsf{next}(ST,|A|)}^{(\ell(t),ST(i))})_{out})$. 
    
    We can proceed this way with all the transitions of $u$ and inductively show
    that if $M_1[u\rangle M_2$ then $sp(M_1)[\kappa(u)\rangle sp(M_2)$ and
    $\ell'(\pi(\kappa(u)))=\ell(u)$.

    $(\Rightarrow):$
    Let $sp(M_1)[v\rangle sp(M_2)$. Let $u=\pi(v)$ and let $(t_i)^{(\ell(t_i),ST(i))}_e = u[1] = v[k]$. If $u[1]$ is a unique transition of agent $i$ then we can make use of the effect caused by place $p_i^{sync}$ and move $v[k]$ to the beginning of a sequence $v$ without affecting the rest of the computation. Let us concentrate on the other case, where $u[1]$ is a shared transition. 
    Since $(t_i)_{out}^{(\ell(t),ST(i))}$ is a preplace of $(t_i)^{(\ell(t_i),ST(i))}_e$ for any $ST$ and 
    $sp(M_1)((t_i)_{out}^{(\ell(t),ST(i))}) = 0$ and the only transition that changes the value of this place is $(t_i)_b^{(\ell(t_i),SP(i))}$, we know that it must have been executed in $v$ before $v[k]$. Moreover, only one instance of this transition may be executed before $v[k]$ as $v[k]$ is the first occurrence of transition with $e$ in subscript (remember that it is the action of the smallest agent for any shared action labeled by $\ell(t)$).
    For similar reasons (taking into account appropriate read arcs), there is precisely one occurrence of transition $(t_j)_b^{(\ell(t_i),SP(j))}$ before $v[k]$, where $j$ is the largest (number of) agent participating in the shared transition $t$.
    Indeed, we have that there are read arcs between places $(t_j)_{out}^{(\ell(t),ST(j))}$ (again for any $ST$) and the only possibility to change their value from initial $0$ to $1$ is to execute transition $(t_j)_b^{(\ell(t_i),SP(j))}$.
    Now we are ready to fill the gap and argue that between the first occurrence of $(t_i)_b^{(\ell(t_i),SP(i))}$ and the first occurrence of $(t_j)_b^{(\ell(t_i),SP(j))}$ we need to have the transitions for all other agents participating in this shared transition.
    Since $v[k]$ is the first transition of $\kappa(t)$ that put a new token to the places of system $\Sigma_i$ (because of $p_i^{sync}$), we can move all the considered so far components of shared transition $t$ to the beginning of the sequence $v$ obtaining $v'$ such that 
    $sp(M_1)[v'\rangle sp(M_2)$.

    Since at the end we obtain the marking $sp(M_2)$, in the sequence $v$ there have to be transitions that will restore the places $(t_k)_{in}^{(\ell(t),ST(k))}$ and $(t_k)_{out}^{(\ell(t),ST(k))}$ (for all possible $ST$ at once). 
    It remains to note that those transitions also can be moved to the front, as they only puts new tokens to the places of system $\Sigma$ (and since we were able to initialize all the components of shared transition $t$, namely all the preconditions were satisfied, 1-safeness of the Petri net agents guarantees that appropriate places are empty). As a result we obtain $v''=\kappa(t)v'''$ which is $v$ with $\kappa(t)$ moved to the front (the same for the case of unique transition described earlier). 
    Hence $sp(M_1)[t\rangle sp(M_3)[v'''\rangle sp(M_2)$ and $M_1[t\rangle M_3$.

    As we can repeat the same reasoning for the sequence $v'''$ (which has image of $\pi$ shorter then $v$) and inductively obtain a sequence $w$ equivalent to $v$ such that $\pi(w) = \pi(v)$ and $Parikh(w) = Parikh(v)$ and there exists $u$ such that $w=\kappa(u)$ and $M_1[u\rangle M_2$.
    Hence we have shown that if $sp(M_1)[v\rangle sp(M_2)$ then there exist
    an appropriate $u$ that $M_1[u\rangle M_2$.
\end{proof}

We can also formulate a fact associating the markings reachable by the original systems and the transformed one:

\begin{theorem}
\label{theor:eq_rmark}
Let $\Sigma$ be the global net obtained as the synchronization on transitions of Petri net agents $\Sigma_1,\ldots,\Sigma_n$. Moreover, let $\Sigma'$ be a system which simulates $\Sigma$ using data driven composition as described in Section~\ref{ssec:syntran2synplac}.
Then $M\in[m_0\rangle$ if and only if $M'=sp(M)\in[m'_0\rangle$.
\end{theorem}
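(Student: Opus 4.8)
The plan is to obtain both implications from the weak bisimulation of Theorem~\ref{theor:weak-bis}, after two preliminary observations. First, $sp(m_0)=m'_0$: this is a direct comparison of the construction of $\Sigma'$ in Section~\ref{ssec:syntran2synplac} with the definition of $sp$, the $P_i$-components coinciding with $\iota_i$ and the auxiliary places carrying exactly the tokens prescribed by both. Second, $sp$ is injective, since $P\subseteq P'$ and $sp(M)(p)=M(p)$ for every $p\in P$, so restriction to $P$ inverts $sp$.

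For the forward implication, let $M\in[m_0\rangle$ and fix a firing sequence $m_0[u\rangle M$ in $\Sigma$. Since $m_0\in[m_0\rangle$ and $M\in[m_0\rangle$, Theorem~\ref{theor:weak-bis} applies with $M_1=m_0$ and $M_2=M$ and yields a firing sequence $v$ with $sp(m_0)[v\rangle sp(M)$; as $sp(m_0)=m'_0$, this shows $sp(M)\in[m'_0\rangle$.

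The backward implication is the one requiring work, because $\Sigma'$ has reachable markings that are not images under $sp$, namely those visited while a block $\kappa(t)$ for a shared transition $t$ is only partially executed. The key is the following invariant, to be proved by induction on the length of firing sequences of $\Sigma'$ issued from $m'_0$: every $M'\in[m'_0\rangle$ is either \emph{stable}, meaning $M'=sp(M_2)$ for some $M_2\in[m_0\rangle$, or \emph{transient}, meaning it arises from a stable marking $sp(M_1)$ (with $M_1\in[m_0\rangle$ and $M_1[t\rangle$ in $\Sigma$ for a shared $t$) by firing a nonempty proper initial segment of a rearrangement of $\kappa(t)$; moreover every transient marking has at least one of its auxiliary \emph{out}-places marked and hence is not of the form $sp(M'')$ for any $M''\in 2^P$. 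The induction step reuses the serialisation analysis already carried out in the proof of Theorem~\ref{theor:weak-bis}: the synchronization places $p_i^{sync}$ force any agent that has begun a shared action to finish it before doing anything else, and the read arcs force the copies $t^{i_1}_b,\dots,t^{i_k}_b,t^{i_1}_e,\dots,t^{i_k}_e$ to fire in the prescribed order. Consequently, from a stable marking $sp(M_1)$ one either fires a unique transition, reaching the stable marking $sp(M_2)$ with $M_1[t\rangle M_2$ in $\Sigma$, or begins some $\kappa(t)$, entering a transient marking; and from a transient marking the only enabled transitions continue the current block, which terminates at the stable marking $sp(M_2)$ with $M_1[t\rangle M_2$. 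The property that some \emph{out}-place carries a token holds throughout the interior of a block and fails exactly at its two endpoints, which are the stable markings.

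Granting the invariant, the backward implication follows at once: if $sp(M)\in[m'_0\rangle$ then, being an image under $sp$, $sp(M)$ cannot be transient, so it is stable, $sp(M)=sp(M_2)$ with $M_2\in[m_0\rangle$, and injectivity of $sp$ gives $M=M_2\in[m_0\rangle$. The main obstacle is the invariant, and within it the clause that no transient marking lies in the image of $sp$: this requires checking that the pattern of the auxiliary \emph{in}- and \emph{out}-places and of the places $p_i^{sync}$ imposed by the definition of $sp$ is incompatible with any marking reached midway through some $\kappa(t)$, and that the $p_i^{sync}$- and read-arc constraints genuinely serialise each block so that a stable marking cannot be left in any other way. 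Most of this analysis is already present, in a slightly different form, in the proof of Theorem~\ref{theor:weak-bis}; the remaining effort is to repackage it as an invariant on reachable markings rather than as a statement about pairs of $sp$-images.
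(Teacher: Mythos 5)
Your base case ($sp(m_0)=m'_0$), the injectivity remark, and the forward implication obtained from Theorem~\ref{theor:weak-bis} with $M_1=m_0$ all match what the paper does (its proof checks $sp(m_0)=m'_0$ and then says to repeat the reasoning of Theorem~\ref{theor:weak-bis} inductively). The gap is in the invariant that carries your backward implication. You classify every reachable marking of $\Sigma'$ as either stable or \emph{transient}, where transient means ``obtained from one stable marking $sp(M_1)$ by firing a proper initial segment of a single block $\kappa(t)$'', and you claim that from a transient marking the only enabled transitions are those continuing that block. This presumes a global lock that the construction does not have: the places $p_i^{sync}$ are per agent, and the read arcs only tie together the copies $t_b,t_e$ of the participants of one shared action, so agents not involved in the open block are unconstrained. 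Take four agents where agents $1,2$ share $a$ and agents $3,4$ share $b$, both enabled at a stable marking: after firing the first half of $\kappa(a)$ and then the first half of $\kappa(b)$ one reaches a marking that is neither stable nor transient in your sense; similarly a unique transition of an uninvolved agent can fire while a block is open. Hence your dichotomy does not exhaust $[m'_0\rangle$, the induction step has uncovered cases, and the key conclusion ``every reachable marking in the image of $sp$ is stable'' is not established by the argument as written.

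The statement you need is still true, but you must either strengthen the invariant --- on top of a base $sp(M_1)$ with $M_1\in[m_0\rangle$ allow a family of partially executed blocks for shared transitions whose agent sets are pairwise disjoint (each agent sits in at most one open block because of $p_i^{sync}$), and then observe that a marking in the image of $sp$ has all $t_{out}$-places empty, which forces that family to be empty --- or drop the global invariant and argue on the concrete firing sequence $m'_0[v\rangle sp(M)$, repeating the rearrangement argument from the $(\Rightarrow)$ direction of the proof of Theorem~\ref{theor:weak-bis}; you rightly noted that the theorem cannot be cited verbatim here since its statement presupposes $M\in[m_0\rangle$, but its proof does not use that hypothesis. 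That argument needs only the fact that the endpoint $sp(M)$ restores all $t_{in}$/$t_{out}$ places, so every block begun in $v$ is completed within $v$ and complete blocks can be permuted to the front one at a time; it nowhere requires that blocks do not overlap in time, which is exactly the assumption your invariant smuggles in.
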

\begin{proof}
    Again, let $\Sigma$ be a Petri net obtained from systems $\Sigma_1,\ldots,\Sigma_n$ by the transition driven composition, while $\Sigma'$ be a system obtained by the data driven composition of systems $\Sigma'_1,\ldots,\Sigma'_n$ obtained from $\Sigma_1,\ldots,\Sigma_n$ and $\Sigma$ as described in this Section.

    We start by checking that $sp(m_0) = m'_0$. By the construction we know that for all $p\in P'_i\cap P_i$ we copied the initial marking, hence $sp(m_0)(p) = m'_0(p)$.
    Moreover, the only new places that 'survived' after the all fusions are those related to the internal variables. Namely $(t_{in}^{(a,ST(i))}$ and $t_{out}^{(a,ST(i))}$ for any $t\in T\setminus(\bigcup_j T^{unique})$. By the construction, we have $m'_0(t_{in}^{(a,ST(i))})=1$ and $m'_0(t_{out}^{(a,ST(i))})=0$ for all those places. Hence, by the definition of $sp$, $m'_0(t_{in}^{(a,ST(i))})=sp(m_0)(t_{in}^{(a,ST(i))})$ and $m'_0(t_{out}^{(a,ST(i))})=sp(m_0)(t_{out}^{(a,ST(i))})$.

    To finish the structural induction we need to repeat the reasoning presented in the proof of Theorem~\ref{theor:weak-bis}.
    
\end{proof}

\subsection{From synchronization on places to synchronization on transitions}
\label{ssec:synplac2synact}
In this section we assume to have a multi-agent Petri net system constructed as in Sec. \ref{sec:pn-pl}, and we show how to transform it
into a multi-agent system as defined in Sec. \ref{sec:tr}.
We produce the transformation at the level of the agents, namely, for each agent $\Sigma_{i,int}$,
derived from a module as in Def. \ref{d:modulef},
we construct a $1$-safe agent $\Sigma_i$;
once that the set of agents $\Sigma_1$,..., $\Sigma_n$ has been constructed, we apply the 
construction of the global model defined in Sec. \ref{sec:tr}, with synchronization of common actions.
Theorem \ref{theor:comp-pl-tr} proves the soundness of our transformation.

Let $\Sigma_{i, int}$ be an agent derived from a module.
To get the new agent $\Sigma_i$ we first start from the the subsystem $\Sigma_{i,var}$ of $\Sigma_{i,int}$
as defined in Sec. \ref{sec:pn-pl}.
Then, for each internal variable $x$ of $\Sigma_{i, var}$, we check which agents use $x$ as external variable.
Let $S =\{\Sigma_{j, int},...,\Sigma_{m,int}\}$ this group of agents.
For each agent $\Sigma_{l,int} \in S$, let $P_{ext,x,l}$ be the set of places spanning the value of variable $x$ in $\Sigma_{l,int}$. 
We need to check which transitions in $\Sigma_{l,var}$ are connected with read arcs to places in $P_{ext, x, l}$.
For each transition $t\in \Sigma_{l,var}$ connected with a place $p\in P_{ext,x, l}$ we
add a transition $t'$ to $\Sigma_i$ such that
$\ell(t) = \ell(t')$, and $\pre{t'} = \post{t'} = \{p\}$. This construction is well-defined since by construction a copy of $p$ must be present in $\Sigma_{i,var}$.
We repeat the same procedure for all the agents in the system.
At the end we synchronize the agents $\Sigma_1$,...,$\Sigma_n$ as described in Sec. \ref{sec:tr}.
\begin{example}
Fig. \ref{fig:data2tr_sync} shows an example of transformation from a system synchronized on data (in Fig. \ref{fig:comp_places}) to a system synchronized on transitions. For every transition needing an external variable to occur, we add to the agent whose variable belong a new transition with the same name, connected to the place with a self loop.
For example, the train can enter the tunnel (transition $n1$) only when it has been allowed by the controller (which need to be in place $r1$). Hence, we add to the controller transition $n1$, and connect it to place $r1$ with a self loop. In this way, inside the controller agent, transition $n1$ can occur only when $r1$ is true.
This condition remains true when the transition $n1$ belonging to the controller is fused with transition $n1$ belonging to the red training, guaranteeing the desired behavior.
\end{example}
\begin{figure}[ht]
     \centering
         \begin{tikzpicture}[node distance=1.3cm,>=arrow30,line  width=0.3mm,scale=1.1,auto,bend angle=45,font=\tiny]
     \tikzstyle{place}=[draw,circle,thick,minimum size=4.5mm]
     \tikzstyle{transition}=
                [draw,regular polygon,thick,
                 regular polygon sides=4,minimum size=8mm, inner sep = -2pt]

     \node(w1)at(2,4)[color=red,place,tokens=1,label=above:$t : w1$]{};
     \node(nw1)at(1.7,3.3)[color=red,place,tokens=0,label=below:$t : \lnot w1$]{};
     \node(a1)at(1,1)[color=red,place,tokens=0,label=left:$t : a1$]{};
     \node(na1)at(1.5,1.6)[color=red,place,tokens=1,label=below:$t : \lnot a1$]{};
     \node(t1)at(2,2)[color=red,place,tokens=0,label=above:$t : t1$]{};
     \node(nt1)at(3.6,2)[color=red,place,tokens=1,label=left:$t : \lnot t1$]{};
     \node(n1)at(2.8,3)[color=red,transition] {$t : n1$};
    \node(n1g)at(3,3)[color=green,transition] {$s : n1$};
     \node(n2)at(2.8,1)[color=red,transition] {$t : n2$};
     \node(n2g)at(3,1)[color=green,transition] {$s : n2$};
     \node(n3)at(1,4)[color=red,transition] {$t : n3$};
     \path(n3) edge [->] (w1);
     \path(nw1) edge [->] (n3);
     \path(w1) edge [->] (n1);
     \path(n1) edge [->] (nw1);
     \path(n1) edge [->] (t1);
     \path(t1) edge [->] (n2);
     \path(n2) edge [->] (a1);
     \path(na1) edge [->] (n2);
     \path(a1) edge [->] (n3);
     \path(n3) edge [->] (na1);
     \path(nt1) edge [->] (n1);
     \path(n2) edge [->] (nt1);

     \node(g)at(6.5,4)[color=green,place,tokens=1,label=above:$s : g$]{};
     \node(ng)at(6.5,2.5)[color=green,place,tokens=0,label=above:$s : \lnot g$]{};
     \node(r1)at(4.6,2)[color=green,place,tokens=0,label=below:$s : r1$]{};
     \node(nr1)at(5.8,2)[color=green,place,tokens=1,label=left:$s : \lnot r1$]{};
     \node(nr2)at(7.2,2)[color=green,place,tokens=1,label=right:$s : \lnot r2$]{};
     \node(r2)at(8.4,2)[color=green,place,tokens=0,label=below:$s : r2$]{};
     \node(sn1)at(5.2,3)[color=green,transition] {$s:n1$};
     \node(sn1r)at(5,3)[color=red,transition] {$t:n1$};
     \node(sn2)at(5.2,1)[color=green,transition] {$s:n2$};
     \node(sn2r)at(5,1)[color=red,transition] {$t:n2$};
     \node(sm1)at(7.8,3)[color=green,transition] {$s:m1$};
     \node(sm2)at(7.8,1)[color=green,transition] {$s:m2$};
     \node(sm1b)at(8,3)[color=blue,transition] {$t:m1$};
     \node(sm2b)at(8,1)[color=blue,transition] {$t:m2$};
     \path(g) edge [->] (sn1);
     \path(sn1) edge [->] (ng);
     \path(sn1) edge [->] (r1);
     \path(r1) edge [->] (sn2);
     \path(sn2) edge [->] (nr1);
     \path(nr1) edge [->] (sn1);
%     \path [->] (sn2) edge [bend left,out=150,in=60,min distance=6.5cm] (g);
     \path [->] (sn2) edge [bend right,out=-40,in=-180,min distance=1cm] (g);
     \path(g) edge [->] (sm1);
     \path(sm1) edge [->] (ng);
     \path(sm1) edge [->] (r2);
     \path(r2) edge [->] (sm2);
     \path(sm2) edge [->] (nr2);
     \path(nr2) edge [->] (sm1);
%     \path [->] (sm2) edge [bend right,out=-150,in=-60,min distance=6.5cm] (g);
     \path [->] (sm2) edge [bend left,out=40,in=180,min distance=1cm] (g);
     \path [<-] (sm2) edge [bend left,out=60,in=170,min distance=0.1cm] (ng);
     \path [<-] (sn2) edge [bend left,out=-60,in=-170,min distance=0.1cm] (ng);

     \path [<->] (r1) edge (n1g); 
     \path [<->] (r1) edge (n2g); 
     \path [<->] (w1) edge (sn1r);
     \path [<->] (a1) edge [bend right] (sn2r);

     \node(w2)at(11,4)[color=blue,place,tokens=1,label=above:$t:w2$]{};
     \node(nw2)at(11.3,3.3)[color=blue,place,tokens=0,label=below:$t:\lnot w2$]{};
     \node(a2)at(12,1)[color=blue,place,tokens=0,label=right:$t:a2$]{};
     \node(na2)at(11.5,1.6)[color=blue,place,tokens=1,label=below:$t:\lnot a2$]{};
     \node(t2)at(11,2)[color=blue,place,tokens=0,label=above:$t:t2$]{};
     \node(nt2)at(9.4,2)[color=blue,place,tokens=1,label=right:$t:\lnot t2$]{};
     \node(m1)at(10.2,3)[color=blue,transition] {$t:m1$};
     \node(m2)at(10.2,1)[color=blue,transition] {$t:m2$};
     \node(m1g)at(10,3)[color=green,transition] {$s:m1$};
     \node(m2g)at(10,1)[color=green,transition] {$s:m2$};
     \node(m3)at(12,4)[color=blue,transition] {$t:m3$};
     \path(m3) edge [->] (w2);
     \path(w2) edge [->] (m1);
     \path(m1) edge [->] (t2);
     \path(t2) edge [->] (m2);
     \path(m2) edge [->] (a2);
     \path(a2) edge [->] (m3);
     \path(m3) edge [<-] (nw2);
     \path(nw2) edge [<-] (m1);
     \path(m1) edge [<-] (nt2);
     \path(nt2) edge [<-] (m2);
     \path(m2) edge [<-] (na2);
     \path(na2) edge [<-] (m3);

     \path [<->] (r2) edge (m1g); 
     \path [<->] (r2) edge (m2g); 
     \path [<->] (w2) edge (sm1b);
     \path [<->] (a2) edge [bend left] (sm2b);
     %\path [->,>=read] (ng) edge [bend left=60] (m1);
     %\path [->,>=read] (ng) edge [bend right=60] (n1);     
     \end{tikzpicture}
     \caption{Transformation of the system to obtain a global model with synchronizations on transitions equivalent (in the sense of Theorem \ref{theor:comp-pl-tr}) to the system obtained by synchronizing sequential components in Fig. \ref{fig:comp_places}.}
     \label{fig:data2tr_sync}
\end{figure}
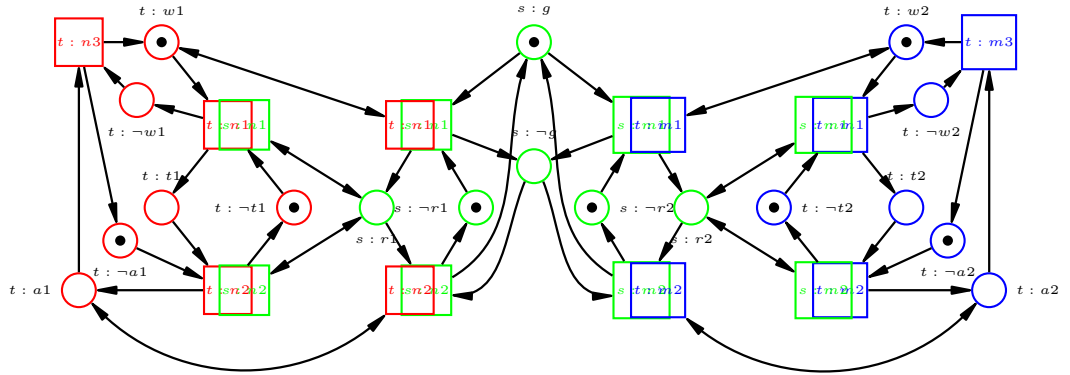
\begin{theorem}
\label{theor:comp-pl-tr}
Let $\Sigma$ be the global net obtained as the synchronization on the common sequential components of Petri net agents $\Sigma_{1,int},\ldots,\Sigma_{n,int}$. Moreover, let $\Sigma_i$, $i\in\{1,...,n\}$, be Petri net agents obtained from $\Sigma_{i,int}$ as described in Section~\ref{ssec:synplac2synact} and
$\Sigma'$ be the global Petri net obtained from $\Sigma_1,\ldots,\Sigma_n$ by synchronization on transitions. Then \[MG(\Sigma) \equiv MG(\Sigma').\]
\end{theorem}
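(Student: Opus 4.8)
The plan is to build an explicit isomorphism between $\MG(\Sigma)$ and $\MG(\Sigma')$ from a bijection on places, a label‑preserving bijection on transitions, and then to check by induction on reachable markings that this bijection preserves and reflects the firing relation. First I would identify the place sets. In $\Sigma$ the places are, for every agent $i$, the internal places of $\Sigma_{i,var}$ together with the external places $P_{ext}$ of $\Sigma_{i,int}$; but the fusion on sequential components identifies each external place $y_{i,j}$ encoding a value of a variable $x\notin X_i$ with the internal place of the unique agent $\Sigma_{j,int}$ that owns $x$. Hence the places surviving in $\Sigma$ are exactly the internal places of all the $\Sigma_{i,var}$, which is precisely the place set of $\Sigma'$, since the transformation of Section~\ref{ssec:synplac2synact} starts from $\Sigma_{i,var}$ and introduces no new place. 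Under this bijection $m_0$ of $\Sigma$ maps to $m'_0$ of $\Sigma'$, because the parameter $init$ used in each $\Sigma_{i,int}$ (Def.~\ref{d:modulef}) is taken to coincide with the genuine initial value held by the owning agent, so the marked external place of agent $i$ is the marked internal place of agent $j$.

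Next I would match transitions. Classify the transitions of each $\Sigma_{i,int}$ into the empty‑labelled ``free circulation'' transitions $T_{ext}$ and the label‑carrying transitions obtained by splitting a transition $t$ of $\Sigma_{i,var}$ according to its admissible external valuations $val_t$. In forming $\Sigma$ via $\propto_\rho$, the transitions of $T_{ext}$ are merged with the genuine internal transitions of the owning agent that change the corresponding variable, and add nothing new; what remains are: each transition of $\Sigma_{i,var}$ depending on no external variable (unchanged), and, for every $t$ depending on external variables and every $val\in val_t$, one transition whose behaviour inside agent $i$ equals that of $t$ and which additionally reads, via read arcs, the owning agents' places encoding $val$. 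On the $\Sigma'$ side the same $t$ (kept un‑split in $\Sigma_{i,var}$) is, by the construction, paired under common‑label fusion ($\bowtie$, Def.~\ref{def:synch_act}) with the self‑loop copies $t'$ inserted into the owning agents at the places encoding the admissible values; common‑label fusion then yields exactly one transition of $\Sigma'$ per compatible family of such copies, i.e.\ per admissible valuation, carrying the same label as $t$. This puts the transition sets of $\Sigma$ and $\Sigma'$ in label‑preserving bijection.

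Then I would verify that corresponding transitions have the same enabling condition and the same effect. The key observation is that in a $1$‑safe net a read arc $(p,t)$ and a self‑loop $\{(p,t),(t,p)\}$ are behaviourally indistinguishable: both demand $p$ marked and leave $p$ unchanged. Hence, at corresponding markings, the $\Sigma$‑transition is enabled exactly when the matching $\Sigma'$‑transition is (both need the common internal preconditions marked and the relevant value‑places marked), and firing either one produces corresponding successor markings, since the value‑places are untouched and the genuine internal updates are literally identical. As the initial markings correspond, a routine induction on firing sequences shows that the reachable parts of $\MG(\Sigma)$ and $\MG(\Sigma')$ coincide under the place bijection with labels preserved, i.e.\ $\MG(\Sigma)\equiv\MG(\Sigma')$.

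The step I expect to be the main obstacle is the transition matching, namely showing that fusion on sequential components in $\Sigma$ and common‑label fusion in $\Sigma'$ generate the same family of copies of each dependent transition: in particular that a transition depending on several external variables is recombined so that exactly the admissible joint valuations survive (using, where needed, splitting of its label according to external valuations, as anticipated after Def.~\ref{d:module}), and that the empty‑labelled circulation transitions $T_{ext}$ are absorbed by the owners' internal transitions on both sides without creating spurious or missing transitions. Once this combinatorial correspondence is pinned down, the behavioural equivalence of read arcs and self‑loops makes the remainder routine.
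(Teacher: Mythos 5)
Your plan is essentially the paper's argument: identify the place sets (the fusion on sequential components leaves exactly the places of the $\Sigma_{i,var}$, which is also the place set of $\Sigma'$), check that the initial markings coincide, observe that a read arc and a self-loop on the same place are behaviourally identical in a $1$-safe net, and then do an induction over reachable markings showing enabledness and effect agree on both sides. The one real difference is that you aim for a full label-preserving \emph{bijection} between the transition sets of $\Sigma$ and $\Sigma'$, whereas the paper never constructs such a bijection: its proof only shows, marking by marking, that whenever a transition fires in one net there is an equilabelled transition with the same effect firing in the other (arguing through the correspondence between read arcs in $\Sigma_{i,int}$ and the self-loop copies added to the owning agents), which is all that equivalence of marking graphs requires. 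Your stronger bijection claim is precisely where the trouble sits, and you correctly flag it: the common-label fusion in $\Sigma'$ produces one copy of a dependent transition per \emph{combination} of value-places across the owning agents (a product of the projections of $val_t$), while $\Sigma$ has one copy per admissible joint valuation in $val_t$; when a transition depends on several external variables and $val_t$ is not a product set, these families do not match under the construction as literally given, and your proposed remedy (label splitting by external valuations) is not part of that construction. Be aware that the paper's proof does not resolve this either -- it silently assumes that the self-loop copy read off from each read arc is the one occurring in the fused transition -- so your proposal reproduces the paper's argument, makes its implicit identification of read arcs with self-loops explicit, and honestly isolates the combinatorial step that the published proof glosses over; to close it you would either have to restrict to transitions depending on a single external variable (as in the running example) or incorporate the label-splitting refinement into the transformation itself.
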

\begin{proof}
Let $\Sigma = (P,T,F,m_0,\ell)$ and $\Sigma' = (P,T',F',m'_0,\ell)$.

First note that, by construction, we have that for all $\Sigma_i$, the initial marking is the same to the one of $\Sigma_{i,var}$. Hence, $M_0 = M'_0$. 
It remains to prove that $M[t\rangle_\Sigma M'$ iff $M[t\rangle_{\Sigma'}M'$.

Let $t\in T$ be a transition of Petri net agent $\Sigma_{i,int}$ and $M\in 2^P$ (which is equivalent formulation of $M:P\to\{0,1\}$). Since $t$ is enabled at $M$, we have that $t$ is enabled at $M|_{\Sigma_{i,int}}$. This means that all $p\in P$ such that $(p,t)\in F$ are marked. Obviously, this means that $t$ is enabled at $M|_{\Sigma_i}$.
It remains to examine, whether for any $\Sigma_j\neq\Sigma_i$ such that $t'\in T_j$ and $\ell(t)=\ell(t')$ we have $t'$ enabled at $M|_{\Sigma_j}$. By the construction, we have $(p,t)\in F_j$ for some $p\in P$. If $t$ is enabled at $M|_{\Sigma_{i,int}}$ and
$t'\in T_{j,int}$ then, by the composition, $(p,t')\in \read_{j,int}$ and there exists a place $p'\in P_{i,int}$ associated by appropriate synchronization on sequential components with $p\in P_{j,var}$, and both $p$ and $p'$ are marked in $M_{\Sigma_{i,int}}$ and $M_{\Sigma_{j,int}}$, appropriately. Hence, by the construction, we have $p'$ marked at $M_{\Sigma_j}$. Hence $t'$ is enabled in $\Sigma_j$ at $M_{\Sigma_j}$.

Let us now assume that $t$ is enabled in $\Sigma$ at $M$.
This means that $t\in T'$ is a transition of Petri net agent $\Sigma_i$. Then we have two cases: (1) $t\in T_{i,int}$ or (2) $t\in T_{j,int}$ for $j\neq i$ and there exists $p\in P_{i,int}$ associated with $p'\in P_{j,int}$ such that $(p',t)\in\read_{j,int}$. However, in the second case we have $t'\in T_j$ with $\ell(t') = \ell(t)$ and we can limit ourselves to the first case.
Note that, directly by the construction, $t$ is enabled in $\Sigma_{i,var}$ at $M|_{\Sigma_{i,var}}$. We only need to argue that if $(p,t)\in\read_{i,int}$ then $p$ is marked at $M|_{\Sigma_{i,int}}$. Indeed, place $p$ need to be synchronized on sequential component with appropriate $p'\in P_{j,var}$. Hence there exists $t'\in T_j$ such that $\ell(t') = \ell(t)$ and $(p',t')\in F_j$. By the construction, $t'$ is synchronized with $t$, hence also enabled in $\Sigma_j$ at $M|_{\Sigma_j}$. As a result, $p'$ is marked at $M|_{\Sigma_j}$ and $M|_{\Sigma_{j,var}}$. Hence $p$ is marked in $\Sigma_{i,int}$ at $M|_{\Sigma_{i,int}}$.
Hence $t$ is enabled in $\Sigma'$ at $M$.

The values of the resulting markings are simple consequences of construction and compositions.
\end{proof}

\section{Conclusion}
\label{sec:concl}
In this paper we propose a framework based on 1-safe Petri nets to model asynchronous multi-agent systems.
In particular, we define two semantics, considering agent synchronizations on transitions and on data. 
Both these semantics were previously defined and studied for AMAS modeled as transition systems, but  the relation between the two semantics has been scarcely investigated in the literature.
For each of the two semantics, we show that we can define a composition operation on Petri net agents such that the marking graph of the resulting global system is isomorphic to the underlying graph obtained by composing the agents modeled as transition systems.
In the case of synchronization on transitions, the synchronization operation is straightforward. However, if we consider synchronization on data, we propose a new version of fusion for Petri nets, namely fusion of sequential components which was not, up to our best knowledge, considered before.
This new synchronization can be used in future works to study assume-guarantee reasoning on Petri nets.

The two semantics allow us to model systems with a different level of abstraction, and can be both useful depending on the task that need to be performed on it. In particular, synchronization of transitions provides a simpler and higher level of abstraction model, whereas synchronization on data results in a lower lever global model, which can be helpful for the system implementation.
In this paper, we show how to switch from one semantics to the other, so that it is always possible to use the most suitable semantics for the current task. 
While switching from a system synchronized on data to a system synchronized on transition is always possible and does not cause any particular issue, the reverse process requires to explicitly make design decisions that were not needed on the system synchronized on transitions, and this can introduce potential unwanted behaviors.
An example of this is the decision on which agent needs to initiate a synchronization. A bad coordination between the decisions of the leading agents can bring to deadlock situations that are hidden in the system synchronized on transitions. To see this, consider two conflicting actions $a1$ and $a2$, both shared by Agent 1 and Agent 2. 
If Agent 1 want to start $a1$ and agent 2 want to start $a2$, each of the two agents could start its main action without joining the execution of the other, causing a deadlock.
In a real implemented system, this aspect needs to be taken into account, therefore, switching the models between the two semantics to check for the presence of unwanted behaviors can be very useful.
In this paper we established a total hierarchy between the agents, but in some cases more flexible solutions may be desirable. In future works we plan to determine the weakest conditions that allow to avoid this kind of deadlocks. 

One can see that the proposed transformations are not expensive.
The size of the model grows linearly in size with both the transformations proposed in Sec. \ref{sec:2sem-pn}: when we transform from a system synchronized on data to a system synchronized on transitions, the dimension of the global net remains the same, and the only modifications are made at the level of the agents; in the opposite direction, for each shared transition we need to produce $2 \times n$ copies, where $n$ is the number of agents involved in the synchronization of that transition, and $n$ new places.
However, iterating the procedure produces larger and larger systems. 
We plan to work on the compositions and the transformations in order to avoid this, and propose operations so that one transformation is the reverse of the other.

Finally, since one of the applications of our work is the model-checking of concurrent systems, we plan to develop methods to improve its efficient. For example, we plan to modify our algorithms to first decompose the Petri net agents into sequential components. In many cases, verifying certain properties in concurrent systems does not require constructing the entire global system, as some of its parts are independent of each other.  Decomposing the agents before the composition may allow us to obtain smaller systems in which the properties can be verified.

\bibliographystyle{plain}
\bibliography{MAS}

\end{document}